\documentclass[11pt]{article}

\usepackage{url}
\usepackage{amsmath,amsfonts,amsthm,amssymb,multirow}
\DeclareMathOperator*{\argmin}{arg\,min}
\usepackage{graphicx}
\usepackage{floatpag}
\usepackage{float}
\usepackage{times}
\usepackage{fullpage}
\usepackage{bbold}
\usepackage[pdfencoding=auto]{hyperref}
\usepackage{tikz}
\usetikzlibrary{cd}
\usepackage{comment}
\usepackage{calc}
\usepackage{upgreek}
\usepackage{mathtools}
\usepackage{color}
\usepackage{mleftright}
\allowdisplaybreaks
\newenvironment{reminder}[1]{\bigskip
	\noindent {\bf Reminder of #1  }\em}{\smallskip}

\DeclarePairedDelimiter\abs{\lvert}{\rvert}

\DeclarePairedDelimiter\norm{\lvert\lvert}{\rvert\rvert}

\newcommand{\normm}[1]{\left\lVert#1\right\rVert}
\newcommand{\abss}[1]{\left\lvert#1\right\rvert}

\DeclarePairedDelimiter\iprod{\langle}{\rangle}

\def \NP {\text{NP}}
\def \ka {{\kappa}}
\def \R {{\mathbb R}}

\def \Q {{\mathbb Q}}
\def \Z {{\mathbb Z}}
\def \poly {\text{poly}}
\def \plog {\text{polylog}}
\def \eps {{\varepsilon}}
\def \nnz {\text{nnz}}
\def \rk {\text{rank}}
\def \colspace {\text{colspace}}

\def \N {\mathcal{N}}

\def \tO {\Tilde{O}}
\def \tOm {\Tilde{\Omega}}
\def \bc {\mathcal{B}}

\newtheorem{theorem}{Theorem}[section]

\newtheorem{corollary}[theorem]{Corollary}

\newtheorem{definition}[theorem]{Definition}
\newtheorem{lemma}[theorem]{Lemma}
\newtheorem{claim}{Claim}
\newtheorem{conjecture}{Conjecture}

\makeatletter
\let\c@fconjecture\c@conjecture
\makeatother

\makeatletter
\let\c@fconj\c@conj
\makeatother

\def\tO{\tilde{O}}

\bibliographystyle{alpha}

\title{Optimal Fine-grained Hardness of Approximation of Linear Equations}

\author{Mitali Bafna\footnote{\texttt{mitalibafna@g.harvard.edu}} \\Harvard University \and Nikhil Vyas\footnote{\texttt{nikhilv@mit.edu}, Supported by NSF CCF-1909429.}\\MIT}

\begin{document}

\maketitle

\begin{abstract}
The problem of solving linear systems is one of the most fundamental problems in computer science, where given a satisfiable linear system $(A,b)$, for $A \in \R^{n \times n}$ and $b \in \R^n$, we wish to find a vector $x \in \R^n$ such that $Ax = b$.  The current best algorithms for solving dense linear systems reduce the problem to matrix multiplication, and run in time $O(n^{\omega})$. We consider the problem of finding $\eps$-approximate solutions to linear systems with respect to the $L_2$-norm, that is, given a satisfiable linear system $(A \in \R^{n \times n}, b \in \R^n)$, find an $x \in \R^n$ such that $\norm{Ax - b}_2 \leq \eps\norm{b}_2$. Our main result is a fine-grained reduction from computing the rank of a matrix to finding $\eps$-approximate solutions to linear systems. In particular, if the best known $O(n^\omega)$ time algorithm for computing the rank of $n \times O(n)$ matrices is optimal (which we conjecture is true), then finding an $\eps$-approximate solution to a dense linear system also requires $\tOm(n^{\omega})$ time, even for $\eps$ as large as $(1 - 1/\poly(n))$. We also prove (under some modified conjectures for the rank-finding problem) optimal hardness of approximation for sparse linear systems, linear systems over positive semidefinite matrices, well-conditioned linear systems, and approximately solving linear systems with respect to the $L_p$-norm, for $p \geq 1$. At the heart of our results is a novel reduction from the rank problem to a decision version of the approximate linear systems problem. This reduction preserves properties such as matrix sparsity and bit complexity.
\end{abstract}

\section{Introduction}

Algorithms for solving linear equations are one of the most fundamental primitives in computer science. Formally this is the problem where, given a linear system $(A, b)$, where $A \in \R^{m \times n}$ is a real matrix and $b \in \R^m$ is a vector in the column space of $A$, we need to find a vector $x \in \R^n$ such that $Ax = b$. Gaussian elimination running in time\footnote{Here we are discussing algorithms and hardness over the Real RAM, unless stated otherwise. We discuss the Word RAM in more detail in Section~\ref{word-ram-intro}.} $O(n^3)$ was one of the first algorithms for this problem. Hopcraft and Bunch~\cite{bunch} reduced solving linear equations to fast matrix multiplication of two $n \times n$ matrices~\cite{strassen, coppersmith, stothers, williams, Gall14a, josh-omega} which can be done in $m(n) = n^\omega$, where $\omega$ is the matrix multiplication constant. The current best known upper bound on $\omega$ is approximately $2.372..$~\cite{josh-omega}. The best known algorithms for solving linear systems reduce the problem to matrix multiplication, but there is no known reduction in the other direction. We study the complexity of finding approximate solutions to linear systems (defined more precisely later) under the following conjecture:

\begin{conjecture}[Rank-Finding Conjecture over RealRAM]  \label{conj:rk-dense} Finding the rank of a matrix $A \in \R^{m \times n}$ with $m = O(n)$ in RealRAM is $\Omega(n^\omega)$-hard.
\end{conjecture}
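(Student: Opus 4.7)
Since Conjecture~\ref{conj:rk-dense} is a fine-grained hardness hypothesis rather than an absolute lower bound, any ``proof'' can only be conditional on the hardness of a more primitive problem. The natural plan is to establish an equivalence, up to subpolynomial overhead, between computing the rank of an $n \times O(n)$ real matrix and multiplying two $n \times n$ matrices. The forward direction is already known: rank reduces to matrix multiplication via Bunch--Hopcroft, yielding the $O(n^\omega)$ upper bound. The missing step --- and the heart of the proposed proof --- is the reverse direction: design a reduction that, given $A, B \in \R^{n \times n}$, builds in $O(n^2)$ time an $O(n) \times O(n)$ matrix $M$ such that a single rank computation on $M$, together with $O(n^2)$ pre/post-processing, recovers the product $AB$.

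My first attempts would exploit Schur complements of block matrices of the form
\[
M = \begin{pmatrix} \lambda I_n & A \\ B^T & C \end{pmatrix},
\]
using the identity $\rk(M) = n + \rk(C - B^T A/\lambda)$ to push information about $B^T A$ into the rank of the lower-right corner, and choosing $C$ (possibly as a composition of several rank queries) so that the resulting rank reveals a coarse statistic of $B^T A$. A natural refinement would be to augment with random Gaussian projections $R \in \R^{n \times k}$ on either side, so that the rank of a two-sided sketch concentrates around $\min(k, \rk(B^T A - C))$, giving a handle on the singular-value profile of $AB$ rather than a single entry.

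The main obstacle is an information-theoretic mismatch: a single rank evaluation outputs only $O(\log n)$ bits, while $AB$ encodes $\Omega(n^2 \log n)$ bits. Any reduction of the above shape can extract at most $O(\log n)$ bits per query, and batching $\Omega(n^2)$ queries immediately blows the time budget past $n^\omega$. To obtain the full $\Omega(n^\omega)$ lower bound one would need either (i) an algebraic identity that squeezes the entire product $AB$ into the rank of a single $O(n) \times O(n)$ matrix via nonlinear encoding --- plausibly through tensor decompositions or rigidity-style constructions --- or (ii) placing rank into the known subcubic-equivalence class containing determinant, inverse, and LUP decomposition, from which the $\Omega(n^\omega)$ bound would transfer automatically. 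Neither approach has succeeded in the literature, which is precisely why the statement remains a conjecture, and it is invoked here as a fine-grained hardness hypothesis on par with SETH or $3$SUM, with the contribution of the present paper being to propagate this hardness downstream to approximate linear system solving.
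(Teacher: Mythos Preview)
Your assessment is correct: the paper does not prove Conjecture~\ref{conj:rk-dense} and never claims to. It is stated as a fine-grained hardness hypothesis, and the paper's contribution is exactly what you say in your final sentence --- to propagate this assumed hardness downstream to approximate linear system solving via the reductions in Section~\ref{sec:main-red}.

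The only difference worth noting is in the supporting discussion. You argue for the plausibility of the conjecture via an information-theoretic barrier (a rank query returns $O(\log n)$ bits, so recovering $AB$ would need too many queries) and via the failure of Schur-complement tricks. The paper instead gathers circumstantial evidence: Baur--Strassen's equivalence of determinant and matrix multiplication in the arithmetic-circuit model, the fact that a faster rank algorithm would yield faster maximum-matching algorithms, and the observation that the Schatten-$p$-norm algorithms of Musco et al.\ degenerate to $\Omega(n^\omega)$ at $p=0$. Both discussions serve the same purpose --- to justify adopting the conjecture as a credible hardness assumption --- but neither constitutes a proof, and none is expected.
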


The problem of finding the rank of a matrix is a central problem in linear algebra. It is known that this problem can be reduced to fast matrix multiplication~\cite{bunch,IbarraMH82}, hence there exist algorithms for the rank-finding algorithm that run in time $O(n^\omega)$. There are known faster algorithms for restricted classes of matrices. For sparse matrices we know of $O(n^2)$ algorithms~\cite{Wiedemann86} and for low-rank matrices the $O(n^2+\rk(A)^\omega)$-time algorithms of~\cite{KaltofenS91, Eberly2, CheungKL13} run in time $n^{\omega-\Omega(1)}$ when $\rk(A) = n^{1-\Omega(1)}$. No improvement over the $O(n^\omega)$-runtime is known for general matrices though. We conjecture that this run time is in fact \emph{optimal} for general matrices. The rank-finding problem is equivalent to checking whether the determinant of a matrix is $0$. We get some evidence towards the truth of our conjecture by considering the computational model of arithmetic circuits\footnote{Such a reduction is unknown in the RealRAM model.}: In a seminal work Baur and Strassen~\cite{BS83} linearly reduced the the problem of matrix multiplication to the problem of computing the determinant, thus showing that the latter problem requires arithmetic circuits of size as large as those required for matrix multiplication. Furthermore, this is a central conjecture because falsifying it (getting faster algorithms for the rank-finding problem) would yield better algorithms for important problems like finding the size of a maximum matching in a graph~\cite{Lovasz79, MuchaS06}. \footnote{This is because maximum matching algorithms has a randomized reduction to the rank-finding problem.} For some direct evidence: there has been a line of work by Musco et al~\cite{MNSUW18} that gives algorithms to approximate the Schatten $p$-norms in time better that $O(n^{\omega})$ when $p > 0$. But at $p = 0$, the problem of finding the Schatten $p$-norm is the same as finding the rank of the matrix, and their algorithms run in time $\Omega(n^{\omega})$. Hence they are not able to beat the runtime of $O(n^\omega)$ to approximate the rank of a matrix, let alone determine it exactly.

Conjecture~\ref{conj:rk-dense} allows us to study the hardness of linear system solving and related linear algebraic problems in the style of fine-grained complexity~\cite{WW}. The conjecture implies $\Omega(n^{\omega})$-hardness of finding exact solutions to linear systems (Lemma~\ref{lem:eq}). One could hope to get faster algorithms though when allowed to find an approximate solution to the linear system. Specifically, we consider the following notion of approximation:

\begin{definition}[$\eps(n)$-Approximate Linear Search]\label{def:e-als}
For a function $\eps: \mathbb{N} \rightarrow [0,1]$, the $\eps$-Approximate Linear Search problem is defined as, given a satisfiable\footnote{Keeping with the convention of promise problems, we will assume that when given an unsatisfiable instance the algorithm is allowed to output an arbitrary vector.} linear system $(A\in \R^{O(n) \times n}, b)$, find an $x \in \R^n$ such that $\norm{Ax-b}_2 \leq \eps(n)\norm{b}_2$. 
\end{definition}

Note that the all $0$'s vector $x = 0^n$ is a $1$-approximate solution to any linear system as $\norm{A0^n-b} = \norm{b}_2$. Our main result shows that doing barely better than the trivial approximation is hard: $(1-1/n^{100})$-Approximate Linear Search i.e. finding an $x$ such that $\norm{Ax-b}_2 \leq (1-1/n^{100})\norm{b}_2$. is $\tOm(n^\omega)$ hard under Conjecture~\ref{conj:rk-dense}. We discuss our notion of approximation in Section~\ref{sec:approx-discuss} and show that showing hardness for this notion implies hardness for other notions of approximation that have been considered in the literature.

\noindent In a seminal work, Spielman and Teng~\cite{SpielmanT14} gave nearly linear-time algorithms ($O(n^2\log(1/\eps(n)))$-time) for finding $\eps(n)$-approximate solutions to Laplacian linear-systems and this result was built upon by many works~\cite{KyngLPSS16,CohenKKPPRS18} to give such algorithms for other restricted classes of linear systems. Our result shows that under the hardness of the rank-finding problem, these algorithms cannot be extended to general linear systems. As mentioned above, we conditionally rule out $\tilde{O}(n^2)$-time algorithms for finding $\eps(n)$-approximate solutions to general linear systems even for $\eps(n) = 1-1/n^{100}$. 

We also extend our results to give optimal conditional hardness of approximation (under analogous conjectures for the rank-finding problem) for restricted classes of linear systems: sparse linear systems, linear systems over positive-semidefinite matrices, and well-conditioned linear systems. We also show hardness of finding approximate solutions to linear systems with respect to the $L_p$ norm, for all $p \geq 1$.

Recently there has been a lot of progress in relating the exact time-complexities of various problems that have polynomial running times. Although there has been success in a variety of graph-theoretic, geometric and string problems~\cite{RodittyW13, AlmanW15, BackursI18}, there are very few fine grained reductions from the assumptions therein to linear algebraic problems, an example being, the work of Musco et al~\cite{MNSUW18} that  showed conditional lower bounds for spectrum approximation. We note here that there is a barrier to study the hardness of problems such as the rank-finding problem and solving linear equations under standard fine-grained complexity assumptions like SETH because these problems have linear time co-nondeterministic algorithms\footnote{For example, for an unsatisfiable linear system $(A, b)$ one can certify unsatisfiability by a witness $w$ such that $w^TA = \mathbf{0}$ but $w^Tb \neq 0$.} and SETH-based hardness would falsify the NSETH conjecture~\cite{CarmosinoGIMPS16}.

The theory of probabilistically checkable proofs~\cite{ALMSS,Hastad01} was instrumental in proving a host of $\NP$-hardness of approximation results. Though this theory was very successful in settling the time-complexity for approximation problems in NP, there are inherent limitations to extend these techniques to problems in P. Towards this, there has been recent progress for establishing hardness of approximation results for problems in P~\cite{ARW17, CGLRR19, SLM18}. Our paper makes further progress in this direction.

\subsection{Our results}\label{sec:results}
The table below gives a summary of our hardness results over the RealRAM. Note that all the below results can be extended to the WordRAM model of computation with some modifications (Section~\ref{word-ram-intro}). 

\renewcommand{\arraystretch}{1.5}
\begin{table}[H]
\begin{tabular}{|l|l|l|l|}
\hline
Model                    & Problem                  & Hardness & Algorithm         \\ \hline
\multirow{5}{*}{RealRAM} & $\eps$-ALS                  &  $\tOm(n^\omega)$ (Corollary~\ref{cor:dense-restated})         &   $O(n^\omega)$ (\cite{bunch})                \\ \cline{2-4} 
                         & Sparse $\eps$-ALS           &    $\tOm(n^2)$ (Corollary~\ref{cor:sparse})      &    $O(n^2)$(\cite{hestenes})               \\ \cline{2-4} 
                         & Well-conditioned $\eps$-ALS (Definition~\ref{def:wc-e-als}) &    $\tOm(n^\omega)$ (Corollary~\ref{cor:dense_c})      & \multirow{3}{*}{$O(n^\omega)$ (\cite{bunch})} \\ \cline{2-3}
                         & $\eps$-ALS with PSD matrix  &    $\tOm(n^\omega)$ (Corollary~\ref{cor:dense_psd})       &                   \\ \cline{2-3}
                         & $\eps$-ALS with $L_p$ norm (Definition~\ref{def:lp-e-als})  &    $\tOm(n^\omega)$ (Corollary~\ref{cor:dense_p})       &                   \\ \hline
\end{tabular}
\caption{$\eps$-ALS refers to $\eps$-Approximate Linear Search Problem. Our hardness results are under different conjectures, see statements for more details. All hardness results are for $\eps = 1-1/n^{100}$, while all algorithms are for the much (apriori) harder exact search problem. For small values of condition number better algorithms are known~\cite{hestenes} but for our regime of either unbounded or $\poly(n)$ condition number the above algorithms are the best known.}
\end{table}

We consider the question: Can one get fast approximate linear system solvers for general linear systems that run in time $\tilde{o}(n^\omega)$? We answer this question in the negative, under Conjecture~\ref{conj:rk-dense}. 
In this section we discuss the hardness result for solving general linear systems approximately (first row of Table 1). We discuss the other results of the table in Section~\ref{sec:intro-extensions}.

We refer to the decision version of the $\eps$-approximate linear search problem as Approximate Linear Decision problem (formally defined in Definition~\ref{def:approx_lin}). We prove all our hardness results by showing a reduction from the rank-finding problem to the Approximate Linear Decision problem.  We now state the lemma that proves our main reduction:
\begin{theorem}[Main reduction: Informal]\label{thm:red-main}
There exists a randomized Turing reduction from the rank-finding problem on $A \in \mathbb{R}^{m \times n}$ to the $(1 - 1/n^{100})$-Approximate linear decision problem\footnote{The constant 100 here is arbitrary and in fact our reduction works for all constants.} on square matrices with sparsity $\tO(\nnz(A))$ and dimension $O(\max(m,n))$. The reduction runs in time $\tO(\nnz(A))$ and works with high probability.
\end{theorem}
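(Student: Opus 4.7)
The plan is to reduce rank-finding to the approximate linear decision problem via binary search on the rank. Since $\text{rank}(A) \in \{0, 1, \ldots, n\}$ is determined by $O(\log n)$ answers to the threshold question ``is $\text{rank}(A) \geq r$?'', it suffices to produce, for each $r$, a single sparse oracle query $(M_r, b_r)$ whose YES/NO answer encodes the threshold. Over $O(\log n)$ such queries, each constructible in $\tO(\nnz(A))$ time with $\tO(\nnz(A))$ nonzeros, the reduction then runs in total $\tO(\nnz(A))$.

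My candidate construction for each $r$ is a block matrix $M_r = \begin{pmatrix} \lambda A \\ R_r \end{pmatrix}$, padded with zero columns to be square of size $O(\max(m, n))$, paired with a right-hand side $b_r = (0_m, c_r)$. Here $R_r \in \R^{k \times n}$ is a sparse random matrix drawn from a sparse subspace embedding distribution (with $\tO(1)$ nonzeros per column), $k$ is chosen as a function of $r$ (for instance $k = n - r + 1$), $\lambda$ is a large scaling factor, and $c_r$ is a test vector designed jointly with $R_r$. The intuition for correctness is that the scaling $\lambda$ forces any near-minimizer $x$ to lie essentially in $\text{null}(A)$, so the achievable residual is essentially $\min_{x \in \text{null}(A)} \|R_r x - c_r\|_2$. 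When $\text{rank}(A) \leq r - 1$, the null space has dimension at least $k$ and a generic sparse $R_r$ maps it onto $\R^k$, so $c_r$ is reachable and the residual is $0$ (oracle YES). When $\text{rank}(A) \geq r$, the image $R_r(\text{null}(A))$ is a proper subspace of $\R^k$ and $c_r$ is not exactly reachable.

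The central obstacle is the \emph{quantitative} gap: the approximate oracle returns YES whenever the projection of $b_r$ onto $\text{col}(M_r)$ has as little as a $\sqrt{2}/N^{50}$ fraction of $\|b_r\|_2$, so in the NO case $c_r$ must be almost perfectly orthogonal to $R_r(\text{null}(A))$, not merely outside it. A purely random $c_r$ only yields a $\Theta(1/k)$-orthogonality margin, which is far too weak. Resolving this requires taking $\lambda$ polynomially large in $N$ and in the reciprocal of the smallest nonzero singular value of $A$ (immediate in RealRAM; bit-complexity consequences for WordRAM are handled separately), together with choosing $c_r$ in a structured way tied to the draw of $R_r$ so that a sparse subspace embedding analysis of $R_r$ restricted to $\text{null}(A)$ delivers the needed near-orthogonality with high probability. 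This combined scaling-plus-embedding argument is the technical heart of the reduction.
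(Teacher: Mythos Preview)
Your binary-search-plus-block-matrix construction gets you to a $1/\mathrm{poly}(n)$ gap at best, and the ``structured $c_r$'' fix you sketch cannot close the remaining distance to $1-1/n^{100}$. The obstruction is information-theoretic: $c_r$ must be chosen obliviously to $\mathrm{null}(A)$, yet in the NO case you need $c_r$ to be almost entirely orthogonal to $S:=R_r(\mathrm{null}(A))$, a $(k-1)$-dimensional subspace of $\R^k$. Even granting $c_r=f(R_r)$ for an arbitrary function $f$, once $R_r$ is fixed the adversary's null space can place $S$ essentially anywhere; concretely, if $R_r$ has rotationally invariant rows then for every fixed $N$ the image $R_r(N)$ is a uniformly random hyperplane, and no deterministic function of $R_r$ alone can be nearly orthogonal to it with more than the trivial $\Theta(1/\sqrt{k})$ margin. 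So the ``near-orthogonality with high probability'' you need simply does not hold for $k>1$, and for $k=1$ you are only testing the single threshold $r=n$. A second, smaller gap: setting $\lambda$ ``polynomially large in the reciprocal of the smallest nonzero singular value of $A$'' is not immediate in RealRAM in $\tO(\nnz(A))$ time, since you have no a~priori lower bound on $\sigma_{\min}^+(A)$ and computing one is itself as hard as the problems at hand.

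The paper's route is genuinely different and avoids both issues. It first reduces rank-finding to the \emph{full-rank} problem on a square matrix (by padding with sparse random columns, so only one threshold is ever tested), then takes a random $b$ to get a $1/n^{O(1)}$-gap instance $(M,\mathbf{1}^n)$; up to here your approach is comparable. The step you are missing is a separate \emph{gap-amplification} reduction: left-multiply by a fixed sparse ``cyclic'' matrix $M'$ with $M'_{ii}=t+1$, $M'_{i,i+1}=-t$ (indices mod $n$), which fixes $\mathbf{1}^n$ but blows up any error component orthogonal to $\mathbf{1}^n$ by a factor $\Theta(t/n)$. Choosing $t=\mathrm{poly}(n)$ boosts a $1/n^{O(1)}$ gap to $1-1/n^c$ while preserving sparsity and dimension. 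This amplification is the technical heart, and it operates on the instance \emph{after} the random step rather than trying to engineer a large gap directly from randomness, which is where your plan stalls.
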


\paragraph{Our reduction:}
At the heart of our results lies an ``exact to approximate'' reduction for \emph{deciding} the satisfiability of linear systems. For showing hardness of finding approximate solutions, we are able to use this philosophy of ``increasing the gap'' between the YES and NO instances. We consider the following natural decision analogue of the $\eps$-approximate search problem: 
\begin{definition}[$\eps(n)$-Approximate Linear Decision problem~\cite{KyngZ17}]\label{def:approx_lin}
For a function $\eps: \mathbb{N} \rightarrow [0,1]$, given a linear system $(A \in \R^{m \times n}, b \in \R^m)$, with $m = O(n)$ and the promise that it falls into one of the following two sets of instances:
\begin{enumerate}
\item YES instance: There exists an $x \in \R^n$ such that $Ax = b$.
\item NO instances: For all $x \in \R^n$, $\norm{Ax-b}_2 > \eps(n) \norm{b}_2$,
\end{enumerate}
decide whether $(A,b)$ is a YES instance or a NO instance. We will refer to $\eps(n)$ as the ``gap'' of the instance.
\end{definition}

We show that the rank-finding problem reduces to the $(1-1/n^{100})$-approximate linear decision problem described above. We also show that the rank-finding problem is equivalent to the \emph{exact} linear decision problem i.e. the problem of deciding satisfiability of linear systems. Hence our reduction can be interpreted as increasing the gap between the YES/NO cases from almost 0 (can be arbitrarily small as we are working over the RealRAM) to $1-1/n^{100}$. We increase this gap in two stages, first to $\eps(n) = 1/n^{O(1)}$ and then to $1-1/n^{100}$. This gives conditional $\tOm(n^\omega)$-hardness of the $(1-1/n^{100})$-approximate linear decision problem. 

Even though the main reduction discussed here is from the rank-finding problem, we are also able to give a search to search reduction from the $1/n^{O(1)}$-approximate linear search problem to the $(1 - 1/n^{100})$-approximate linear search (see Corollary~\ref{cor:search-amp}).

\bigskip

We will now discuss the corollaries of the main reduction outlined above. Given Theorem~\ref{thm:red-main}, we perform a standard decision to search reduction (Lemma~\ref{lem:dec_to_search-r}) to get optimal hardness of approximation for the $(1-1/n^{100})$-approximate \emph{search} problem, under Conjecture~\ref{conj:rk-dense}. Thus under the rank finding conjecture, this reduction rules out all $(1-1/n^{100})$-approximation algorithms that run in time $\tilde{o}(n^{\omega})$.

\begin{corollary}[Informal]\label{cor:dense}
Under Conjecture~\ref{conj:rk-dense}, for all constants $c > 0$, the  $(1-1/n^{100})$-approximate linear search problem is $\tOm(n^\omega)$-hard in the RealRAM model of computation.
\end{corollary}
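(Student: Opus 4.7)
My plan is to combine the main reduction (Theorem~\ref{thm:red-main}) with a standard decision-to-search reduction. Theorem~\ref{thm:red-main} already gives a randomized Turing reduction from rank-finding on $A \in \R^{m \times n}$ with $m = O(n)$ to the $(1-1/n^{100})$-approximate linear \emph{decision} problem on square instances of dimension $O(n)$, with total overhead $\tO(\nnz(A)) = \tO(n^2)$ outside of the oracle calls. It therefore suffices to reduce the decision version to the search version with overhead at most $\tO(n^2)$: composing the two reductions will imply that any $\tilde{o}(n^\omega)$-time algorithm for the approximate search problem yields an $\tilde{o}(n^\omega)$-time algorithm for rank-finding (using $\omega > 2$), contradicting Conjecture~\ref{conj:rk-dense}.

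For the decision-to-search step (formalized in the paper as Lemma~\ref{lem:dec_to_search-r}), I would invoke the putative search algorithm on $(A,b)$ to obtain a candidate $x$, then compute the residual $r = \norm{Ax - b}_2$ and compare it against the threshold $(1-1/n^{100})\norm{b}_2$. On a YES instance there exists an exact solution, so the search algorithm is obligated by its approximation guarantee to return an $x$ with $r \leq (1-1/n^{100})\norm{b}_2$; on a NO instance the very definition of the problem forces $r > (1-1/n^{100})\norm{b}_2$ for \emph{every} $x \in \R^n$. Hence the comparison is decisive. The verification cost is one matrix-vector product and two norm computations, i.e.\ $O(n^2)$ time on dense instances.

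Composing the two pieces yields a randomized reduction from rank-finding on $O(n) \times n$ matrices to the $(1-1/n^{100})$-approximate linear search problem with overhead $\tO(n^2)$, giving the claimed $\tOm(n^\omega)$ lower bound under Conjecture~\ref{conj:rk-dense}. The quantification ``for all constants $c > 0$'' comes for free: as noted in the footnote to Theorem~\ref{thm:red-main}, the identical argument works with $100$ replaced by any constant $c$. The substantive content of the corollary is entirely carried by Theorem~\ref{thm:red-main}; the decision-to-search step is routine bookkeeping and I do not expect it to present any real obstacle.
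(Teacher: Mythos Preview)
Your proposal is correct and follows essentially the same route as the paper: the paper's own proof of this corollary (restated as Corollary~\ref{cor:dense-restated}) simply combines Theorem~\ref{thm:red-main-formal} with the decision-to-search reduction of Lemma~\ref{lem:dec_to_search-r}, exactly as you describe. Your verification-cost analysis (one matrix--vector product plus norm computations, $O(n^2)$ on dense instances) is in fact tighter than the paper's stated $O(sn)$ bound in Lemma~\ref{lem:dec_to_search-r}.
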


The second step of our reduction can also be used to increase the gap of the $1/n^{O(1)}$-Approximate Linear \emph{Search} problem from $1/n^{O(1)}$ to $1 - 1/n^{100}$. This gives us the following corollary:

\begin{corollary}[Search to Search reduction: Informal]\label{cor:search-amp}
If for any constant $a$ there exists an $\tO(n^{a})$-time algorithm for $(1-1/n^{100})$-approximate linear search problem then there exists an $\tO(n^{a})$-time algorithm for the $1/n^{100}$-approximate linear search problem.
\end{corollary}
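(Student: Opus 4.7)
The plan is to re-use the same construction that powers the second stage of the gap amplification in Theorem~\ref{thm:red-main}. On the decision side, that stage takes an instance $(A,b)$ of $1/n^{O(1)}$-Approximate Linear Decision and produces an instance $(A',b')$ of $(1-1/n^{100})$-Approximate Linear Decision with dimension and sparsity $\tilde O(n)$ such that (i) $(A',b')$ is satisfiable iff $(A,b)$ is, and (ii) if $(A,b)$ is a NO instance then so is $(A',b')$. Part (ii) is proved by contrapositive: one shows that from a hypothetical weak approximate vector $y'$ for $(A',b')$ one can produce a strong approximate vector $x$ for $(A,b)$. I would simply note that this second implication is \emph{constructive} and nearly linear-time, and therefore can be read as a search-to-search reduction.

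Concretely the algorithm for $1/n^{100}$-Approximate Linear Search on input $(A,b)$ is as follows. First, I would apply the second-stage construction from Theorem~\ref{thm:red-main} to build $(A',b')$ of size $\tilde O(n)$; since the reduction preserves satisfiability, $(A',b')$ is again a satisfiable instance. Second, I would invoke the hypothesized $\tilde O(n^a)$-time algorithm for $(1-1/n^{100})$-Approximate Linear Search on $(A',b')$, obtaining a vector $y'$ with $\|A'y'-b'\|_2 \le (1-1/n^{100})\|b'\|_2$. Third, I would apply the explicit extraction map that underlies the NO-case analysis of the second-stage reduction to produce a vector $x$ with $\|Ax-b\|_2 \le \|b\|_2/n^{100}$.

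The running time bookkeeping is straightforward. The oracle is called once on an instance of size $\tilde O(n)$ and so runs in $\tilde O(n^a)$ time; both the construction of $(A',b')$ and the post-processing that extracts $x$ from $y'$ are nearly linear in the input size and therefore absorbed into polylogarithmic factors. Thus the overall algorithm runs in $\tilde O(n^a)$ time, matching the bound in the statement.

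The main obstacle I anticipate is purely a bookkeeping one: verifying that the NO-case argument in the second stage of Theorem~\ref{thm:red-main} is genuinely algorithmic rather than existential. In other words, one must check that the proof of the implication ``if $(A,b)$ admits no $1/n^{100}$-approximate solution then $(A',b')$ admits no $(1-1/n^{100})$-approximate solution'' furnishes, via its contrapositive, an explicit nearly linear-time map $y' \mapsto x$. This should be immediate from the form of the gadget used in the reduction, since gap-amplifying constructions for linear systems typically build $(A',b')$ by padding or embedding $(A,b)$ in a way that lets one read off the approximate solution to the smaller instance directly from the solution to the larger one, but it is the step that requires the most care to write down.
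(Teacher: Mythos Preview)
Your proposal is correct and is exactly the paper's approach: the second stage is Lemma~\ref{lem:amp}, which is already stated as a deterministic many-one reduction for the \emph{search} problem and explicitly constructs $x = x'/(1-\gamma_{x'})$ (with $\gamma_{x'}$ the $\mathbf{1}^n$-component of $Ax'-\mathbf{1}^n$) from any $x'$ satisfying $\|A'x'-\mathbf{1}^n\|\le(1-\delta)\|\mathbf{1}^n\|$, so the extraction step you flag as the main obstacle is already spelled out in full. The only caveat (which the paper also leaves implicit) is that Lemma~\ref{lem:amp} is written for right-hand side $\mathbf{1}^n$ rather than arbitrary $b$, so one should either read the corollary as concerning such instances or observe that a row-rescaling handles general nonzero $b$ at the cost of polynomial factors in the gap.
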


Our hardness result is tight because there exist algorithms which solve the $(1-1/n^{100})$-Approximate Linear Search problem in $O(n^\omega)$ over the RealRAM. In fact, one can solve the more general problem of linear regression, i.e. given a (possibly unsatisfiable) linear system $(A,b)$, find an $x$ that minimizes $\norm{Ax-b}_2$, in time $O(n^\omega)$ (see Section~\ref{sec:approx-discuss}).

\subsection{Extensions}\label{sec:intro-extensions}
We prove several extensions of our main theorem using the reduction discussed above in the appendix. We can modify our main reduction so that it preserves the sparsity and condition number of the original instance to get the results for sparse and well-conditioned linear systems. To get hardness for PSD linear systems we need additional ideas beyond this reduction. We get the following results:

\paragraph{Sparse Linear Systems:} Linear equation solving has also been studied in the case of sparse linear systems. We know of $O(\nnz(A)n)$ time algorithms for solving a linear system $(A \in \mathbb{R}^{O(n) \times n}, b)$ where $\nnz(A)$ denotes the number of non-zero entries of $A$, that use Conjugate Gradient Descent~\cite{hestenes}, so that when the sparsity of $A$ is $\tO(n)$, these algorithms run in time $\tO(n^2)$. Our reduction (discussed above), preserves the sparsity of the original matrix $A$ and thus reduces the exact problem over sparse linear systems to the approximate problem over sparse ones. We start with an analogous conjecture to Conjecture~\ref{conj:rk-dense} for finding the rank of a sparse matrix:
\begin{conjecture}[Rank-finding Conjecture for Sparse matrices over RealRAM]\label{conj:rk-sparse}
Finding the rank of a matrix $A \in \R^{m \times n}$ with $m = O(n)$ and $\nnz(A) = \tO(n)$ in RealRAM is $\Omega(n^2)$-hard.
\end{conjecture}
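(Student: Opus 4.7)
The statement is a conjecture rather than a theorem, so strictly speaking there is no proof to propose; instead my plan is to sketch the argument for its plausibility by paralleling the evidence given for Conjecture~\ref{conj:rk-dense}, adapted to the sparse regime, and to describe what a formal lower bound would need to overcome.

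First I would anchor the target bound to the algorithmic frontier. For $A \in \R^{m \times n}$ with $m = O(n)$ and $\nnz(A) = \tO(n)$, Wiedemann's algorithm~\cite{Wiedemann86} computes $\rk(A)$ in time $\tO(n^2)$, and no asymptotically faster algorithm is known. Since both the input size and Wiedemann's algorithm are already $\tO(n^2)$, the threshold $\Omega(n^2)$ is the tightest lower bound consistent with current knowledge; crucially, fast matrix multiplication offers no speedup over Wiedemann here, because the dense-case bottleneck of reducing to $m(n)$ does not improve on $n^2$ once $\nnz(A) = \tO(n)$. So the conjectured bound is exactly the natural fine-grained target.

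Next I would collect indirect evidence, mirroring the discussion following Conjecture~\ref{conj:rk-dense}. Sparse rank-finding is equivalent to sparse zero-testing of determinants, which via the Tutte/Lov\'asz reduction~\cite{Lovasz79, MuchaS06} captures randomized bipartite matching on graphs with $\tO(n)$ edges; no $n^{2-\Omega(1)}$-time algorithm is known for that problem. In the arithmetic circuit model the Baur--Strassen framework~\cite{BS83} relates determinant to matrix multiplication in a way that does not degrade for restricted inputs, giving at least circuit-complexity evidence (though not a RealRAM proof) that determinant is hard. Finally, the Schatten-$p$ approximation algorithms of~\cite{MNSUW18} degenerate at $p \to 0$ and give no subquadratic rank estimator even on sparse inputs, ruling out one natural approach to falsifying the conjecture.

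The hard part, of course, is that Conjecture~\ref{conj:rk-sparse} is an unconditional polynomial lower bound on RealRAM, and proving anything of this form is beyond current techniques; this is the main obstacle and the reason the statement is proposed as a hypothesis rather than a theorem. The honest proposal, therefore, is not to prove it directly but to adopt it as a working assumption on the same footing as Conjecture~\ref{conj:rk-dense} and then to plug it into the sparsity-preserving main reduction of Theorem~\ref{thm:red-main}, which yields the $\tOm(n^2)$ hardness of the Sparse $\eps$-Approximate Linear Search problem claimed in Corollary~\ref{cor:sparse}.
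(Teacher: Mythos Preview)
Your proposal is correct: the statement is a conjecture, not a theorem, and the paper offers no proof of it either. The paper simply introduces Conjecture~\ref{conj:rk-sparse} as ``an analogous conjecture to Conjecture~\ref{conj:rk-dense} for finding the rank of a sparse matrix'' and then uses it as a hypothesis for Corollary~\ref{cor:sparse}; your plausibility sketch (anchoring to Wiedemann's $\tO(n^2)$ algorithm, the matching connection, and the $p\to 0$ degeneration of the Schatten-$p$ algorithms) actually goes further than anything the paper says in support of the sparse conjecture specifically.
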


Under the above conjecture we show (see Corollary~\ref{cor:sparse}) that solving $(1-1/n^{100})$-Approximate Linear Search problems on sparse linear systems is $\tOm(n^2)$-hard, which is optimal up to poly-logarithmic factors.\\

\paragraph{Positive Semi-Definite Linear systems:} We give optimal hardness for the $(1-1/n^{100})$-approximate linear search problem when the matrix $A$ is restricted to be positive semidefinite (see Corollary~\ref{cor:dense_psd}). Recently there has been a lot of work for getting nearly linear-time approximation algorithms for restricted classes of matrices. For a slightly more restricted class of linear systems than PSD ones, called Strongly Diagonally Dominant (SDD) systems, Spielman and Teng gave near-linear time approximate solvers~\cite{SpielmanT14}, leaving near-linear time approximation algorithms for PSD linear systems as the next open problem. In fact, resolving the time-complexity for PSD linear systems was in mentioned as an open problem in~\cite{AndoniKP19}, where they gave unconditional hardness for PSD linear systems for sublinear-time algorithms. Interestingly, we show that under Conjecture~\ref{conj:rk-mid}, such solvers are not possible for PSD linear systems, thus giving a conditional separation of the time complexity required for approximately solving SDD linear systems versus PSD ones.

\paragraph{Well-conditioned linear systems:} 
We now turn our attention to the problem of solving well-conditioned (see Definition~\ref{def:cond-num}) linear systems approximately. In Section~\ref{sec:approx-rank}, we give optimal conditional hardness of approximation for linear systems over matrices with polynomially bounded condition number under the $\tOm(n^\omega)$-hardness of the well-conditioned rank-finding problem. 

We also prove the following analogue of Corollary~\ref{cor:search-amp} which amplifies the gap for the \textit{search} problem on well-conditioned matrices:
\begin{corollary}\label{cor:search-amp_c}
If there exists as $\tO(n^{a})$ time algorithm for Well-conditioned $(1-1/n)$-Approximate Linear Search then there exists a $\tO(n^{a})$-time algorithm for Well-conditioned $(1/n)$-Approximate Linear Search problem.
\end{corollary}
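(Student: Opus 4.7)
Plan: The corollary is the well-conditioned analogue of Corollary~\ref{cor:search-amp}, and the strategy is to port the search-to-search gap-amplification argument used there, with one extra observation: every sub-instance that the reduction hands to the assumed oracle must itself be well-conditioned, so that its input hypothesis is satisfied.

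Given an $\tO(n^a)$-time algorithm $\mathcal{A}$ for Well-conditioned $(1-1/n)$-Approximate Linear Search and an input $(A,b)$ with $A$ well-conditioned, I will produce a $(1/n)$-approximate solution by iterative refinement on the residual. Set $x_0 = 0$ and $r_0 = b$; for each $i \geq 1$, invoke $\mathcal{A}$ on $(A, r_{i-1})$ to obtain $\Delta_i$ with $\|A\Delta_i - r_{i-1}\|_2 \leq (1-1/n)\|r_{i-1}\|_2$, and then update $x_i = x_{i-1} + \Delta_i$ and $r_i = b - Ax_i = r_{i-1} - A\Delta_i$. An easy induction gives $\|r_i\|_2 \leq (1-1/n)^i\|b\|_2$, so after $k = \Theta(n \log n)$ iterations the residual has norm at most $(1/n)\|b\|_2$, yielding the desired $(1/n)$-approximate solution $x_k$. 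Crucially, because $A$ is reused verbatim across every call, the well-conditioning hypothesis transfers to each sub-instance automatically, and no modification of the matrix is required.

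The main obstacle is matching the claimed running time of $\tO(n^a)$: the naive accounting of the refinement loop yields only $\tO(n^{a+1})$. To close this gap, one must collapse the $\Theta(n \log n)$ refinements into a single approximate solve on a block-structured lift of $(A,b)$ of dimension $O(n)$ (so that the truncated Neumann-type expansion $\sum_{j=0}^{k-1} (I - A\mathcal{A})^j b$ is encoded as one approximate linear solve), or invoke a Chebyshev/Krylov-style acceleration that drops the iteration count to $\plog(n)$ by exploiting the well-conditioned hypothesis on $A$. This is exactly the same bottleneck that has to be handled in Corollary~\ref{cor:search-amp}; the additional check specific to this well-conditioned variant is that the lifting (or acceleration) leaves the condition number of the underlying matrix essentially unchanged, which holds because the lifted matrix differs from $A$ only by identity blocks and residual-shift blocks with trivial spectrum.
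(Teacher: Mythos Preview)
Your iterative-refinement outline is correct as far as it goes, but the running time really is $\tO(n^{a+1})$ and neither of your proposed fixes closes the gap. The ``block-structured lift'' cannot have dimension $O(n)$: to encode $\Theta(n\log n)$ sequential residual corrections you need $\Theta(n\log n)$ block-rows, so the lifted system has dimension $\tO(n^2)$, and a single $(1-1/N)$-approximate solve on a system of size $N=\tO(n^2)$ costs $\tO(n^{2a})$, not $\tO(n^a)$. The Chebyshev/Krylov suggestion is also a non-starter: those accelerations require the per-step update to be a fixed \emph{linear} operator with known spectral bounds, whereas your oracle $\mathcal{A}$ is a black box (and even if one pretended it were linear, acceleration would only cut the $\Theta(n\log n)$ iterations to $\Theta(\sqrt{n}\,)$, still not $\plog(n)$). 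Finally, your remark that ``this is exactly the same bottleneck that has to be handled in Corollary~\ref{cor:search-amp}'' is a misreading: the paper never uses iterative refinement there.

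The paper's route is entirely different and avoids the iteration count issue altogether. It makes a \emph{single} oracle call, via Lemma~\ref{lem:amp} (and its well-conditioned variant Lemma~\ref{lem:amp-c}): given $(A,\mathbf{1}^n)$, form $A'=MA$ where $M$ is the cyclic matrix with $t+1$ on the diagonal and $-t$ on the superdiagonal, with $t=n/(\delta(n)\eps(n))$. This $M$ fixes $\mathbf{1}^n$ but amplifies any component of the residual orthogonal to $\mathbf{1}^n$ by roughly $t/n$, so a $(1-\delta(n))$-approximate solution $x'$ to $(A',\mathbf{1}^n)$ can be rescaled in $O(\nnz(A))$ time to an $\eps(n)$-approximate solution to $(A,\mathbf{1}^n)$. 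Since $\kappa(M)=\poly(n)$, the matrix $A'$ remains well-conditioned, which is the only extra check needed in the well-conditioned setting. The missing idea in your proposal is precisely this one-shot gap amplification by pre-multiplication.
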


\noindent \textbf{Finding $L_p$ approximate solutions:} In Section~\ref{sec:lp}, we also get similar optimal conditional hardness of approximation results (see Corollary~\ref{cor:dense_p}) for finding $L_p$-approximate solutions for a satisfiable linear system $(A, b)$, when $p \geq 1$ i.e. find $x \in \R^n$ that satisfies $\norm{Ax-b}_p \leq \eps(n)\norm{b}_p$. Amongst these norms, the $L_1$ and $L_\infty$ norms are particularly important. The $L_1$-norm corresponds to solving linear equations with minimum deviations, also known as the Least absolute deviations problem and the $L_{\infty}$-norm is the problem of minimizing error under the Chebyshev criterion. For all constants $c$, we give $\tOm(n^\omega)$-time hardness for solving linear systems $(1-1/n^{100})$-approximately under this notion, assuming the Conjecture~\ref{conj:rk-dense}. For sparse matrices, we can show $\tOm(n^2)$-time hardness for solving linear systems $(1-1/n^{100})$-approximately under this notion, assuming Conjecture~\ref{conj:rk-sparse}.

\subsection{Reductions over the WordRAM}\label{word-ram-intro}
Our main reduction can be modified (see Lemma~\ref{lem:red-main_w}) to preserve the bit-complexity of the original matrix. In Section~\ref{sec:l2_w}, we can show analogous results for all the problems considered above over the WordRAM. We show the conditional hardness result for general linear systems over the WordRAM in Section~\ref{sec:l2_w}. We show that under analogous conjectures for the rank-finding problem over the WordRAM (see Conjecture~\ref{conj:rk-dense-w}), the problem of finding $(1-1/n^{100})$-approximate solutions to linear systems with bit-complexity $O(\log n)$ is $\tOm(n^\omega)$-hard over the WordRAM (see Corollary~\ref{cor:dense_w}). Our hardness result is tight up to polylogarithmic factors because there exist $\tO(n^\omega)$-time algorithms for exactly solving linear systems on the WordRAM~\cite{Storjohann05,PS12,BirmpilisLS19}. 

We omit proofs of the other reductions over the WordRAM but they follow easily using ideas very similar to those outlined in Section~\ref{sec:l2_w}. For sparse matrices over WordRAM we are only able to give a $\tOm(n^{2})$ lower bound, which is trivial as we might need $\tOm(n^{2})$ bits to even describe a solution. On the algorithmic side, no improvement over the dense case algorithmic runtime of $O(n^{\omega})$ was known until very recently, when Peng and Vempala~\cite{PengV21} succeeded in finding an asymptotically faster algorithm for the $1/\poly(n)$-approximate linear search problem.

\subsection{Further applications and related work}
Recently, there has been a lot of progress on the algorithmic front for finding approximate solutions to restricted classes of linear systems $(A,b)$. In a breakthrough work, Spielman and Teng~\cite{SpielmanT14} obtained $\tilde{O}(\nnz(A)\log(1/\eps(n)))$-time algorithms for finding $\eps(n)$-approximate solution to  Laplacian systems and Strongly Diagonally Dominant (SDD) systems. This result was followed up by algorithms for more general classes of linear systems such as Connection Laplacians~\cite{KyngLPSS16} and Directed Laplacian systems~\cite{CohenKPPRSV17}. This raised the hope that such approximation algorithms could be obtained for more general classes of matrices such as truss stiffness matrices and total variation matrices. Kyng and Zhang~\cite{KyngZ17} showed that such algorithms for these slightly more general classes would imply approximation algorithms for general linear systems. Therefore, by composing our reduction with theirs, one immediate corollary we get is that solving approximately for these classes of restricted linear systems is as hard as the rank-finding problem. 

In~\cite{KWZ20} the authors prove conditional hardness for the problems of Packing/Covering Linear Programs based on the hardness of approximately solving general linear equations. Prior to our work there was no evidence of hardness for approximately solving linear equations. Our results therefore imply hardness for these problems under the rank-finding problem.

\paragraph{Organization:} In section~\ref{sec:prelims} we introduce notation and basic definitions that will be used throughout the paper. In Section~\ref{sec:main-red} we give a proof of our main reduction (Theorem~\ref{thm:red-main-formal}). In 
Section~\ref{sec:l2}, we show conditional hardness for finding approximate solutions to linear systems over the Real RAM. We then extend these conditional hardness results to restricted classes of linear systems: Section~\ref{sec:sparse_l2} considers sparse linear systems, Section~\ref{sec:psd} considers positive semidefinite linear systems, and finally Section~\ref{sec:approx-rank} considers well-conditioned linear systems. In Section~\ref{sec:lp}, we show the conditional hardness of finding approximate solutions to linear systems in the $L_p$-norm for $p \geq 1$. In Section~\ref{sec:l2_w} we show the analogues of these results over the WordRAM model of computation.

\section{Preliminaries}\label{sec:prelims}
Below is some notation that will be used throughout:

\paragraph{Notation:} We will use $\nnz(A)$ for to denote the sparsity of a matrix $A$ and we will assume that $\nnz(A) \geq \max(n, m)$ for $A \in \mathbb{R}^{m \times n}$. We will call a matrix $A \in \mathbb{R}^{m \times n}$ sparse if $\nnz(A) = \tO(\max(m, n))$. We will use $\kappa(A)$ to denote the condition number of a matrix $A$. By bit complexity of $A$ or $\bc(A)$ we will refer to maximum bit complexity of any entry in the matrix/vector $A$. We will use $\leq_T$ to denote Turing reductions. Most of our Turing reductions run in quasi-time linear in the input size. We say $\iprod{a, b}$ to denote the inner product of $a$ and $b$ i.e. $\sum_i a_ib_i$. We denote $W^{\perp}$ to denote the subspace orthogonal to the subspace $W$. By $P_W(b)$ we denote the projection of $b$ on the vector space $W$. For a matrix $M$ we denote its column space by $\colspace(M)$. By $\norm{v}_p$ we mean the $L_p$ norm of $v$, whenever $p$ is not specified we mean the $L_2$ norm. By $A^{\dagger}$ we mean the pseudoinverse of a matrix $A$. For a matrix $A \in \mathbb{R}^{m \times n}$ by $\Pi_A = A(AA^T)^{\dagger}A^{T}$ we mean the linear operator such that for all $x \in \mathbb{R}^m$, $\Pi_A(x)$ is the projection of $x$ on $\colspace(A)$. By $g = \tO(f)$ we mean $g = O(f \cdot \plog(f))$. By $g = \tOm(f)$ we mean $g = \Omega(f/\plog(f))$. Whenever not specified by algorithms we mean randomized algorithms. We use w.h.p. to denote a probability of $1 - 1/n^{\log n}$, where $n$ is the input size under consideration.

We will use the following definitions later on:

\begin{definition}[Turing reductions]
There exists a randomized Turing reduction from problem $A$ to problem $B$ that works with high probability when the following holds: There exists an algorithm which solves problem $A$ with high probability, given an oracle for problem $B$. Unless specified otherwise, we use whp to denote a probability of $1 - 1/n^{\log n}$, where $n$ is the input size of problem $A$. Note that any reduction that is correct with probability $\geq 2/3$ for decision problems $A$ and $B$, can be amplified to obtain a success probability of $1 - 1/n^{\log n}$, by running the original reduction $\poly\log(n)$ times and taking a majority vote.
\end{definition}

\begin{definition}[Matrix Multiplication Constant]
The matrix multiplication exponent $\omega$ is defined to be the smallest constant such that there exists an $O(n^{\omega})$-time algorithm for matrix multiplication. The current best bound on $\omega$ is  $2.372..$~\cite{josh-omega}.
\end{definition}

\begin{definition}[Condition number]\label{def:cond-num}
Condition number of a matrix $A \in \R^{m \times n}$ where $n \leq m$ and $A$ has full column-rank is defined as: 
$$\kappa(A) = \frac{\max\limits_{\norm{x}_2 = 1}\norm{Ax}}{\min\limits_{\norm{y}_2 = 1}\norm{Ay}}.$$

Note that when the matrix does not have full column rank the denominator becomes $0$ and hence the condition number is undefined, so whenever we discuss the condition-number we assume that we are working with matrices that have full column-rank. For a square invertible matrix $A$ we get that $\kappa(A) = \frac{\sigma_{1}}{\sigma_{n}}$, where $\sigma_i$ is the $i^{th}$ singular value.

We use the term well-conditioned to refer to a matrix with $\poly(n)$ condition-number.
\end{definition}

We will use many standard properties about Gaussians in Section~\ref{sec:l2} and we state them here without proof:
\begin{lemma}\label{lem:gaussian}
Let $x \sim \N(0, \sigma^2), b = (b_1, b_2, \ldots, b_m) \sim \N(0, \sigma^2)^m$ and $v = (v_1, v_2, \ldots, v_m)$:
\begin{enumerate}
    \item $\Pr[\abs{x} \leq a] \leq a/\sigma$\label{prop:gp1}
    \item $\Pr[\abs{x} \geq a] \leq \sigma^2/a^2$\label{prop:gp2}
    \item $\Pr[\norm{b}^2 \geq am\sigma^2] \leq 1/a$\label{prop:gp3}
    \item $\iprod{v, b}$ is distributed as $\N(0, \norm{v}^2\sigma^2)$\label{prop:gp5}
\end{enumerate}
\end{lemma}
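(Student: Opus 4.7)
The plan is to prove the four properties via standard one-line arguments: a uniform density bound for item 1, Markov's inequality applied to squared quantities for items 2 and 3, and the closure of independent Gaussians under linear combinations for item 4.

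For item 1, the approach is to bound the integral of the Gaussian density over $[-a,a]$ by its maximum value times the interval length. Since the density $\phi(t) = (\sigma\sqrt{2\pi})^{-1} e^{-t^2/(2\sigma^2)}$ attains its maximum at $t=0$ with value $(\sigma\sqrt{2\pi})^{-1}$, integrating gives $\Pr[|x|\leq a] \leq 2a/(\sigma\sqrt{2\pi}) \leq a/\sigma$, where the final step uses $2 \leq \sqrt{2\pi}$.

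For items 2 and 3, I would apply Markov's inequality to the nonnegative random variables $x^2$ and $\|b\|^2$. The key inputs are the moments $\mathbb{E}[x^2]=\sigma^2$ and, by linearity of expectation together with independence of the coordinates of $b$, $\mathbb{E}[\|b\|^2]=\sum_{i=1}^m \mathbb{E}[b_i^2]=m\sigma^2$. These immediately yield $\Pr[|x|\geq a]=\Pr[x^2\geq a^2]\leq \sigma^2/a^2$ and $\Pr[\|b\|^2\geq am\sigma^2]\leq 1/a$.

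For item 4, I would invoke the classical fact that a linear combination of independent Gaussians is Gaussian: each $v_i b_i$ is distributed as $\mathcal{N}(0,v_i^2\sigma^2)$, and the independent sum is Gaussian with variance equal to the sum of the variances, giving $\langle v,b\rangle\sim \mathcal{N}(0,\|v\|^2\sigma^2)$. The cleanest rigorous verification is via characteristic functions: $\mathbb{E}[e^{it\langle v,b\rangle}]=\prod_i \mathbb{E}[e^{it v_i b_i}]=\exp(-t^2\|v\|^2\sigma^2/2)$, which identifies the distribution uniquely. No step here presents a genuine obstacle, as the lemma is entirely textbook; item 4 simply requires the most machinery beyond a direct calculation, since establishing Gaussian closure under linear combinations is cleanest through characteristic functions or convolution of densities rather than by brute-force density manipulation.
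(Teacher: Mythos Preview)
Your proof is correct and entirely standard. The paper itself states this lemma without proof, explicitly introducing it as ``standard properties about Gaussians \ldots\ stated here without proof,'' so there is no argument in the paper to compare against; your write-up simply fills in the routine details the authors chose to omit.
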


We use the following lemma in Section~\ref{sec:l2}:
\begin{lemma}\label{lem:expander}
Let $G = (V, E)$ be a cycle on $n$ vertices with Laplacian $L$, $\abs{V} = n$. For all vectors $y$ with, such that for all $y$ satisfying $y \cdot \mathbf{1}^n = 0$:
$$y^T Ly = \sum_{(i, j) \in E} (y_i - y_j)^2 = \sum_{i \in [n]} (y_i - y_{i + 1})^2 \geq \norm{y}^2/n^2$$where addition wrt $i$ is modulo $n$.
\end{lemma}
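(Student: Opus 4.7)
The plan is to prove this Poincaré-type inequality by an elementary telescoping plus Cauchy–Schwarz argument, which avoids any explicit eigenvalue computation while giving the right $1/n^2$ order. The key observation is that writing $d_i = y_i - y_{i+1}$ (indices modulo $n$), we have $y^TLy = \sum_{i \in [n]} d_i^2$, and for any two vertices $i, j$ on the cycle, $y_i - y_j$ equals the sum of $d_k$'s along any arc from $i$ to $j$. Picking the shorter of the two arcs, which has length $\ell_{ij} \le n/2$, Cauchy–Schwarz gives
\[
(y_i - y_j)^2 \;\le\; \ell_{ij} \sum_{k \in P_{ij}} d_k^2 \;\le\; \frac{n}{2}\sum_{k \in [n]} d_k^2 \;=\; \frac{n}{2}\, y^T L y.
\]

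Next I would sum this bound over all unordered pairs. On the left-hand side, I use the standard identity
\[
\sum_{i<j}(y_i - y_j)^2 \;=\; n\|y\|^2 - \Bigl(\sum_i y_i\Bigr)^2 \;=\; n\|y\|^2,
\]
where the last equality uses the hypothesis $y\cdot\mathbf{1}^n = 0$. On the right-hand side, the number of pairs is $\binom{n}{2} \le n^2/2$, so summing yields
\[
n\|y\|^2 \;\le\; \binom{n}{2}\cdot\frac{n}{2}\cdot y^TLy \;\le\; \frac{n^3}{4}\, y^TLy,
\]
which rearranges to $y^TLy \ge 4\|y\|^2/n^2 \ge \|y\|^2/n^2$, as desired.

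The main subtlety I would be careful about is the telescoping bound in the first step: the differences $d_k$ are defined around the cycle, so I must justify that for every pair $(i,j)$ there is an arc of length at most $n/2$ connecting them along which $y_j - y_i = \pm\sum_{k \in P_{ij}} d_k$. This is just the definition of a cycle graph, but writing it cleanly (treating the two arcs symmetrically and handling the parity of $n$) is the only place where one can slip up. Everything else — the Cauchy–Schwarz step, the pair-sum identity, and the arithmetic — is routine, and no property of $L$ beyond the formula $y^TLy = \sum_i (y_i - y_{i+1})^2$ is needed. An alternative route would be to diagonalize $L$ via the DFT and bound the smallest nonzero eigenvalue $2-2\cos(2\pi/n) = 4\sin^2(\pi/n) \ge 16/n^2$, but the combinatorial proof above is shorter and makes no use of trigonometric identities.
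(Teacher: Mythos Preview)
Your argument is correct. The paper takes a different and much terser route: it simply invokes the Rayleigh-quotient inequality $y^T L y \ge \lambda_2 \|y\|^2$ for $y \perp \mathbf{1}^n$ together with the known fact that the second-smallest Laplacian eigenvalue of the $n$-cycle satisfies $\lambda_2 \ge 1/n^2$, without proving the latter. Your telescoping plus Cauchy--Schwarz argument instead gives a fully self-contained combinatorial proof and even yields the sharper constant $4/n^2$ (and, as you note, the spectral route via $4\sin^2(\pi/n)$ would give the still sharper $16/n^2$). The trade-off is purely one of length versus self-containment: the paper treats the eigenvalue bound as a black box, while you prove the needed Poincar\'e inequality from scratch.
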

\begin{proof}
    For a graph $G$, $y^T Ly = \sum_{(i, j) \in E} (y_i - y_j)^2 \geq \lambda_2\norm{y}^2$, where $\lambda_2$ is the second largest eigenvalue of the Laplacian matrix. As $\lambda_2 \geq 1/n^2$ for the $n$-cycle, we are done.
\end{proof}

The proof of the next lemma follows from standard linear algebra and hence we omit it.

\begin{lemma}\label{lem:solns}
Let $(A, b)$ be a linear system. Let $W = \colspace(A)^{\perp}$ and $P_W(b)$ denote the projection of the vector $b$ onto the subspace $W$. Then we have that,
$\min_x \norm{Ax-b}_2 \geq \norm{P_W(b)}_2$.
\end{lemma}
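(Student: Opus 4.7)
The plan is to write $b$ as the orthogonal sum of its projections onto $\colspace(A)$ and $W = \colspace(A)^{\perp}$, and then apply the Pythagorean theorem to $Ax - b$ for an arbitrary $x$. Concretely, I would first set $u = P_{\colspace(A)}(b)$ and $v = P_W(b)$, so that $b = u + v$ with $\iprod{u, v} = 0$ and $u \in \colspace(A)$, $v \in \colspace(A)^{\perp}$.

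Next, for any $x \in \R^n$, the vector $Ax$ lies in $\colspace(A)$, so $Ax - u$ also lies in $\colspace(A)$. Hence I can write
\[
Ax - b = (Ax - u) - v,
\]
where $(Ax - u) \in \colspace(A)$ and $v \in \colspace(A)^{\perp}$ are orthogonal. The Pythagorean theorem then yields
\[
\norm{Ax - b}_2^2 = \norm{Ax - u}_2^2 + \norm{v}_2^2 \geq \norm{v}_2^2 = \norm{P_W(b)}_2^2.
\]
Taking square roots and then minimizing over $x$ gives the claimed inequality.

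There is no real obstacle here: the argument is a direct application of orthogonal decomposition and Pythagoras. The only point requiring mild care is to note that $Ax - u$ genuinely lies in $\colspace(A)$ (since both $Ax$ and $u$ do), so that the orthogonality with $v \in W$ is immediate and the Pythagorean identity applies. (In fact one obtains equality for the minimizer when $u \in \colspace(A)$ is attainable as $Ax$, which it always is by definition of projection onto $\colspace(A)$, so the bound is tight; but only the inequality is needed for the statement.)
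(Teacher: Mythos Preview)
Your proof is correct and is exactly the standard orthogonal-decomposition argument one would expect. The paper itself omits the proof of this lemma, remarking only that it ``follows from standard linear algebra,'' so there is nothing to compare against; your write-up fills in precisely that routine verification.
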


\section{Proof of Main Reduction (Theorem~\ref{thm:red-main})}\label{sec:main-red}
In this section we will prove the reduction from the rank-finding problem to the approximate version of the linear decision problem. For simplicity, throughout this section we work on the RealRAM model of computation and wherever we do not state it we assume that this is the case, so we \emph{do not} discuss the bit complexity of the reductions. Our reduction can be modified to work on the WordRAM which we do in Section~\ref{sec:l2_w}.

\begin{theorem}[Restatement of Theorem~\ref{thm:red-main}]\label{thm:red-main-formal}
For all constants $c > 0$, there exists a randomized Turing reduction in the RealRAM model of computation, from the rank-finding problem on $A \in \mathbb{R}^{m \times n}$ to the $(1 - 1/n^c)$-Approximate linear decision problem on $(A' \in \R^{n' \times n'} , \mathbf{1}^{n'})$, with dimension $n' = O(\max(m,n))$ and sparsity $\tO(\nnz(A))$, where in the YES case we have the additional property that the matrices produced have full rank. The reduction runs in time $\tO(\nnz(A))$, produces $\plog(mn)$ instances of the approximate linear decision problem and works with high probability.
\end{theorem}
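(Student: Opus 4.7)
I plan to build the Turing reduction in two stages, producing $\plog(mn)$ approximate linear decision instances via an outer binary search on the rank. Stage~1 reduces rank-finding on $A\in \R^{m\times n}$ to $O(\log n)$ satisfiability queries; Stage~2 converts each such query into a $(1-1/n^c)$-gap approximate linear decision instance with right-hand side $\mathbf{1}^{n'}$, sparsity $\tO(\nnz(A))$, and $n'=O(\max(m,n))$.

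\paragraph{Stage~1: rank to satisfiability.} Given a target rank $r$, I would pad $A$ with $O(\max(m,n))$ fresh independent Gaussian entries (rows, columns, and a hardcoded $\mathbf{1}$-column reserved for Stage~2) to obtain a square matrix $\tilde A\in\R^{N\times N}$ of size $N=O(\max(m,n))$, rigged so that with high probability $\tilde A$ is invertible exactly when $\rk(A)\geq r$. This reduces the question ``is $\rk(A)\geq r$?'' to ``is $\tilde A$ invertible?'', which is an exact linear decision problem, and $O(\log n)$ binary-search queries on $r$ then determine $\rk(A)$.

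\paragraph{Stage~2: amplifying the gap and normalizing the right-hand side.} First I would lift each satisfiability query to an $\eps_0=1/n^{O(1)}$-gap instance by choosing the right-hand side $\tilde b$ to be a Gaussian of controlled norm: by the fourth item of Lemma~\ref{lem:gaussian}, the projection $P_W\tilde b$ on $W := \colspace(\tilde A)^\perp$ is non-trivial with high probability, and Lemma~\ref{lem:solns} converts this into a residual of at least $\|\tilde b\|/n^{O(1)}$ in the NO case. The gap-amplification gadget then couples the equations $\tilde A x=\tilde b$ with $\Theta(N)$ per-equation slack variables each pinned to $1$ via a unit-coefficient equation with right-hand side $1$, together with a scalar slack $z$ also pinned to $1$; the hardcoded $\mathbf{1}^N$-column of $\tilde A$ is used so that the $\mathbf{1}$ portion of the RHS is absorbed by $\colspace(\tilde A)$ and the YES case simply corresponds to $z=1$, $u=\mathbf{1}^N$, and $\tilde A x=\tilde b$. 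A block-triangular determinant argument shows $\det(A')\propto \det(\tilde A)$, so $A'$ is full rank in the YES case, and a careful choice of scaling $\|\tilde b\|$ to a suitable $\poly(n)$ value makes the NO-case squared residual within a $1/n^c$ factor of $\|\mathbf{1}^{n'}\|^2=n'$.

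\paragraph{Main obstacle.} The delicate part is simultaneously satisfying all three Stage~2 requirements: (a) NO-case gap $\geq 1-1/n^c$, (b) $n'=O(\max(m,n))$, and (c) RHS exactly $\mathbf{1}^{n'}$. A naive single-pin design only achieves a ratio of $O(1/\sqrt{n'})$, because a single pin's residual is bounded by $O(1)$ while $\|\mathbf{1}^{n'}\|=\sqrt{n'}$. To match $\|\mathbf{1}^{n'}\|$ one must couple $\Theta(n')$ \emph{independently} pinned slack variables through the underlying $\tilde Ax-z\tilde b$ block, so that no choice of slacks can simultaneously fit the original equations and all $\Theta(n')$ pins in the NO case; the hardcoded $\mathbf 1$-column in Stage~1 is essential because it lets the $\mathbf 1$-valued RHS be absorbed into $\colspace(\tilde A)$ in the YES case and not itself become a source of residual. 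Preserving sparsity $\tO(\nnz(A))$ throughout (in particular, ensuring that the Gaussian padding in Stage~1 does not destroy sparsity, which is the reason the sparse-matrix variant of the theorem requires its own separate conjecture) is the remaining technical challenge.
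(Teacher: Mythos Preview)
Your Stage~1 and the first half of Stage~2 (random Gaussian right-hand side to get a $1/n^{O(1)}$ gap) match the paper's Lemmas~\ref{lem:rank} and~\ref{lem:fc_to_lda} essentially verbatim. The divergence is in how you normalize the right-hand side to $\mathbf{1}^{n'}$ and how you amplify the gap from $1/n^{O(1)}$ to $1-1/n^c$.

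The paper does \emph{not} introduce slack variables or pins. Instead, after obtaining the $1/n^{O(1)}$-gap instance $(M,b)$ with random Gaussian $b$, it simply rescales the $i$-th row by $1/b_i$ to force $b=\mathbf{1}^n$ (this costs only a $\poly(n)$ factor in the gap since $1/n^2\le|b_i|\le n^2$ w.h.p.). Then, for gap amplification, it left-multiplies by the fixed sparse matrix $M$ with $M_{ii}=t+1$, $M_{i,i+1}=-t$ (indices cyclic), $t=n^{O(1)}$. The key identity is $\|MAx-\mathbf{1}\|^2=\|Ax-\mathbf{1}\|^2+t(t+1)\sum_i(z_i-z_{i+1})^2$ where $z=Ax-\mathbf{1}$; the cycle-Laplacian bound (Lemma~\ref{lem:expander}) then shows the component of $z$ orthogonal to $\mathbf{1}$ is blown up by $\Theta(t/n)$ while $\mathbf{1}$ itself is fixed by $M$. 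This is what produces the $1-1/n^c$ gap without changing the dimension or right-hand side.

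Your slack-variable gadget, as written, has a genuine gap. You correctly identify that a single pin gives only a $1/\sqrt{n'}$ ratio, but the proposed fix---``couple $\Theta(n')$ independently pinned slack variables through the $\tilde Ax-z\tilde b$ block''---does not obviously work: the adversary is free to set $z=0$ and choose $u$ and $x$ to trade off pin residuals against coupling residuals, and a back-of-the-envelope calculation (e.g.\ setting $u=\mathbf{0}$ and using the built-in $\mathbf{1}$-column) still leaves a constant-fraction ratio rather than $1-1/n^c$. Nothing in your description forces the $\Theta(n')$ pin residuals to be \emph{simultaneously} large in the NO case, which is exactly what the paper's cyclic matrix achieves via the spectral gap of the cycle.

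Finally, your parenthetical about sparsity is a misreading: the paper's Stage~1 padding (Lemma~\ref{lem:add_random}) uses \emph{sparse} random Gaussian vectors with $\tO(m/(m-r))$ nonzeros each, so the reduction \emph{does} preserve sparsity. The sparse variant needs its own Conjecture~\ref{conj:rk-sparse} only because sparse rank-finding already has $\tO(n^2)$ algorithms, not because the reduction densifies.
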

We will prove this reduction in three steps:

\begin{enumerate}
    \item Lemma~\ref{lem:rank}: Rank-Finding Problem $\leq_T$ Full Rank problem (see Definition~\ref{def:fr})
    \item Lemma~\ref{lem:fc_to_lda}: Full Rank problem $\leq_T$ $(1/n^{O(1)})$-Approximate linear decision problem 
    \item Lemma~\ref{lem:amp}: $1/n^{O(1)}$-Approximate linear decision problem $\leq$ $(1 - 1/n^{c})$-Approximate linear decision problem 
\end{enumerate}

Given the lemmas, it will be straightforward to combine these reductions to get that the linear decision problem reduces to the $(1 - 1/n^{c})$-Approximate linear decision problem. 
Let us now show each of the steps stated above. We will prove the first reduction from the rank-finding problem to the full-rank problem. This is similar to a reduction by Wiedemann~\cite{Wiedemann86} for finite fields. Let us formally introduce the Full rank problem.

\begin{definition}[Full Rank Problem.]\label{def:fr} Given a matrix $A \in \R^{n \times n}$, is $\rk(A) = n$? \end{definition}

The intuition behind the following reduction from the rank-finding problem to the full-rank problem is the following: For a matrix $A \in \R^{m \times n}$ with $m > n$ and $\rk(A) = k < n$, adding $t$ ``random'' columns will give a full column rank matrix if and only if $t \geq n-k$. Hence we can binary search for the first value of $t$ which gives us a full column rank matrix, yielding $k$.

We start by showing that adding a random column to a (non full column rank) matrix increases the dimension of its column space by $1$ with high probability. To maintain the sparsity of the initial matrix, we add random sparse Gaussian vectors. This is similar to a reduction by Wiedemann~\cite{Wiedemann86} for finite fields. 

\begin{lemma}\label{lem:add_random} 
Let $M \in \R^{m \times n}$ be a matrix of rank $\geq r$. Then there exists a sampling procedure to sample $z \in \R^m$, with $\nnz(z) = \min(m, m\log^2(m)/(m-r +1))$, that runs in time $\tO(\nnz(z))$, such that the matrix $B = [M ~ z]$ has $\rk(B) \geq \min(r+1, m)$ whp. 
\end{lemma}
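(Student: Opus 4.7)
The plan is to construct $z$ as a sparse Gaussian supported on a uniformly random subset $S \subseteq [m]$ of size $s := \min(m, \lceil C m \log^2(m)/(m-r+1) \rceil)$, where the constant $C$ will be chosen large enough below; set $z_i \sim \mathcal{N}(0,1)$ independently for $i \in S$ and $z_i = 0$ otherwise. Sampling $S$ and drawing the Gaussian values takes $\tilde O(s)$ time. The target inequality $\rk([M\ z]) \geq \min(r+1,m)$ is equivalent, in the non-trivial regime $r < m$, to $z \notin \colspace(M)$, so the proof reduces to showing this event holds with probability $1 - m^{-\log m}$. If $r = m$, then $[M\ z]$ has rank $m$ for any $z$, and $s = m$ already satisfies the stated bound, so we may assume $r < m$ and set $d := m - r \geq 1$.

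The next step reduces the probabilistic event to a combinatorial one. Let $W = \colspace(M)^\perp$, so $\dim W = d$, and let $Q \in \mathbb{R}^{d \times m}$ be the matrix whose rows are any orthonormal basis of $W$. Then $z \in \colspace(M)$ iff $P_W(z) = 0$ iff $Qz = 0$, and because $z$ is supported on $S$ this equals $Q_S z_S = 0$, where $Q_S$ is the column-restriction of $Q$ to $S$ and $z_S \in \mathbb{R}^s$ is the restriction of $z$. The key observation is that the $i$-th column of $Q$ vanishes iff $e_i \perp W$ iff $e_i \in \colspace(M)$; letting $I := \{i \in [m] : e_i \in \colspace(M)\}$, the vectors $\{e_i\}_{i \in I}$ are linearly independent inside the $r$-dimensional space $\colspace(M)$, so $|I| \leq r$.

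Consequently, if $S \not\subseteq I$ then $Q_S$ has at least one nonzero column, so $\ker Q_S$ is a proper subspace of $\mathbb{R}^s$, and the continuous distribution of $z_S$ puts zero mass on it; while if $S \subseteq I$ then $Q_S = 0$ and the reduction fails deterministically. Hence the failure probability is at most
\[
\Pr[S \subseteq I] \;\leq\; \binom{|I|}{s}\Big/\binom{m}{s} \;\leq\; (r/m)^s \;=\; (1 - d/m)^s \;\leq\; \exp(-sd/m).
\]
In the non-trivial case $s = \lceil C m \log^2(m)/(d+1) \rceil$ we get $sd/m \geq C \log^2(m)\cdot d/(d+1) \geq (C/2)\log^2(m)$ since $d \geq 1$, so for $C$ large enough the failure probability is at most $m^{-\log m}$. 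In the other case, $s = m$ and $z$ is a full-dimensional nondegenerate Gaussian, hence lies in the proper subspace $\colspace(M)$ with probability $0$.

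I do not expect a real obstacle here; the content is a one-line Gaussian-genericity argument combined with a binomial-tail estimate. The only idea that needs to be isolated is the combinatorial identification in the second paragraph: the failure event over the randomness of $S$ is exactly that $S$ lies inside the distinguished set $I$ of standard basis vectors that happen to belong to $\colspace(M)$, and this set has size at most $r$. Once that is in place, the choice $s \asymp m \log^2(m)/(m-r+1)$ is forced by the inequality $(1-d/m)^s \leq m^{-\log m}$, matching the bound stated in the lemma.
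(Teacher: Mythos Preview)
Your proof is correct and follows essentially the same strategy as the paper: sample a sparse Gaussian supported on a random set $S$ of the stated size, argue that $S$ hits a ``good'' coordinate set of size at least $m-r$ with probability $1 - m^{-\log m}$, and then invoke Gaussian genericity. The one substantive difference is in how the good set is identified. The paper appeals to an auxiliary lemma (their Lemma~\ref{lem:get_nz}) to produce a single vector $v \in W = \colspace(M)^\perp$ with at least $m-r$ nonzero coordinates, and takes $G = \operatorname{supp}(v)$; you instead observe directly that the bad coordinates are exactly $I = \{i : e_i \in \colspace(M)\}$, which has size at most $\rk(M)$ by linear independence, so $[m]\setminus I$ has size at least $m-r$. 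Your characterization is slightly cleaner (it avoids the auxiliary lemma and identifies the \emph{exact} failure event over $S$), but the two are equivalent: $G \subseteq [m]\setminus I$ always, and the probability bound $(r/m)^s$ is the same. One small imprecision: your claim that the target is ``equivalent'' to $z \notin \colspace(M)$ assumes $\rk(M) = r$ exactly; when $\rk(M) \geq r+1$ the conclusion holds trivially, so you should dispose of that case first (as the paper does) before asserting $\dim W = d$ and $|I| \leq r$.
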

\begin{proof}
Let $s = \min(m, m\log^2(m)/(m-r+1))$. If $s = m$ let $S = [m]$, otherwise randomly choose $S \subseteq [m]$, by sampling $s$ coordinates from $[m]$ with replacement. If $i \notin S$, then set $z_{i} = 0$, else sample it from $\mathcal{N}(0,1)$. 

Now consider the matrix $B = [M ~ z]$. Suppose that $\rk(M) \geq \min(r+1, m)$, then we already have that $\rk(B) \geq \rk(M) \geq \min(r+1, m)$. So we need to only prove that when $\rk(M) = r < m$, then this procedure gives $\rk(B) \geq r+1$. This is equivalent to proving that $z \notin \colspace(M)$.

Let $W$ be the orthogonal subspace to $\colspace(M)$. $W$ has dimension $m-r > 0$ and for all vectors $w \in W, u \in \colspace(M)$, $\iprod{w,u} = 0$. We will prove that with large probability there exists a vector $v \in W$, such that $\iprod{v,z} > 0$, which implies that $z \notin \colspace(M)$. 

Since $\rk(W) = m - r$, by Lemma~\ref{lem:get_nz} there exists vector $v \in W$ which is non-zero on at least $m-r$ coordinates. Let $G$ denote the set of coordinates where $v$ is non-zero with $\abs{G} \geq m-r$.

Since $S$ (the set of non-zero coordinates of $z$) is a large enough random set of coordinates, one can prove that $S,G$ have a non-zero intersection. Formally,

\begin{align*}
    \Pr[S \cap G = \phi] &= \left(\Pr_{i \sim [m]}[i \not\in G]\right)^{\abs{S}}\\
    &\leq \left(r/m\right)^{m\log^2(m)/(m-r+1)}\\
    &\leq \frac{1}{m^{\log m}}.
\end{align*}

Let us now consider the inner product between $z,v$. We have that, $\iprod{z,v} = \sum_{i \in S \cap G} z_{i} v_i$, is a Gaussian random variable with non-zero variance if $S \cap G \neq \phi$, which implies that $\iprod{z,v} \neq 0$ whp. This implies that $z \notin \colspace(M)$, and the matrix $B = [M ~ z]$ has rank equal to $\min(r+1, m)$ whp.
\end{proof}

\begin{lemma}\label{lem:get_nz}
Let $W \subseteq \mathbb{R}^n$ be a vector space of dimension $d$. Then there exists a vector $v \in W$ such that $v$ has non-zero entries on at least $d$ coordinates. 
\end{lemma}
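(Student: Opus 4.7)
The plan is to locate $d$ coordinates on which a single element of $W$ can be made simultaneously non-zero, and then actually produce such an element by an invertibility argument. First, I would fix any basis $w_1, \ldots, w_d$ of $W$ and assemble it into a $d \times n$ matrix $B$ whose $i$-th row is $w_i$. Because $\dim W = d$, the matrix $B$ has rank $d$. Since row-rank equals column-rank, there must be an index set $S \subseteq [n]$ of size $d$ such that the $d \times d$ submatrix $B_S$ consisting of the columns of $B$ indexed by $S$ is invertible.

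Next, I would use $S$ to parameterize $W$ by its restriction. Any $v \in W$ can be written uniquely as $v = \sum_{i=1}^d c_i w_i$ for some coefficient vector $c \in \mathbb{R}^d$, and its restriction to the coordinates $S$ satisfies $v|_S = B_S^{\top} c$. Since $B_S$ (and hence $B_S^{\top}$) is invertible, the map $c \mapsto v|_S$ is a linear bijection from $\mathbb{R}^d$ onto $\mathbb{R}^d$.

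Finally, pick any target $u \in \mathbb{R}^d$ whose every entry is non-zero (for instance $u = \mathbf{1}^d$), solve $c = (B_S^{\top})^{-1} u$, and set $v = \sum_i c_i w_i \in W$. Then $v|_S = u$, so $v_j \neq 0$ for every $j \in S$, giving at least $|S| = d$ non-zero coordinates, as required.

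The argument is essentially structural, so there is no real obstacle. The only step that needs justification is the existence of $d$ linearly independent columns, which is immediate from the equality of row and column ranks. Alternatively, one could replace the explicit construction in the last paragraph by a genericity argument: for each $j \in S$ the condition $v_j = 0$ cuts out a proper hyperplane in the coefficient space $\mathbb{R}^d$, so almost every $c$ avoids the union of these $d$ hyperplanes; this gives the same conclusion with a one-line measure-theoretic proof and may be preferable stylistically.
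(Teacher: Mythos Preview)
Your argument is correct, and it follows a genuinely different route from the paper. The paper's proof is probabilistic: it takes any spanning set $v_1,\dots,v_j$ of $W$, notes that their joint support $G$ must have size at least $d$ (otherwise $W$ would sit inside a coordinate subspace of dimension $<d$), and then forms $v=\sum_i x_i v_i$ with i.i.d.\ Gaussian coefficients $x_i$; each coordinate $v_k$ for $k\in G$ is a non-degenerate Gaussian, so $v$ is nonzero on all of $G$ with probability~$1$. Your primary argument is instead deterministic and constructive: you locate $d$ linearly independent columns of the basis matrix and invert the resulting $d\times d$ block to hit the all-ones vector on those coordinates exactly. This avoids any appeal to probability or measure and gives an explicit $v$, which is arguably cleaner for a pure existence lemma. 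Conversely, the paper's argument yields the slightly stronger conclusion that a generic element of $W$ is nonzero on \emph{every} coordinate in the support of $W$ (not just on some chosen $d$ of them), and its probabilistic flavor is consistent with how the surrounding Lemma~\ref{lem:add_random} is proved. Your closing ``genericity'' alternative is essentially the paper's idea phrased in terms of hyperplane avoidance rather than Gaussian coefficients.
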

\begin{proof}
Since $\rk(W) = d$, there exists vectors $v_1,\ldots,v_j \in W$ such that $G = \{k \mid \exists i (v_i)_k \neq 0\}$ satisfies $\abs{G} \geq m - r$, else $W$ would be a subset of $\R^{< m-r}$. We can now take a random linear combination of these vectors to get a vector that is non-zero on every coordinate in $G$. Consider the vector $v = \sum_i x_i \cdot v_i$, where each $x_i \sim \N(0,1)$. Using property~\ref{prop:gp5} of Lemma~\ref{lem:gaussian}, we get that for every coordinate $j \in G$, $v_j$, is a Gaussian with zero mean and non-zero variance, hence the probability that it is zero is a measure-zero set. A union bound over all coordinates in $G$, gives us that whp, $v$ in nonzero on all coordinates in $G$. From now on we assume that this is the case (as otherwise we can consider the reduction to have failed).
\end{proof}

We will now prove the reduction from Rank-Finding problem to the Full Rank problem.

\begin{lemma}\label{lem:rank} 
There exists a randomized Turing reduction which works w.h.p. from the Rank-Finding problem on $A \in \R^{m \times n}$ to the Full rank problem, that runs in time $\tO(\nnz(A))$ and produces $O(\log m)$ instances of the Full rank problem, such that all instances of the matrices produced have dimension $O(\max(m,n)) \times O(\max(m,n))$ and sparsity $\tO(\nnz(A))$.
\end{lemma}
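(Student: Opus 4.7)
The plan is a binary search over the candidate rank $k$: for each $k \in \{0, 1, \ldots, N\}$ (with $N = \max(m, n)$) I will construct a single square matrix $B_k \in \R^{N \times N}$ that has full rank iff $\rk(A) \geq k$, and hand it to the Full Rank oracle. Binary search then determines $\rk(A)$ in $O(\log N) = O(\log m)$ queries. First I pad $A$ with zero rows or columns to form $\hat A \in \R^{N \times N}$ of the same rank as $A$.

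For each $k$, $B_k$ is built in two phases. In phase one, sample a sparse random matrix $T_k \in \R^{N \times k}$ column by column in the style of Lemma~\ref{lem:add_random}: the $j$-th column $t_j \in \R^N$ has support of size $s_j = O(N \log^2 N / (k - j + 1))$ chosen uniformly, with nonzero entries i.i.d.\ $\N(0, 1)$. Form $M_k = \hat A T_k \in \R^{N \times k}$. In phase two, iteratively append $N - k$ further columns $Z_k \in \R^{N \times (N - k)}$ using Lemma~\ref{lem:add_random} with rank parameter $k + j - 1$ at the $j$-th step, and set $B_k = [M_k \mid Z_k] \in \R^{N \times N}$.

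Correctness splits into two cases. If $\rk(A) < k$ then $\rk(M_k) \leq \rk(\hat A) < k$, so $\rk(B_k) \leq \rk(M_k) + (N - k) < N$ regardless of the sampling inside $Z_k$, and $B_k$ is not full rank. If $\rk(A) \geq k$, phase one gives $\rk(M_k) = k$ w.h.p.\ (see the obstacle below), after which iterating Lemma~\ref{lem:add_random} through $Z_k$ yields $\rk(B_k) = N$ w.h.p. A union bound over the $O(N \log N)$ sampling steps across all $O(\log N)$ queries yields the reduction's overall w.h.p.\ guarantee. For the resource bounds, both $\sum_j s_j$ and $\sum_j N \log^2 N / (N - k - j + 2)$ evaluate to $\tO(N)$ by a harmonic-sum argument, so $\nnz(T_k), \nnz(Z_k) = \tO(N)$. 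A Chernoff bound on the uniform support-sampling shows every row of $T_k$ has $\tO(\log N)$ nonzeros w.h.p., and a standard sparse matrix-multiplication bound then gives $\nnz(M_k) \leq \tO(\log N) \cdot \nnz(\hat A) = \tO(\nnz(A))$. Hence $\nnz(B_k) = \tO(\nnz(A))$, and each $B_k$ can be constructed in $\tO(\nnz(A))$ time.

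The main obstacle is establishing $\rk(\hat A T_k) = k$ w.h.p.\ while keeping the sparsity parameters dependent only on $k$, not on the possibly much larger $r = \rk(\hat A)$. A naive application of Lemma~\ref{lem:add_random} to $\hat A T_k$ would demand sparsity scaling like $N \log^2 N / (N - r - j + 1)$, which blows up as $r \to N$ and would violate the $\tO(\nnz(A))$ sparsity budget for sparse $A$ of high rank. I avoid this by arguing about $t_j$ directly: the condition $\hat A t_j \notin \operatorname{span}(\hat A t_1, \ldots, \hat A t_{j-1})$ is equivalent to $t_j \notin \hat A^{-1}(V_{j-1})$, whose orthogonal complement $W_{j-1} \subseteq \R^N$ has dimension $r - j + 1 \geq k - j + 1$. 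By Lemma~\ref{lem:get_nz}, $W_{j-1}$ contains some vector $w$ with at least $k - j + 1$ nonzero coordinates; then the support-hitting calculation that underlies Lemma~\ref{lem:add_random}, together with property~\ref{prop:gp5} of Lemma~\ref{lem:gaussian} applied to $\iprod{w, t_j}$, shows that sparsity $s_j = O(N \log^2 N / (k - j + 1))$ already suffices to ensure $\iprod{w, t_j} \neq 0$ w.h.p., and hence $\rk(\hat A[t_1, \ldots, t_j]) = j$ w.h.p.
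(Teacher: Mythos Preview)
Your proof is correct but takes a genuinely different route from the paper. The paper never multiplies $A$ by anything: to test $\rk(A)\ge k$ it simply appends $m-k$ random sparse columns (via Lemma~\ref{lem:add_random}) to $A$ itself, obtaining a matrix with full column span iff $\rk(A)\ge k$, and then appends random rows to square it off. Because the random columns live in $\R^m$ and the relevant orthogonal complement has dimension $m-r$ with $r$ the \emph{current} rank lower bound, the sparsity parameters in Lemma~\ref{lem:add_random} apply verbatim and the ``obstacle'' you identify never arises. By contrast, you first compress $\hat A$ to $k$ columns via $M_k=\hat A T_k$ and only then extend; this forces you to analyze $\hat A^{-1}(V_{j-1})$ and to invoke Lemma~\ref{lem:get_nz} with the dimension bound $r-j+1\ge k-j+1$, which is a nice argument but strictly more work than the paper needs. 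Your payoff is that $B_k$ is square immediately, so you skip the paper's separate full-column-span $\to$ full-rank padding step; the paper's payoff is that it avoids the matrix product $\hat A T_k$ and the accompanying Chernoff/row-sparsity bookkeeping.

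Two small points. First, your binary search should range over $\{0,\dots,\min(m,n)\}$ rather than $\{0,\dots,N\}$; as written, $O(\log N)=O(\log m)$ is not justified when $n\gg m$, whereas the rank is at most $m$ so $O(\log m)$ queries suffice. Second, under the paper's convention $\tO(f)=O(f\cdot\plog(f))$, the per-row nonzero count of $T_k$ is $O(\plog(N))$, not $\tO(\log N)$; your final bound $\nnz(M_k)=\tO(\nnz(A))$ is still correct since $\nnz(A)\ge N$, but the intermediate statement should be adjusted.
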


\begin{proof} Consider the linear system $Ax = b$. We will show that deciding whether there exists a satisfying solution reduces to the Full rank problem through the following sequence of reductions:

Linear System Decision $\leq_T$ Rank-Finding Problem $\leq_T$ Rank-Decision Problem $\leq_T$ Full Column Span $\leq_T$ Full Rank Problem 

We will show that each of these steps preserves the sparsity of the matrix. Let us now describe every reduction above:

\begin{enumerate}
    \item Suppose we want to check satisfiability of $Ax = b$. Consider the matrix $A' = [A ~~ b~]$, i.e. the matrix formed by appending the column vector $b$ to the matrix $A$. The system $Ax = b$ has a solution if and only if $rk(A) = rk(A')$, so it suffices to  find the rank of the matrices $A,A'$. Hence we have reduced the linear system decision problem to the Rank-Finding Problem. Also note that $\nnz(A') = s+m = O(\nnz(A))$ as $\nnz(A) \geq m$. 
    \item Suppose we want to find the rank of a matrix $M \in \R^{m \times n}$. One can see that the rank of $M$ is the largest integer $k \in [m]$ for which $\rk(M) \geq k$. Consider the Rank-Decision problem: Given a matrix $M$, is $\rk(M) \geq k$ or $< k$? Thus to find rank of $M$ one can do a binary search over $k \in [m]$, to find the largest $k$ for which $\rk(M) \geq k$, using $\log m$ calls to the rank-decision problem.
    \item Given a matrix $M \in \R^{m \times n}$, suppose we want to check if $\rk(M) \geq k$ or $< k$. From Lemma~\ref{lem:add_random}, we know that if $\rk(M) \geq k$, and we added a random vector $z_1$ sampled according to the procedure therein, the rank of the new matrix $B_1 = [M ~ z_1]$ will be $\geq \min(r+1, m)$ whp. The lemma gives us that the sparsity of $z_1$ is $\min(m, m\log^2m/(m-k+1))$. Similarly, consider the matrix $B_{m-k} = [M ~ z_1 ~ \ldots ~z_{m-k}]$, where each $z_i$ is sampled according to the sampling procedure in Lemma~\ref{lem:add_random} and $\nnz(z_i) = \min(m, m\log^2m/(m-k-i+2))$. One can prove by a union bound on $i$, that whp, if the rank of matrix $B$ was $\geq \min(k,m)$ then the rank of each intermediate matrix $B_i$ is $\geq \min(k+i,m)$ whp, and hence $\rk(B_{m-k}) \geq \min(k+m-k,m) = m$, that is, $B_{m-k}$ has full column span.
    
    On the other hand, if $\rk(B) < k$, then since we have appended only $m-k$ vectors in creating $B_{m-k}$, the rank of $B_{m-k} < m$. 
    
    Hence to check if $\rk(M) \geq k$ or not, it suffices to check if $\rk(B_{m-k}) = m$ or not. Note that the sparsity of $B_{m-k}$ is $\nnz(M) + \sum_i \nnz(z_i) \leq \nnz(M) + m\log^2m\sum_{i \in [m]} 1/i = \nnz(M) + \tO(m) = \tO(\nnz(M))$.
    
    \item Given a matrix $M \in \R^{m \times n}$, suppose we want to check if $\rk(M) = m$ or $< m$, i.e. does it have Full column span or not. The problem is trivial if $m > n$ as then the rank of the matrix is $\leq n$ and the instance is a NO instance. So we will assume that $m \leq n$. In this case, similar to the reduction above we will add rows $z_1, z_2, \ldots, z_{m-n}$. One can prove that this preserves sparsity and creates a full rank matrix whp if $\rk(M) = m$ to begin with.
\end{enumerate}

Since each step succeeds whp, we can take a union bound over all steps to say that the whole reduction succeeds whp. Then we can compose all the reductions to get that the Linear system decision problem reduces to $O(\poly\log m)$ instances of the Full rank problem.
\end{proof}

We will now prove that the full rank problem reduces to the $(1/n^{O(1)})$-Approximate linear decision problem. As we will see in section~\ref{sec:eq} the full-rank problem is equivalent to the linear decision problem. Hence we can think of the next step as reducing the linear decision problem to the $(1/n^{O(1)})$-Approximate linear decision problem i.e. increasing an arbitrarily small gap between the YES and NO cases to an inverse-polynomial gap.

The idea behind the proof is that for a full-rank square matrix $A \in \mathbb{R}^{n \times n}$ every vector $b \in \mathbb{R}^{n}$ belongs in the column space of $A$ and hence the linear system $(A, b)$ is satisfiable. 
On the other hand if $\rk(A) < n$ we expect a random vector $b$ to be outside the column space of $A$ and hence we expect the linear system $(A, b)$ to be unsatisfiable. 
We show a strengthened version of the previous statement by proving that w.h.p. for a random Gaussian vector $b$, it holds that for all $x$, 
$\norm{Ax-b} \geq \eps(n)\norm{b}$ 
for some $\eps(n) = 1/n^{O(1)}$. 
We also show that we can rescale the rows of the linear system $(A, b)$ to get the system $(A', \mathbf{1}^n)$ while maintaining the property that $\norm{A'x-\mathbf{1}^n} \geq \eps(n)\norm{\mathbf{1}^n}$. We perform the rescaling to obtain $b = \mathbf{1}^n$ as that will simplify our later reductions.

\begin{lemma}\label{lem:fc_to_lda}
Consider a matrix $M \in \R^{n \times n}$. There exists a randomized Turing reduction from the problem of checking whether $M$ has full rank to the Linear Decision $(1/n^{O(1)})$-Approximation problem. The reduction runs in time $\tO(\nnz(M))$, produces $\plog(n)$ instances of the form $(M' \in \mathbb{R}^{n \times n} ,\mathbf{1}^n)$ where in the YES case $M'$ is a full rank matrix, and works w.h.p.
\end{lemma}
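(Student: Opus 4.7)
The plan is to reduce the Full Rank problem on $M \in \R^{n \times n}$ to the $(1/n^{O(1)})$-Approximate Linear Decision problem by sampling a random Gaussian right-hand side $b \sim \N(0,1)^n$ and then rescaling the resulting system $(M,b)$ so that the right-hand side becomes $\mathbf{1}^n$. The intuition is simple: if $M$ is full rank then $b \in \colspace(M)$, so $(M,b)$ is exactly satisfiable and is a YES instance for any gap; if $M$ is rank-deficient then a Gaussian $b$ has a noticeable component in $W := \colspace(M)^\perp$, and by Lemma~\ref{lem:solns} every $x$ satisfies $\norm{Mx-b}_2 \geq \norm{P_W(b)}_2$, which turns out to be at least $\norm{b}_2/n^{O(1)}$ w.h.p.

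To make the NO case quantitative, fix any unit vector $v \in W$ (which exists because $\dim W \geq 1$). By Lemma~\ref{lem:gaussian}(\ref{prop:gp5}) the scalar $\iprod{b,v}$ is distributed as $\N(0,1)$, so by Lemma~\ref{lem:gaussian}(\ref{prop:gp1}) we have $\abs{\iprod{b,v}} \geq 1/n^{c_1}$ with probability $1 - 1/n^{c_1}$ for any constant $c_1$. Lemma~\ref{lem:gaussian}(\ref{prop:gp3}) also gives $\norm{b}_2 \leq n^{O(1)}$ w.h.p. Since $\norm{P_W(b)}_2 \geq \abs{\iprod{b,v}}$, combining with Lemma~\ref{lem:solns} yields $\min_x \norm{Mx-b}_2 / \norm{b}_2 \geq 1/n^{O(1)}$ in the NO case, w.h.p.

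To convert $(M,b)$ into a system with right-hand side $\mathbf{1}^n$, observe that the $b_i$'s are almost surely nonzero, so $D := \text{diag}(1/b_1, \ldots, 1/b_n)$ is well-defined and invertible. Set $M' := DM$, so that $Mx = b$ iff $M'x = \mathbf{1}^n$; hence the YES case is preserved and $M'$ is full rank iff $M$ is full rank. For the NO case, writing $b_{\max} := \max_i \abs{b_i}$,
\[
\norm{M'x - \mathbf{1}^n}_2^2 \;=\; \sum_i \bigl((Mx)_i - b_i\bigr)^2 / b_i^2 \;\geq\; \norm{Mx - b}_2^2 / b_{\max}^2,
\]
so using $\norm{\mathbf{1}^n}_2 = \sqrt{n}$ the relative gap is at least $\min_x \norm{Mx-b}_2 / (b_{\max}\sqrt{n})$. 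By Lemma~\ref{lem:gaussian}(\ref{prop:gp2}) and a union bound, $b_{\max} \leq n^{O(1)}$ w.h.p., so this remains $1/n^{O(1)}$.

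The main subtlety is controlling the distortion introduced by the diagonal rescaling: if some $b_i$ is extremely small, the corresponding row of $M'$ is blown up and could in principle ruin the NO-side gap; both the anti-concentration bound (\ref{prop:gp1}) and the tail bound (\ref{prop:gp2}) in Lemma~\ref{lem:gaussian} are needed to keep every $\abs{b_i}$ inside a polynomial range w.h.p., so that only polynomially many factors of $n$ are lost across the reduction. The whole procedure samples $b$ in $O(n)$ time and forms $M'$ in $\tO(\nnz(M))$ time (since $D$ is diagonal), produces a single instance (well within the $\plog(n)$ budget stated), and succeeds with high probability via a union bound over the Gaussian events above.
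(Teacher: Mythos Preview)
Your approach is essentially identical to the paper's: sample a Gaussian right-hand side $b$, argue that in the NO case $b$ has a $1/\poly(n)$ component orthogonal to $\colspace(M)$ (via Lemma~\ref{lem:gaussian} and Lemma~\ref{lem:solns}), then diagonally rescale so the right-hand side becomes $\mathbf{1}^n$. The one real gap is your claim that a \emph{single} instance already achieves the stated w.h.p.\ guarantee, which the paper defines as probability $1 - 1/n^{\log n}$. The anti-concentration bound Lemma~\ref{lem:gaussian}(\ref{prop:gp1}) only gives $\Pr[|\iprod{b,v}| \leq 1/n^{c_1}] \leq 1/n^{c_1}$, and likewise the per-coordinate bounds on $|b_i|$ each fail with probability $1/\poly(n)$; a union bound therefore yields success probability $1 - 1/\poly(n)$, not $1 - 1/n^{\log n}$. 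Pushing $c_1$ up to $\log n$ would fix the probability but destroy the $1/n^{O(1)}$ gap. This is precisely why the lemma statement allows $\plog(n)$ instances: the paper obtains one instance with success probability $1 - O(1/\sqrt{n})$ and then amplifies by repeating $\plog(n)$ times and taking a majority vote. With that amplification step added, your argument is complete.

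A minor aside: your inequality $\norm{M'x-\mathbf{1}^n}_2^2 \geq \norm{Mx-b}_2^2/b_{\max}^2$ in fact shows that only an \emph{upper} bound on the $|b_i|$ is needed to preserve the NO-side gap, so the lower bound you flag as a subtlety is used only to keep $D$ well-defined --- and that holds almost surely anyway. The paper's version of the rescaling argument does use both bounds, so yours is slightly cleaner on this point.
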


\begin{proof}
Let $M$ be an $n \times n$ matrix, and suppose that we need to check if it has full rank or not. We will reduce this problem to a linear system, by sampling a random vector $b \sim \N(0,1/n)^n$ and checking satisfiability of the linear system $(M,b)$. We will now show the following properties:
\begin{enumerate}
    \item When $M$ has full rank, then the linear system is satisfiable i.e. there exists an $x$ such that $Mx = b$.\label{prop:1}
    \item When $M$ does not have full rank, then for all $x \in \R^n$, $\norm{Mx - b}_2 \geq \norm{b}/n^{12}$ w.p. $1-O(1/\sqrt{n})$.\label{prop:2}
\end{enumerate}

Property~\ref{prop:1} above holds for the linear system since $M$ has full column span if it has full rank, which means that $Mx = b$ is satisfiable for all $b \in \R^n$. We will now prove Property~\ref{prop:2}, by showing that w.p. $1 - O(1/\sqrt{n})$, $b$ has a large component in $\colspace(M)^{\perp}$ i.e. the subspace orthogonal to $\colspace(M)$.

If $M$ does not have full rank, then $\rk(\colspace(M)) < n$. Let $V = \colspace(M)$ and $W = V^{\perp}$; since $V$ has dimension $< n$, we have that $\dim(W) \geq 1$. So there exists a unit vector $w \in \R^n$, with $\norm{w} = 1$, such that $\iprod{w, a} = 0, \forall a \in \colspace(M)$. From Lemma~\ref{lem:solns}, we have that any solution $x \in \R^n$ will have error, $\norm{Mx - b} \geq \norm{P_W(b)} \leq \abs{\iprod{w,b}}$ as $w \in W$ and $w$ is a unit vector. As $b \sim \N(0,1/n)^n$ we have that $\iprod{w,b}$ is distributed as the Gaussian $\N(0,\norm{w}_2^2/n) = \N(0,1/n)$ by Lemma~\ref{lem:gaussian}. So we should expect $|\iprod{w,b}| \approx 1/\sqrt{n} \approx \norm{b}/\sqrt{n}$. Indeed we have that, 
\begin{align*}
    \Pr\left[\abs{\iprod{w , b}} \leq \frac{\norm{b}}{n^2} \right] &\leq 
    \Pr\left[\abs{\iprod{w , b}} \leq \frac{\norm{b}}{n^2} \mathrel{\Big|} \norm{b} \leq n \right] + \Pr[\norm{b} \geq n] \\
    &\leq \Pr\left[\abs{\iprod{w , b}} \leq \frac{1}{n} \right] + \Pr[\norm{b}^2 \geq n^2] \\
    &\leq \frac{\sqrt{n}}{n} + \Pr[\norm{b}^2 \geq n^2] \hspace{10pt}\text{ Using property~\ref{prop:gp1} of Lemma~\ref{lem:gaussian}}\\
    &\leq \frac{1}{\sqrt{n}}+\frac{1}{n^2} \hspace{10pt}\text{ Using property~\ref{prop:gp3} of Lemma~\ref{lem:gaussian}}\\
    &\leq O\left(\frac{1}{\sqrt{n}}\right)
\end{align*}
which proves property~\ref{prop:2}. 

We have now reduced the Full column rank problem to the linear system $(M,b)$, such that in the YES case, we have that $Mx = b$ and in the NO case, we have a gap of $1/n^2$. We will now show that one can reduce to a linear system $(M', \mathbf{1}^n)$. To get the $i^{th}$ row of $M'$, scale the $i^{th}$ row of $M$ by $1/b_i$, that is, $M'_{ij} = M_{ij}/b_i$ (we can do this as $b_i \neq 0$ whp). It is easy to see that if there was a solution $x$ for which $Mx = b$, then we also have that $M'x = \mathbf{1}^n$. 

We will now show that in the NO case, if for all $x \in \R^n$, we had that $\norm{Mx - b} \geq \norm{b}/n^2$, then for all $x \in \R^n$ it holds that $\norm{M'x - \mathbf{1}^n} \geq \norm{\mathbf{1}^n}/n^{12}$. First note that, by the properties of Gaussians in Lemma~\ref{lem:gaussian}, w.p. $\geq 1-O(1/\sqrt{n})$, every $b_i$ satisfies $1/n^2 \leq \abs{b_i} \leq n^2$. Note the if one of these conditions is not satisfied we can just consider our reduction to have returned the wrong answer. This adds $O(1/\sqrt{n})$ to our error probability which now in total is $O(1/\sqrt{n})+O(1/\sqrt{n})=O(1/\sqrt{n})$. Assuming every $b_i$ satisfies $1/n^2 \leq \abs{b_i} \leq n^2$ we have the following sequence of inequalities:

\begin{align*}
\norm{M'x - \mathbf{1}^n}_2^2 &= \sum_i ((M'x)_i - 1)^2 \\ 
&= \sum_i \frac{1}{b_i^2}((Mx)_i - b_i)^2 \\
&\geq \frac{1}{n^4}\norm{Mx - b}_2^2, ~~~~~~~\text{since }\forall i, |b_i| \leq n^2\\
&\geq \frac{\norm{b}^2}{n^8} \\
&\geq \frac{\norm{\mathbf{1}^n}^2}{n^{12}}, ~~~~~~~\text{since }\forall i, |b_i| \geq 1/n^2,
\end{align*}
which implies that for all $x \in \R^n$, $\norm{M'x - \mathbf{1}^n} \geq \norm{\mathbf{1}^n}/n^{12}$. 

This randomized algorithm currently works with probability $1 - O(1/\sqrt{n})$; one can amplify the success probability to $1 - 1/n^{\log n}$ by $\plog(n)$ calls to the $(1/n^{12}) = (1/n^{O(1)})$-approximate linear decision problem.
\end{proof}

We will now show that one can amplify the error in the NO case from $1/n^{O(1)}$ to $1 - 1/n^c$ for all constants $c$.

The following lemma works for all $\eps(n), \delta(n)$, but we only state it for $\eps(n) = 1/n^{O(1)}, \delta(n) = 1/n^c$ to maintain consistency with the later sections in which we will work over WordRAM.
\begin{lemma}\label{lem:amp}
For all constants $c$ and $\eps(n) = 1/n^{O(1)}, \delta(n) = 1/n^c$, there exists a deterministic many-one reduction from the $\eps(n)$-Approximate linear search problem on the linear system $(A \in \R^{n \times n}, \mathbf{1}^n)$ to the $(1 - \delta(n))$-Approximate linear search problem on the linear system $(A' \in \R^{n \times n}, \mathbf{1}^n)$, with $\nnz(A') = O(\nnz(A))$. The reduction runs in time $\tO(\nnz(A))$. Additionally if $A$ is full rank then the matrix $A'$ produced is also full rank. 

As this is a deterministic many-one reduction we also get a gap-amplifying reduction for the $\eps(n)$-Approximate linear decision problem with the same parameters.
\end{lemma}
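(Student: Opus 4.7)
The plan is to amplify the gap by multiplying $A$ on the left by a carefully chosen sparse matrix $M$ that fixes $\mathbf{1}^n$ but stretches every direction orthogonal to it by a polynomial factor. Concretely, let $L$ be the Laplacian of the cycle graph $C_n$ on $n$ vertices (so each row of $L$ has at most three non-zero entries and $L\mathbf{1}^n = 0$), set $M := I + \alpha L$ for a parameter $\alpha = \poly(n)$ to be fixed, and output $A' := MA$. Because each row of $M$ has $O(1)$ non-zero entries, $\nnz(A') \leq 3\,\nnz(A) = O(\nnz(A))$, and $A'$ is formed in $\tO(\nnz(A))$ time. The matrix $M$ is symmetric with all eigenvalues at least $1$, hence invertible; therefore $\rk(A') = \rk(A)$ and full rank is preserved. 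Since $M\mathbf{1}^n = \mathbf{1}^n$, the equations $Ax = \mathbf{1}^n$ and $A'x = \mathbf{1}^n$ have the same solution set, which handles the completeness / YES side of the reduction.

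The key analytic step is a pointwise residual inequality. For any $y \in \R^n$, let $r := Ay - \mathbf{1}^n$; using $M\mathbf{1}^n = \mathbf{1}^n$ we get $A'y - \mathbf{1}^n = Mr$. Decompose $r = r_\parallel + r_\perp$ with $r_\parallel \in \mathrm{span}(\mathbf{1}^n)$ and $r_\perp \perp \mathbf{1}^n$. Since $L$ is symmetric with kernel $\mathrm{span}(\mathbf{1}^n)$, it sends $r_\parallel$ to $0$ and keeps $r_\perp$ inside $\mathbf{1}^{n\,\perp}$, giving
\begin{equation*}
\norm{A'y - \mathbf{1}^n}^2 \;=\; \norm{r_\parallel}^2 + \norm{(I+\alpha L)r_\perp}^2 \;\geq\; \norm{r_\parallel}^2 + \mu^2 \norm{r_\perp}^2,
\end{equation*}
where $\mu := 1 + \alpha \lambda_2(L) \geq 1 + \alpha/n^2$ by Lemma~\ref{lem:expander}.

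To finish the soundness / gap amplification, I would minimize the right-hand side over $y \in \R^n$ in the NO case. Write $\mathbf{1}^n = p + q$ with $p \in \colspace(A)$ and $q \in \colspace(A)^\perp$; by Lemma~\ref{lem:solns} we have $\norm{q}^2 \geq \eps(n)^2 n$. Parameterize $u := Ay = \beta p + w$ with $w \in \colspace(A)$, $w \perp p$. Then $\iprod{u,\mathbf{1}^n} = \beta\norm{p}^2$ (using $w \perp q$), which expresses $r_\parallel$ and $r_\perp$ purely in terms of $\beta$ and $w$, reducing the optimization to a one-variable convex quadratic in $\beta$ whose optimum (at $w = 0$) evaluates to $\frac{n\mu^2 \norm{q}^2}{n + (\mu^2-1)\norm{q}^2} \geq n \cdot \frac{\mu^2 \eps(n)^2}{1 + \mu^2 \eps(n)^2}$. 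Choosing $\alpha$ so that $\mu^2 \geq n^c / \eps(n)^2$---which only requires $\alpha = \poly(n)$ since $\eps(n) = 1/n^{O(1)}$---pushes this ratio above $(1-1/n^c)^2 = (1-\delta(n))^2$, giving the required gap for $(A',\mathbf{1}^n)$. The main obstacle is producing a single sparse $M$ that simultaneously fixes $\mathbf{1}^n$, expands its orthogonal complement by a $\poly(n)$ factor, and has sparsity $O(n)$; the cycle Laplacian achieves all three at once, which is why Lemma~\ref{lem:expander} is exactly the spectral bound that is needed.
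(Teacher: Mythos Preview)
Your construction is essentially the paper's: both set $A'=MA$ for a sparse cyclic matrix $M$ that fixes $\mathbf{1}^n$ and dilates its orthogonal complement, and both rely on Lemma~\ref{lem:expander} for the cycle's spectral gap. The paper takes the asymmetric $M$ with $M_{ii}=t+1$, $M_{i,i+1}=-t$; your symmetric choice $M=I+\alpha L$ is a minor variant that achieves the same effect with an arguably cleaner spectral analysis.

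The gap is that you argue only the \emph{decision} reduction. Your ``soundness'' paragraph minimizes over all $y$ in the NO case and shows the gap of $(A',\mathbf{1}^n)$ is at least $1-\delta(n)$. But the lemma is stated for the \emph{search} problem, which demands more: given a concrete $x'$ with $\norm{A'x'-\mathbf{1}^n}\le(1-\delta(n))\norm{\mathbf{1}^n}$, you must \emph{produce} an $x$ with $\norm{Ax-\mathbf{1}^n}\le\eps(n)\norm{\mathbf{1}^n}$ in $\tO(\nnz(A))$ time. Nothing in your plan addresses this recovery step; your optimization over $y$ uses the NO-case promise and does not tell you what to do with a specific approximate solution.

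The paper closes this by reading off, from the pointwise bound (your inequality $\norm{A'x'-\mathbf{1}^n}^2\ge\norm{r_\parallel}^2+\mu^2\norm{r_\perp}^2$ with $r=Ax'-\mathbf{1}^n$), that $\norm{r_\perp}\le\sqrt{n}/\mu$ and, writing $r_\parallel=-\gamma\,\mathbf{1}^n$, that $\abs{\gamma}\le 1-\delta(n)$; then $x:=x'/(1-\gamma)$ satisfies $Ax-\mathbf{1}^n=r_\perp/(1-\gamma)$, and choosing $\mu\ge 1/(\eps(n)\delta(n))$ finishes. Your inequality gives exactly these bounds, so the fix is a short extra paragraph---but without it the search statement is not established, and the decision corollary at the end of the lemma does not by itself imply the search claim.
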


\begin{proof}
We create the new linear system $(A' \in \R^{n \times n}, \mathbf{1}^{n})$, by combining equations from the linear system $(A,b)$ in a cyclic manner. Let the vectors $a_i$'s be the rows of $A$. Then, create the $i^{th}$-equation in the new linear system as follows:

\[ (t+1) \iprod{a_i,x} - t\iprod{a_{i+1},x} = 1,\]
where the addition wrt $i$ is modulo $n$.

It is easy to see that that any $x$ which satisfies the original linear system $(A, \mathbf{1}^{n})$ also satisfies the new linear system $(A', \mathbf{1}^{n})$. We have that $A' = M \cdot A$, where $M \in \R^{n \times n}$, and $$M = \begin{bmatrix}
(t+1) & -t & 0 &\ldots & 0 & 0\\
0 & (t+1) & -t &\ldots & 0 & 0\\
0 & 0 & (t+1) &\ldots & 0 & 0\\
& & & \vdots & \\
0 & 0 & 0 &\ldots & (t+1) & -t\\
-t & 0 & 0 &\ldots & 0 & (t+1)\\
\end{bmatrix}$$

i.e. $M_{ii} = t+1$, $M_{i,i+1} = -t$ and $M_{ij} = 0$ otherwise. It is easy to check that $M$ has full rank. Hence if $A$ had full rank then so does $A' = M \cdot A$. The operation of multiplying $M$ to a vector amplifies the components orthogonal to $1^n$ and does not modify the components along $1^n$. As our target vector is $1^n$ it remains constant under the application of $M$. We will now argue that pre-multiplying the linear system by $M$ amplifies the error.

Let $x \in \R^n$, with corresponding error vector: $z_x = (z_1, z_2, \ldots, z_n)$ i.e. $Ax-\mathbf{1}=z_x$ and $\iprod{a_i,x} - 1 = z_i, \forall i \in [n]$. Let $z_x = y_x-\gamma_x\mathbf{1}^n$ where $y_x \cdot \mathbf{1}^m = 0$. Let us calculate the error of $x$ in the new linear system:
\begin{align}
\begin{split}\label{eq:t}
    \norm{A'x-\mathbf{1}^{n}}^2 &= \sum_{i \in [n]} \left((t+1) \iprod{a_i,x} - t\iprod{a_{i+1},x} - 1\right)^2 \hspace{20pt} \text{[Addition in $i$ is cyclic.]}\\
    &= \sum_{i \in [n]} \left((t+1)z_i - tz_{i+1}\right)^2 \hspace{15pt} \text{ [since  $\iprod{a_i,x} - 1 = z_i$]}\\ 
    &= \sum_{i \in [n]} (t^2 + 2t + 1)z_i^2 + t^2 z_{i+1}^2 - 2t(t+1)z_iz_{i+1} \\
    &= \sum_{i \in [n]} t(t+1)(z_i^2+z_{i+1}^2)+\frac{1}{2}(z_i^2+z_{i+1}^2) + \left(t+\frac{1}{2}\right)(z_i^2-z_{i+1}^2) - 2t(t+1)z_iz_{i+1}\\
    &= \sum_{i \in [n]} t(t+1) (z_i-z_{i+1})^2+\sum_{i \in [n]}\frac{1}{2}(z_i^2+z_{i+1}^2) + \sum_{i \in [n]}\left(t+\frac{1}{2}\right)(z_i^2-z_{i+1}^2)\\
    &= \sum_{i \in [n]} t(t+1) (z_i-z_{i+1})^2+\sum_{i \in [n]}\frac{1}{2}(z_i^2+z_{i+1}^2) \hspace{15pt} \text{[As the addition is cyclic]}\\
    &= \sum_{i \in [n]}z_i^2+\sum_{i \in [n]} t(t+1) (z_i-z_{i+1})^2 \hspace{15pt} \text{[As the addition is cyclic]}\\
    &= \norm{z_x}^2 + t(t+1)\sum_{i \in [n]} (z_i-z_{i+1})^2\\
    &= \norm{z_x}^2 + t(t+1)\sum_{i \in [n]} (y_i-y_{i+1})^2 \hspace{25pt} \text{ [As } z_i = y_i-\gamma_x \text{ and } z_{i+1} = y_{i+1}-\gamma_x\text{]} \\
    &\geq \norm{z_x}^2+t(t+1)\norm{y_x}^2/n^2 \hspace{50pt} \text{[ By Lemma~\ref{lem:expander} and as $y \cdot \mathbf{1}^{n} = 0$]}\\
\end{split}
\end{align}

If the input instance $(A, \mathbf{1}^{n})$ is a satisfiable instance then there exists an $x$ such that $z_x = y_x = \mathbf{0}^n$ and also $\gamma_x = 0$ which implies that $$\norm{A'x-\mathbf{1}^{n}}^2 = \norm{z_x}^2 + t(t+1)\sum_{i \in [n]} (y_i-y_{i+1})^2 = 0$$
hence the resulting instance when starting from a satisfiable instance is a satisfiable instance. 

On the other hand if we started from an unsatisfiable instance of the we have that $\norm{Ax - \mathbf{1}^n} \geq \eps(n)\norm{\mathbf{1}^n}$ where $\eps(n) = 1/n^{O(1)}$.

Set $t = n/(\delta(n)\eps(n))$. Now suppose we have access to an $x'$ such that $\norm{A'x'-\mathbf{1}^n} \leq (1-\delta(n)) \norm{1^{n}}$ then we will prove that we can find $x$ such that $\norm{Ax-\mathbf{1}^n} \leq \eps(n) \norm{1^{n}}$ in $\tO(\nnz(A))$ time. This will imply the lemma statement.

Let $Ax'-\mathbf{1}^n = z_{x'} = y_{x'}-\gamma_{x'}\cdot\mathbf{1}^n$ where $\iprod{z_{x'}, \mathbf{1}^n} = 0$ then by Equation~\ref{eq:t} we know that $\norm{A'x'-\mathbf{1}^n}^2 \geq t(t+1)\norm{y_{x'}}^2/n^2$. As $\norm{A'x'-\mathbf{1}^n} \leq (1-\delta(n)) \norm{1^{n}}$ we get that $t^2\norm{y_{x'}}^2/n^2 \leq \norm{1^{n}}^2$ which gives $$\norm{y_{x'}} \leq n\norm{1^{n}}/t = \delta(n)\eps(n)\norm{1^{n}}$$ 

We also know that $\norm{A'x'-\mathbf{1}^n} \geq \norm{z_{x'}} \geq \abs{\gamma_{x'}}\norm{1^{n}}$. This combined with $\norm{A'x'-\mathbf{1}^n} \leq (1-\delta(n)) \norm{1^{n}}$ gives us that $$\abs{\gamma_{x'}} \leq 1-\delta(n)$$

Let $x = x'/(1-\gamma_{x'})$, which can be easily calculated in $\tO(\nnz(A))$ time. For $x$, the error in the original linear system $(A, \mathbf{1}^n)$ is 
\begin{align*}
    \norm{Ax-\mathbf{1}^n} &= \normm{\frac{Ax'}{1-\gamma_{x'}}-\mathbf{1}^n} \\
    &= \normm{\frac{y_{x'}+\mathbf{1}^n(1-\gamma_{x'})}{1-\gamma_{x'}}-\mathbf{1}^n} \hspace{15pt} \text{[As $Ax'-\mathbf{1}^n = y_{x'}-\gamma_{x'}\cdot\mathbf{1}^n$]}\\
    &= \normm{\frac{y_{x'}}{1-\gamma_{x'}}} \hspace{15pt}\\
    &\leq \normm{\frac{y_{x'}}{\delta(n)}} \hspace{15pt} \text{As $\abs{\gamma_{x'}} \leq 1-\delta(n)$}\\
    &\leq \eps(n)\norm{\mathbf{1}^n} \hspace{15pt} \text{As $\norm{y_{x'}} \leq \delta(n)\eps(n)\norm{1^{n}}$}\\
\end{align*}
\end{proof}

We can now combine all the lemmas above to get the proof of Theorem~\ref{thm:red-main-formal}.

\begin{proof}[Proof of Theorem~\ref{thm:red-main-formal}]
We have proved the following sequence of reductions which preserve sparsity:

\begin{enumerate}
    \item Lemma~\ref{lem:rank}: Linear decision problem $\leq_T$ Full rank problem
    \item Lemma~\ref{lem:fc_to_lda}: Full rank problem $\leq_T$ $1/n^{O(1)}$-Approximate linear decision problem
    \item Lemma~\ref{lem:amp}: $1/n^{O(1)}$-Approximate decision problem $\leq_T$ $(1 - 1/n^c)$-Approximate linear decision problem.
\end{enumerate}

Note that in the YES case of Full-Rank problem we have a square full rank matrix, and this is propagated through Lemma~\ref{lem:fc_to_lda} and Lemma~\ref{lem:amp} hence the final instance we produce has a full rank matrix in the YES case. Since each of these reductions work whp, one can compose them to get the theorem statement.
\end{proof}

\section{Hardness of finding \texorpdfstring{$L_2$}{L2}-Approximate solutions on Real RAM}\label{sec:l2}
In this section, we elaborate on the implications of our main reduction from the previous section (Theorem~\ref{thm:red-main-formal}). Below is the map of reductions we showed in the previous sections. We will introduce conjectures for the Rank-finding problem in this section and given the reductions, the conjectures will imply conditional hardness of the approximate linear search problem. 

All the results in this section are for the RealRAM but can be obtained over the WordRAM too. In Section~\ref{sec:l2_w}, we prove the conditional hardness of the approximate linear search problem over general matrices in the WordRAM model.

\begin{figure}[H]
\begin{tikzcd}[column sep=15ex, row sep=10ex]
\text{Rank-Finding} 
\arrow[d, leftrightarrow, "\text{Lemma}~\ref{lem:eq}"] 
\arrow[r, "\text{Lemma}~\ref{lem:rank}+\ref{lem:fc_to_lda}"] & 
\begin{tabular}{c}
$1/n^{O(1)}$-Approximate \\
Linear Decision
\end{tabular}
\arrow[d, "\text{Decision to Search (Lemma~\ref{lem:dec_to_search-r})}"] 
\arrow[r, "\text{Lemma}~\ref{lem:amp}"] &
\begin{tabular}{c}
$(1-1/n^{\Omega(1)})$-Approximate \\
Linear Decision
\end{tabular}
\arrow[d, "\text{Decision to Search}"] 
\\
\text{Linear Decision} &
\begin{tabular}{c}
$1/n^{O(1)}$-Approximate \\
Linear Search
\end{tabular}
\arrow[r, "\text{Lemma}~\ref{lem:amp}"] &
\begin{tabular}{c}
$(1-1/n^{\Omega(1)})$-Approximate \\
Linear Search
\end{tabular}
\end{tikzcd}
\caption{Reductions on RealRAM preserving sparsity (up to $\plog(n)$ factors) and dimension (up to constant factors).}
\end{figure}
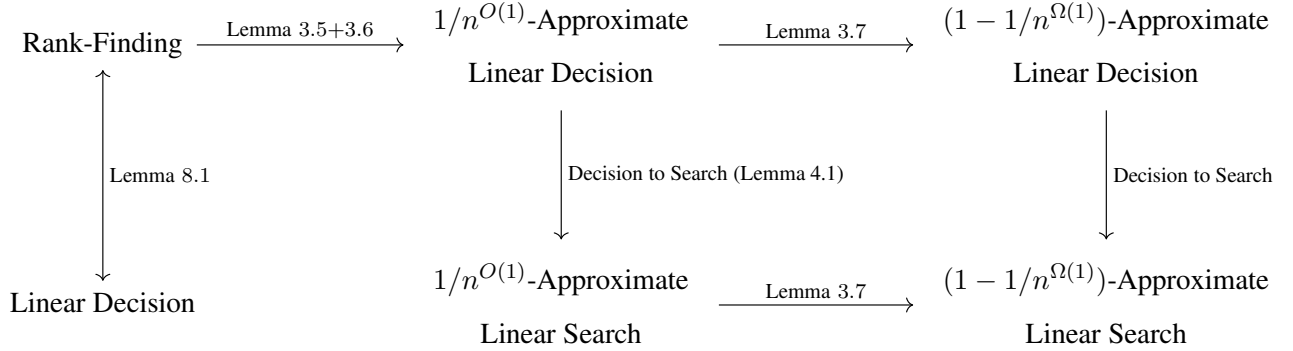

Note that all the lemmas pointed out here (except for the decision to search reduction) were shown in the previous section. The decision to search reduction is straightforward to carry out in the RealRAM and we formally prove it in Lemma~\ref{lem:dec_to_search-r}.

In the sections below we discuss the hardness of the approximate linear search problem over general matrices, and then the case of restricted classes of linear systems - sparse linear systems and linear systems given by positive-semidefinite matrices. Our results give
tight conditional hardness for all the problems considered.

\subsection{General linear systems}
We base the hardness result for the approximate linear search problem under the following conjecture for the rank-finding problem:

\begin{reminder}{Conjecture~\ref{conj:rk-dense}}
Finding the rank of a matrix $A \in \R^{m \times n}$ with $m = O(n)$ in the RealRAM model of computation is $\tOm(n^{\omega})$-hard. 
\end{reminder}

We now prove the decision to search reduction.

\begin{lemma}\label{lem:dec_to_search-r}
If there exists a $O(t(n))$ time algorithm to solve $\eps$-Approximate linear search problem for a linear system with sparsity $s$ then there exists a $O(t(n)+sn)$ time algorithm to solve $\eps$-Approximate linear decision problem for linear system with sparsity $s$.
\end{lemma}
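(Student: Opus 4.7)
The plan is to exploit what is essentially the ``easy'' direction of a decision-to-search equivalence: any $\eps$-approximate linear search algorithm can be used as a black box to solve the decision problem by simply verifying whether its output meets the approximation criterion. First, I would run the assumed $O(t(n))$-time search algorithm on the input instance $(A,b)$ to obtain a candidate vector $x' \in \R^n$. Then I would explicitly compute the residual $Ax' - b$ and compare $\norm{Ax' - b}_2$ against $\eps\norm{b}_2$, outputting YES if the inequality $\norm{Ax' - b}_2 \leq \eps \norm{b}_2$ holds and NO otherwise.

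Correctness splits cleanly along the two promise cases. If $(A,b)$ is a YES instance then it is satisfiable, so by the guarantee of the search algorithm the returned $x'$ satisfies $\norm{Ax' - b}_2 \leq \eps\norm{b}_2$, and the verification succeeds. If $(A,b)$ is a NO instance then every $x \in \R^n$ satisfies $\norm{Ax - b}_2 > \eps \norm{b}_2$, so in particular the check on $x'$ must fail regardless of what the search routine chose to output on this unsatisfiable input. This is precisely where the explicit verification step matters: the search guarantee does not apply to unsatisfiable inputs, but we never rely on it there, only on the directly computed residual.

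For the runtime, the verification step requires one sparse matrix--vector product $Ax'$, which takes $O(s)$ time since $A$ has $s = \nnz(A)$ nonzero entries, together with $O(m) = O(n)$ further arithmetic operations to form the residual and its squared norm and compare against $\eps^2 \norm{b}_2^2$. Combining with the single call to the search algorithm gives a total running time of $O(t(n) + s + n) \subseteq O(t(n) + sn)$, matching the claimed bound. There is no real technical obstacle to flag: the reduction is essentially bookkeeping, and the only minor subtlety is to invoke the verification step so as not to rely on search-algorithm behavior on unsatisfiable inputs.
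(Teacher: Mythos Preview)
Your proposal is correct and follows essentially the same approach as the paper: run the search algorithm, then explicitly verify whether the returned $x'$ satisfies $\norm{Ax'-b}_2 \leq \eps\norm{b}_2$, with the YES/NO correctness split argued exactly as you do. Your runtime analysis for the verification step is in fact slightly sharper than the paper's (you observe $O(s+n)$ rather than the paper's stated $O(sn)$), but you correctly note this is within the claimed bound.
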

\begin{proof}
We will follow the standard decision to search reductions which proceed by solving and then confirming. Suppose we are given a linear system $(A, b)$ with $\nnz(A) = s$ for which we want to solve $\eps$-Approximate linear decision problem. 

Suppose it is a YES instance i.e. there exists an exact solution, then by the assumed algorithm for $\eps$-Approximate linear search problem we can find an $x'$ such that $\norm{Ax'-b}_2 \leq \eps\norm{b}_2$. This can be done in $O(t(n))$ time. 

Suppose instead we were in the NO case i.e. for all $x'$, $\norm{Ax'-b}_2 > \eps\norm{b}_2$ i.e. there exists no $x'$ such that $\norm{Ax'-b}_2 \leq \eps\norm{b}_2$.

Hence checking whether $\norm{Ax'-b}_2 \leq \eps\norm{b}_2$ is true of not for the $x'$ returned by the assumed algorithm for $\eps$-Approximate linear search problem will let us solve the $\eps$-Approximate linear decision problem. We can check if $\norm{Ax'-b}_2 \leq \eps\norm{b}_2$ in time $O(sn)$. Hence the total running time is $O(t(n)+sn)$.

\end{proof}

Combining the main reduction (Theorem~\ref{thm:red-main-formal}) with a decision to search reduction (Lemma~\ref{lem:dec_to_search-r}) we get optimal conditional hardness for the $(1 - 1/n^{c})$-approximate linear search problem under Conjecture~\ref{conj:rk-dense}:

\begin{corollary}[Corollary~\ref{cor:dense} restated]\label{cor:dense-restated}
Under Conjecture~\ref{conj:rk-dense}, for all constants $c > 0$, the  $(1-1/n^c)$-approximate linear search problem is $\tOm(n^\omega)$ hard in the RealRAM model of computation. Moreover, this remains true even when the matrix $A$ in the given linear system $(A,b)$ is square and has full rank.
\end{corollary}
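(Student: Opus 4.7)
The plan is a direct composition of Theorem~\ref{thm:red-main-formal} with the decision-to-search reduction (Lemma~\ref{lem:dec_to_search-r}), followed by an appeal to Conjecture~\ref{conj:rk-dense}. I would proceed by contradiction: assume there is an algorithm $\mathcal{S}$ that solves the $(1-1/n^c)$-approximate linear search problem in time $t(n) = \tilde{o}(n^\omega)$, and build from it an algorithm for rank-finding on $A \in \mathbb{R}^{m \times n}$ with $m = O(n)$ that runs in time $\tilde{o}(n^\omega)$, violating Conjecture~\ref{conj:rk-dense}.

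Given a rank-finding instance $A$, the first step is to invoke Theorem~\ref{thm:red-main-formal} with the specified constant $c$. This produces $\plog(mn)$ instances $(A'_i, \mathbf{1}^{n'})$ of the $(1-1/n^c)$-approximate linear decision problem, where each $A'_i$ is square of dimension $n' = O(\max(m,n)) = O(n)$ with $\nnz(A'_i) = \tO(\nnz(A))$, and in the YES case $A'_i$ has full rank. This step itself takes $\tO(\nnz(A))$ time. The second step is to resolve each decision instance via Lemma~\ref{lem:dec_to_search-r}: run $\mathcal{S}$ on $(A'_i, \mathbf{1}^{n'})$ to obtain a candidate $x'_i$, then verify whether $\norm{A'_i x'_i - \mathbf{1}^{n'}}_2 \leq (1 - 1/n^c)\norm{\mathbf{1}^{n'}}_2$; by the promise, the instance is YES iff this verification succeeds. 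The verification is just one sparse matrix-vector product plus a norm computation, so it takes at most $O(\nnz(A'_i) + n') = \tO(n^2)$ time.

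Summing costs, the overall runtime of the simulated rank-finding algorithm is
\[
\tO(\nnz(A)) \; + \; \plog(n) \cdot \bigl( t(n') + \tO(n^2) \bigr) \; = \; \plog(n) \cdot t(O(n)) + \tO(n^2) \; = \; \tilde{o}(n^\omega),
\]
using $\omega \geq 2$ and $t(n) = \tilde{o}(n^\omega)$. Since the reductions succeed with high probability (Theorem~\ref{thm:red-main-formal}), the composed algorithm decides rank-finding correctly with high probability in $\tilde{o}(n^\omega)$ time, contradicting Conjecture~\ref{conj:rk-dense}. The ``square and full rank'' clause in the corollary statement is immediate from the corresponding guarantee in Theorem~\ref{thm:red-main-formal}, since the hard instances produced by the reduction already lie in this restricted class in the YES case.

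There is no real obstacle here; the work was done in Section~\ref{sec:main-red}. The only things to be careful about are bookkeeping the $\plog(n)$ blow-up from the repeated calls produced by the main reduction and ensuring the verification step in the decision-to-search reduction does not dominate the runtime, which it does not since $\nnz(A'_i) \leq (n')^2 \leq n^\omega$.
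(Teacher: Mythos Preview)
Your proposal is correct and follows exactly the route the paper takes: compose Theorem~\ref{thm:red-main-formal} with the decision-to-search reduction of Lemma~\ref{lem:dec_to_search-r} and invoke Conjecture~\ref{conj:rk-dense}. Your bookkeeping is in fact tighter than the paper's---you correctly observe that verifying $\norm{A'_i x'_i - \mathbf{1}^{n'}}_2 \le (1-1/n^c)\norm{\mathbf{1}^{n'}}_2$ costs $O(\nnz(A'_i)+n')$ rather than the $O(sn)$ bound stated in Lemma~\ref{lem:dec_to_search-r}, which matters for the dense case.
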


This conditional hardness result is tight as the best algorithms for (exactly) solving general linear systems run in time $\tO(n^\omega)$~\cite{bunch}.

The next corollary is a search to search reduction between the approximate linear search problem with small gap to one with a larger gap. This is a direct consequence of Lemma~\ref{lem:amp}. Recall that in the proof of Lemma~\ref{thm:red-main-formal} we used Lemma~\ref{lem:amp} to amplify the gap of the approximate linear \emph{decision} problem.  Lemma~\ref{lem:amp} is in fact more general and can amplify the gap of the approximate linear \emph{search} problem too (as noted in the lemma statement):

\begin{corollary}[Corollary~\ref{cor:search-amp} restated]\label{cor:search-amp-restated}
If for any constant $c > 0$ and $a$ there exists an $\tO(n^{a})$-time algorithm for $(1-1/n^c)$-Approximate Linear Search problem then for all constants $d$ there exists a $\tO(n^{a})$-time algorithms for $1/n^d$-Approximate Linear Search problem.
\end{corollary}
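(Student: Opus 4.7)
The plan is to obtain this as a direct consequence of Lemma~\ref{lem:amp}. Fix the constants $c > 0$ and $a$ from the hypothesis, and let $d$ be the arbitrary constant for which we want a $1/n^d$-ALS algorithm. Given an input $(A, b)$ of the $1/n^d$-ALS problem, I first normalize it to an equivalent instance $(A', \mathbf{1}^{n'})$ with target $\mathbf{1}^{n'}$: if $b = 0$ I return $x = 0$; otherwise I row-scale, dividing row $i$ of $A$ by $b_i$ whenever $b_i \neq 0$, and for a row with $b_i = 0$ I combine it with a previously-scaled row $j$ (where the new $b_j = 1$) so that its combined target also becomes $1$. This preprocessing runs in $\tO(\nnz(A))$ time and changes the approximation parameter by at most a fixed polynomial in $n$, which I absorb into a slightly larger exponent $d' = d + O(1)$.

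Next, I invoke Lemma~\ref{lem:amp} on $(A', \mathbf{1}^{n'})$ with $\eps(n) = 1/n^{d'}$ (which is $1/n^{O(1)}$ as the lemma requires) and $\delta(n) = 1/n^c$, producing an instance $(A'', \mathbf{1}^{n''})$ of the $(1-1/n^c)$-ALS problem with $\nnz(A'') = O(\nnz(A'))$. Crucially, the lemma is a many-one reduction for the \emph{search} variant: given any $(1-1/n^c)$-approximate solution $x''$ to $(A'', \mathbf{1}^{n''})$, the correction $x' = x''/(1 - \gamma_{x''})$ exhibited at the end of Lemma~\ref{lem:amp}'s proof (with $\gamma_{x''}$ computable in $\tO(\nnz(A''))$ time from $x''$) yields a $1/n^{d'}$-approximate solution to $(A', \mathbf{1}^{n'})$. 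Running the hypothesized $\tO(n^a)$-time algorithm on $(A'', \mathbf{1}^{n''})$, applying this correction, and then undoing the row-scaling from the preprocessing step yields a $1/n^d$-approximate solution to the original $(A, b)$ in total time $\tO(\nnz(A)) + \tO(n^a) = \tO(n^a)$, using $a \geq 2$ (needed just to read the input).

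The main technical obstacle in executing this plan is the preprocessing step, specifically handling a $b$ with zero entries or with $\max_i |b_i|/\min_{b_i \neq 0}|b_i|$ super-polynomial. The pairing trick above handles zero entries without blowing up sparsity, and a preliminary $\tO(\nnz(A))$-time scan can rescale and truncate so that the magnitude ratio of the nonzero $b_i$ is polynomially bounded, ensuring that the overall normalization loss is an $n^{O(1)}$ factor that is absorbed into $d'$. Once this preprocessing is in place, the rest of the proof is a straightforward pipelining of Lemma~\ref{lem:amp} with the hypothesized $(1-1/n^c)$-ALS algorithm.
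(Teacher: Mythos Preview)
Your core approach is exactly the paper's: the paper simply states that Corollary~\ref{cor:search-amp-restated} is a direct consequence of Lemma~\ref{lem:amp} and gives no further argument. So on that level you match.

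You go beyond the paper by noticing a genuine issue the paper glosses over: Lemma~\ref{lem:amp} is stated only for instances of the form $(A,\mathbf{1}^n)$, while the corollary is phrased for general $(A,b)$. You then try to bridge this by a row-scaling preprocessing that turns $(A,b)$ into $(A',\mathbf{1}^{n'})$. This is a reasonable instinct, but your fix does not work as stated over the RealRAM. Row-scaling by $1/b_i$ incurs a multiplicative loss of order $\max_i|b_i|/\min_{b_i\neq 0}|b_i|$ in the approximation parameter, and over RealRAM this ratio is \emph{unbounded} (the $b_i$ are arbitrary reals). Your proposed remedy of a ``rescale and truncate'' scan to force a polynomial ratio is not a well-defined operation here: truncating small $b_i$ changes the system and need not preserve satisfiability or the approximation guarantee. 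So the claim that the normalization loss is $n^{O(1)}$ and can be absorbed into $d'$ is unjustified.

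In short: the part of your proof that matches the paper (apply Lemma~\ref{lem:amp}, then recover $x' = x''/(1-\gamma_{x''})$) is correct and is all the paper itself provides. The extra preprocessing step you add is where your proposal has a real gap. The paper never addresses this either; it appears to treat the corollary as a statement about $(A,\mathbf{1}^n)$ instances, which is all its reduction chain ever produces, and for those your preprocessing is unnecessary.
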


Even though this is stated as a reduction, one could potentially use the above corollary to get better algorithms for the $1/n^{O(1)}$-approximate linear search problem.

\subsection{Sparse linear systems}\label{sec:sparse_l2}
In this section, we give analogous results for sparse linear systems. Here we use the fact that our main reduction (Theorem~\ref{thm:red-main-formal}) preserves the sparsity of the original matrix. Hence if we assume the hardness of the rank-finding problem over sparse matrices, we get conditional hardness for approximately solving sparse linear systems.

\begin{reminder}{Conjecture~\ref{conj:rk-sparse}}
Finding the rank of a matrix $A \in \R^{m \times n}$, where $m = O(n)$ and $\nnz(A) = \tO(n)$, in the RealRAM model of computation, is $\tOm(n^2)$-hard.
\end{reminder}

Combining the main reduction (Theorem~\ref{thm:red-main-formal}) with a decision to search reduction we get optimal conditional hardness for the $(1 - 1/n^{c})$-approximate linear search problem on sparse matrices, under Conjecture~\ref{conj:rk-sparse}:

\begin{corollary}\label{cor:sparse}
Under Conjecture~\ref{conj:rk-sparse}, for all constants $c > 0$, the  $(1-1/n^c)$-approximate linear search problem $(A, b)$ with $\nnz(A) = \tO(n)$ is $\tOm(n^2)$ hard, in the RealRAM model of computation. Moreover, this remains true even when the matrix $A$ is square and has full rank.
\end{corollary}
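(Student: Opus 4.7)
The plan is to derive Corollary~\ref{cor:sparse} as a direct sparsity-preserving analogue of Corollary~\ref{cor:dense-restated}. The entire architecture of that proof, namely Theorem~\ref{thm:red-main-formal} (rank-finding to approximate linear decision) followed by Lemma~\ref{lem:dec_to_search-r} (decision to search), carries over verbatim; the only thing to check is that the sparsity bound $\nnz(A) = \tO(n)$ is maintained throughout the chain and that the verification step in the decision-to-search reduction is also cheap in the sparse regime.

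First, I would instantiate Theorem~\ref{thm:red-main-formal} on a sparse rank-finding instance $A \in \R^{m \times n}$ with $m = O(n)$ and $\nnz(A) = \tO(n)$. The theorem produces $\plog(n)$ instances $(A', \mathbf{1}^{n'})$ of the $(1 - 1/n^c)$-approximate linear decision problem, with $n' = O(\max(m,n)) = O(n)$ and $\nnz(A') = \tO(\nnz(A)) = \tO(n)$, so every produced instance is itself sparse. Moreover, the reduction runs in total time $\tO(\nnz(A)) = \tO(n)$. Suppose, for contradiction, that there exists an algorithm for the sparse $(1 - 1/n^c)$-approximate linear decision problem running in time $\tilde o(n^2)$. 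Composing it with the reduction solves sparse rank-finding in time $\tO(n) + \plog(n)\cdot \tilde o(n^2) = \tilde o(n^2)$, contradicting Conjecture~\ref{conj:rk-sparse}. This establishes $\tOm(n^2)$ hardness of the sparse decision problem.

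Second, I would lift the decision hardness to a search hardness via Lemma~\ref{lem:dec_to_search-r}. On an input with sparsity $s = \tO(n)$, the verification of a candidate solution $x'$ amounts to one sparse matrix-vector product and one norm computation, costing $O(\nnz(A') + n') = \tO(n)$, which is dominated by the $\tOm(n^2)$ target. Consequently, any algorithm solving the sparse $(1 - 1/n^c)$-approximate linear search problem in time $\tilde o(n^2)$ would yield an algorithm of the same complexity for the sparse decision problem, contradicting Step~1. The ``square and full rank'' strengthening is inherited automatically, since Theorem~\ref{thm:red-main-formal} produces square instances that are full rank in the YES case and the decision-to-search reduction merely queries the search oracle on these instances.

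I do not expect any substantive obstacle: every sparsity claim needed in the two sub-reductions making up Theorem~\ref{thm:red-main-formal} (Lemma~\ref{lem:rank}, Lemma~\ref{lem:fc_to_lda}, and Lemma~\ref{lem:amp}) was already verified in Section~\ref{sec:main-red}, specifically the sparse random column additions in Lemma~\ref{lem:add_random} and the constant-row-overlap multiplication by the cyclic bidiagonal matrix $M$ in Lemma~\ref{lem:amp}. The only mildly delicate point to double-check is that the rescaling step in the proof of Lemma~\ref{lem:fc_to_lda} (dividing the $i$-th row by $b_i$) preserves $\nnz$, which it does since it is an entrywise rescaling and does not introduce new nonzeros. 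Given this, the corollary follows by essentially the same three-line argument as in the dense case, with the exponents $\omega$ and the conjecture reference swapped for $2$ and Conjecture~\ref{conj:rk-sparse}.
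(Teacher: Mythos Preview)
Your proposal is correct and follows essentially the same approach as the paper, which derives the corollary by composing Theorem~\ref{thm:red-main-formal} (which preserves sparsity) with the decision-to-search reduction of Lemma~\ref{lem:dec_to_search-r}. Your observation that the verification step costs only $O(\nnz(A')+n') = \tO(n)$ rather than the $O(sn)$ stated in Lemma~\ref{lem:dec_to_search-r} is a harmless tightening that makes the $\tilde o(n^2)$ contradiction cleaner, but the overall argument is identical to the paper's.
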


Note that here we crucially used the fact that the main reduction preserves the sparsity of the original matrix. This conditional hardness result is tight as the best algorithms for (exactly) solving linear equations run in time $\tO(\nnz(A) \cdot n)$~\cite{hestenes} which is equal to $\tO(n^2)$ for sparse matrices.

Now we state the search to search reduction for the approximate linear search problem which follows from Lemma~\ref{lem:amp}. This reduction amplifies a small gap to a large gap. 

\begin{corollary}\label{cor:search-amp-sparse}
If for any constant $c > 0$ and $a$ there exists an $\tO(n^{a})$-time algorithm for the $(1-1/n^c)$-Approximate Linear Search problem over $(A,b)$ then for all constants $d$ there exists an $\tO(n^{a})$-time algorithm for the $1/n^d$-Approximate Linear Search problem over $(A',b)$ where $\nnz(A') = \nnz(A)$.
\end{corollary}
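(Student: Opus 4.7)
The plan is to verify that the gap-amplification in Lemma~\ref{lem:amp}, which the paper explicitly states as a search-to-search reduction in addition to the decision version, preserves sparsity up to a constant factor. With that observation, the dense Corollary~\ref{cor:search-amp-restated} transfers essentially verbatim to the sparse setting; the only real work is to confirm that every step of the construction maintains $\nnz = O(s)$.

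Concretely, given a $1/n^d$-Approximate Linear Search instance $(A',b)$ with $\nnz(A') = s$, I would first reduce it to the canonical form $(\tilde A, \mathbf{1}^n)$ required by Lemma~\ref{lem:amp} by rescaling the $i$-th row of $A'$ by $1/b_i$, exactly as in the second part of the proof of Lemma~\ref{lem:fc_to_lda}. This is a purely row-wise operation, so $\tilde A$ has the same nonzero pattern as $A'$ and $\nnz(\tilde A) = s$; the approximation factor degrades by at most a polynomial in $\max_i |b_i|/\min_i |b_i|$, which is absorbed by replacing $d$ with $d' = d + O(1)$. Next I would invoke Lemma~\ref{lem:amp} with $\eps(n) = 1/n^{d'}$ and $\delta(n) = 1/n^c$, producing $(A'',\mathbf{1}^n)$ where $A'' = M\tilde A$ and $M$ is the cyclic bi-diagonal matrix with $(t{+}1)$ on the diagonal and $-t$ on the super-diagonal. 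Since $M$ has exactly two nonzeros per row, $\nnz(A'') \leq 2\,\nnz(\tilde A) = O(s)$, which is the sparsity preservation we need.

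Finally, I would run the assumed $\tO(n^a)$-time $(1-1/n^c)$-Approximate Linear Search algorithm on the sparse instance $(A'',\mathbf{1}^n)$ to obtain some $x$, and apply the recovery map $x' = x/(1-\gamma_x)$ from the proof of Lemma~\ref{lem:amp} to convert it into a $1/n^{d'}$-approximate solution for $(\tilde A, \mathbf{1}^n)$; undoing the row rescaling then produces the required $1/n^d$-approximate solution to $(A',b)$. The total running time is $\tO(s + n^a) = \tO(n^a)$. The main potential obstacle is the clean handling of the preprocessing when some $b_i = 0$ or when the entries of $b$ differ by superpolynomial factors; the standard fix is to split rows into buckets by magnitude of $b_i$ (or to perturb zero entries), but one must ensure the aggregate polynomial loss can still be absorbed into the constant $d'$. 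Everything else in the argument is bookkeeping showing that the cyclic matrix $M$ introduced by the amplification is itself sparse, a fact already implicit in the dense proof.
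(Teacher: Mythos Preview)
Your core argument is exactly the paper's: the corollary is nothing more than Lemma~\ref{lem:amp} plus the observation that the cyclic matrix $M$ has two nonzeros per row, so $\nnz(MA)\le 2\,\nnz(A)$. The paper gives no separate proof and simply says the corollary ``follows from Lemma~\ref{lem:amp}''; your verification that $M$ is sparse is the only real content, and you have it.

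Where you diverge is the extra preprocessing layer: you try to handle an arbitrary right-hand side $b$ by rescaling rows to force $b=\mathbf{1}^n$, borrowing the move from Lemma~\ref{lem:fc_to_lda}. The paper does not do this; it simply applies Lemma~\ref{lem:amp} as stated (with $b=\mathbf{1}^n$) and treats the corollary as a restatement of the search-to-search part of that lemma. Your added step has a genuine problem over RealRAM: the degradation factor is polynomial in $\max_i|b_i|/\min_i|b_i|$, not in $n$, so for arbitrary real $b$ you cannot absorb it into $d'=d+O(1)$. The bucketing/perturbation fixes you sketch do not obviously resolve this either, since over RealRAM the spread can be unbounded and a zero $b_i$ cannot be ``perturbed'' in a way that preserves the problem. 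So the preprocessing you bolt on is both unnecessary (the paper is content with the $b=\mathbf{1}^n$ form that Lemma~\ref{lem:amp} already gives) and, as you honestly flag, not fully justified. Drop it and your proof matches the paper's.
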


\subsection{Positive semidefinite linear systems}\label{sec:psd}

In this section, we show hardness for dense linear systems over PSD matrices. To do so, we need some additional ideas beyond our main reduction and also a conjecture for solving linear systems on matrices with intermediate sparsities. This conjecture also allows us to show optimal conditional hardness of approximately solving linear systems $(A,b)$ for any sparsity.

\begin{conjecture}[Rank-finding conjecture for all sparsities]\label{conj:rk-mid}
Finding the rank of a matrix $A \in \R^{m \times n}$, where $m = O(n)$, is $\min(\tOm(\nnz(A) \cdot n), \tOm(n^\omega))$ hard in the RealRAM model of computation.
\end{conjecture}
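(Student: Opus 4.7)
This is a conjecture interpolating between Conjecture~\ref{conj:rk-dense} and Conjecture~\ref{conj:rk-sparse}, positing that the upper-bound envelope $O(\min(\nnz(A)\cdot n,\, n^\omega))$ --- matched by Wiedemann in the sparse regime and Bunch--Hopcroft in the dense regime --- is tight at every intermediate sparsity. Since it is stated as a conjecture rather than a theorem, no unconditional proof is expected; the honest plan is to derive it from the two existing conjectures together with scaling and padding arguments, and to flag where such a derivation genuinely needs a new assumption.

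My proposed two-branch plan: for the $n^\omega$ branch (when $\nnz(A) \geq n^{\omega-1}$), invoke Conjecture~\ref{conj:rk-dense} directly, since any dense hard instance embeds inside an instance of sparsity $\nnz(A)$ by adding zero entries and the rank is unchanged. For the $\nnz(A) \cdot n$ branch (when $\nnz(A) \leq n^{\omega-1}$), set $n' = \sqrt{\nnz(A)\cdot n}$ and start with a hard $n' \times n'$ instance of the sparse rank problem: under a scale-invariant version of Conjecture~\ref{conj:rk-sparse} it requires $\tOm((n')^2) = \tOm(\nnz(A) \cdot n)$ time on inputs with $\tO(n')$ nonzeros. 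Then embed inside an $n \times n$ matrix by direct-sum padding with identity (or zero) blocks, and sprinkle additional ``inert'' nonzeros --- say in disjoint rows/columns of the padding block --- to reach the target sparsity $\nnz(A)$. The two routine facts to check are that padding does not decrease the hardness (standard: the padded instance projects back to the original) and that the inert nonzeros do not help an algorithm, which should hold because they are independent of the hard subblock.

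The main obstacle is the \emph{scale-invariance} needed from Conjecture~\ref{conj:rk-sparse}: as stated, it guarantees $\tOm(n^2)$ hardness only at dimension $n$ with $\tO(n)$ nonzeros, whereas my plan applies it at dimension $n' = \sqrt{\nnz(A)\cdot n}$, which can be much smaller than $n$. Fine-grained hypotheses are not automatically scale-uniform --- an algorithm that is fast on $n$-variable inputs may still be slow on $n'$-variable inputs --- so I really need the sparse conjecture to hold uniformly in the dimension. I would therefore expect any clean proof to take the form: \emph{under a scale-uniform strengthening of Conjecture~\ref{conj:rk-sparse} together with Conjecture~\ref{conj:rk-dense}}, both branches of the minimum follow from the padding construction above. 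Without such a strengthening, Conjecture~\ref{conj:rk-mid} genuinely stands on its own as a separate fine-grained assumption, which is why I suspect the authors choose to state it as a conjecture rather than derive it.
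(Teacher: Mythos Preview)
The statement is a \emph{conjecture}, and the paper does not attempt to prove or derive it at all: it is simply posited as a new fine-grained assumption, with the sole justification being that the best known algorithms for rank-finding run in time $\min(O(\nnz(A)\cdot n), O(n^\omega))$ and are conjectured to be optimal. Your recognition that no unconditional proof is expected is therefore exactly right, and matches the paper.

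Your additional content --- the attempt to derive Conjecture~\ref{conj:rk-mid} from Conjectures~\ref{conj:rk-dense} and~\ref{conj:rk-sparse} via padding, together with the identification of the scale-uniformity obstacle --- goes beyond anything the paper does. The paper makes no claim that Conjecture~\ref{conj:rk-mid} follows from the other two; it treats it as a genuinely separate assumption, which is consistent with your conclusion. Your padding argument for the $n^\omega$ branch is sound, and your diagnosis that the $\nnz(A)\cdot n$ branch requires the sparse conjecture to hold uniformly at dimension $n' = \sqrt{\nnz(A)\cdot n}$ is a correct and useful observation that the paper does not make explicit.
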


The current best known algorithms for the Rank-Finding problem $(A \in \mathbb{R}^{O(n) \times O(n)})$ runs in time $\min(\nnz(A)n, n^{\omega})$. The above conjecture assumes that this is optimal. We can prove the following theorem directly by combining Conjecture~\ref{conj:rk-mid} and Lemma~\ref{lem:red-main_w}.

\begin{corollary}\label{cor:mid}
Under Conjecture~\ref{conj:rk-mid}, for all constants $c > 0$, $(1-1/n^c)$-approximate linear search problem $(A, b)$, is $\min(\tOm(\nnz(A) \cdot n), \tOm(n^\omega))$ hard in the RealRAM model of computation. Moreover, this remains true even when the matrix $A$ is square and has full rank.
\end{corollary}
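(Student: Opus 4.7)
The plan is to mimic the template established by Corollaries~\ref{cor:dense-restated} and~\ref{cor:sparse}: I would compose the main reduction (Theorem~\ref{thm:red-main-formal}) with a decision-to-search step (Lemma~\ref{lem:dec_to_search-r}) and then invoke Conjecture~\ref{conj:rk-mid} to derive a contradiction. The key observation powering this is that Theorem~\ref{thm:red-main-formal} preserves both the dimension (up to constant factors) and the sparsity (up to polylog factors) of the input matrix \emph{simultaneously}, so the resulting lower bound tracks $\nnz(A)$ and $n$ as independent parameters rather than being tied to a single sparsity regime.

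The concrete steps I would carry out are as follows. Suppose towards contradiction that some algorithm solves the $(1 - 1/n^c)$-Approximate Linear Search problem on a square input $(A', b')$ of dimension $n'$ and sparsity $\nnz(A')$ in time $\tilde{o}(\min(\nnz(A') \cdot n', n'^\omega))$. First, I would apply Lemma~\ref{lem:dec_to_search-r} to upgrade this to an algorithm for the corresponding Approximate Linear Decision problem with the same asymptotic runtime; the verification $\norm{A'x - b'}_2 \leq \eps \norm{b'}_2$ requires only $O(\nnz(A'))$ arithmetic operations, which is subsumed by $\min(\nnz(A') \cdot n', n'^\omega)$ whenever $\nnz(A') \geq n'$. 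Next, I would feed a rank-finding instance $A \in \mathbb{R}^{m \times n}$ with $m = O(n)$ into Theorem~\ref{thm:red-main-formal} to obtain $\plog(n)$ instances of the Approximate Linear Decision problem, each of dimension $n' = O(n)$ and sparsity $\nnz(A') = \tO(\nnz(A))$. Invoking the decision algorithm on each instance would then yield a rank-finding algorithm running in total time $\plog(n) \cdot \tilde{o}(\min(\nnz(A) \cdot n, n^\omega))$, which is still $\tilde{o}(\min(\nnz(A) \cdot n, n^\omega))$ and contradicts Conjecture~\ref{conj:rk-mid}.

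For the ``moreover'' strengthening to full-rank square matrices, I would simply observe that Theorem~\ref{thm:red-main-formal} already produces square matrices of dimension $O(n)$ and guarantees they are full rank on YES instances. Since the Approximate Linear Search problem is defined only on satisfiable instances, an algorithm restricted to full-rank square matrices correctly handles every YES instance produced by the reduction; on NO instances the algorithm may output anything, but those outputs are caught by the verification step anyway. So the hardness extends verbatim to this restricted class.

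Since the argument is essentially a composition of previously established reductions, the main thing I expect to be careful about is bookkeeping the sparsity regimes: confirming that the $O(\nnz(A'))$ verification cost and the polylog factors from Theorem~\ref{thm:red-main-formal} remain asymptotically dominated by $\min(\nnz(A) \cdot n, n^\omega)$ \emph{uniformly} across $\nnz(A) \in [n, n^2]$. This sanity check is automatic because $\nnz(A) \leq n \cdot \nnz(A)$ and $\nnz(A) \leq n^2 \leq n^\omega$ (as $\omega \geq 2$), so no substantive new obstacle arises beyond what was already handled in the dense and sparse corollaries.
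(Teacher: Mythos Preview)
Your proposal is correct and follows essentially the same approach as the paper, which proves this corollary by directly combining Conjecture~\ref{conj:rk-mid} with the main reduction (the paper cites Lemma~\ref{lem:red-main_w}, evidently a typo for Theorem~\ref{thm:red-main-formal}). Your explicit treatment of the decision-to-search step and the sparsity bookkeeping across the full range $\nnz(A) \in [n, n^2]$ simply spells out what the paper leaves implicit.
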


The next lemma reduces the approximate linear search problem for intermediate sparsities to approximate linear search problem on dense PSD matrices. It's proof exploits the fact that the algorithm of Yuster and Zwick~\cite{YusterZ05} does matrix multiplication of two matrices $A, A'$ in $O(\min(n^{\omega}, z^{.7}n^{1.2}+n^2))$ time where $z \geq \nnz(A)$ and $z > \nnz(A')$. This is faster than the best algorithm for solving linear systems $(A, b)$ which runs in $O(\min(n^{\omega}, zn))$ where $z = \nnz(A)$ for certain sparsities. Specifically we will use the following result from Yuster and Zwick~\cite{YusterZ05}:

\begin{theorem}[Yuster and Zwick~\cite{YusterZ05}, See Theorem 3.1 and discussion]\label{thm:yz}
    Assuming $\omega > 2$ there exists constants $.1 \geq \gamma, \gamma' > 0$ such that for two matrices $A, B \in \mathbb{R}^{O(n) \times O(n)}$ which satisfy $\nnz(A), \nnz(B) \leq n^{\frac{\omega+1}{2}-\gamma}$ can be multiplied in time $O(n^{\omega-\gamma'})$.
\end{theorem}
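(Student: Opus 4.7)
The plan is to employ a heavy-light decomposition of the contraction index, in the spirit of the Yuster--Zwick sparse matrix multiplication algorithm. Writing $C = AB = \sum_{j=1}^n A[:,j]\, B[j,:]$, each index $j$ pairs a column of $A$ with the corresponding row of $B$. I fix a threshold $\Delta$ to be chosen later and call $j$ \emph{heavy} if $\nnz(A[:,j]) > \Delta$ or $\nnz(B[j,:]) > \Delta$, and \emph{light} otherwise. Setting $N := n^{(\omega+1)/2 - \gamma}$, the number of heavy indices satisfies $h \le 2N/\Delta$, since each heavy index contributes more than $\Delta$ nonzeros to $A$ or $B$, while $\nnz(A), \nnz(B) \le N$.

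I would then handle the two contributions separately. For the light indices, compute $\sum_{j \text{ light}} A[:,j]\, B[j,:]$ directly by iterating over nonzeros: for each light $j$ and each nonzero entry $A[i,j]$, add $A[i,j] \cdot B[j,:]$ to row $i$ of $C$ in time $O(\nnz(B[j,:]))$. The resulting cost is $\sum_{j \text{ light}} \nnz(A[:,j]) \cdot \nnz(B[j,:]) \le \Delta \sum_j \nnz(A[:,j]) = O(\Delta N)$, using that $\nnz(B[j,:]) \le \Delta$ for light $j$. For the heavy indices, extract the $n \times h$ submatrix $A_H$ consisting of the heavy columns of $A$ together with the $h \times n$ submatrix $B_H$ of the corresponding rows of $B$, and compute $A_H B_H$ by partitioning each into $h \times h$ blocks and performing $(n/h)^2$ dense $h \times h$ multiplications, at total cost $O(n^2 h^{\omega-2})$, valid whenever $h \le n$.

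Balancing the two costs $O(\Delta N)$ and $O(n^2 (N/\Delta)^{\omega-2})$ gives $\Delta \approx n^{2/(\omega-1)} N^{(\omega-3)/(\omega-1)}$, and substituting $N = n^{(\omega+1)/2 - \gamma}$ I expect the total runtime to simplify, after routine arithmetic on the exponents, to $O(n^{\omega - \gamma'})$ with
\[
\gamma' \;=\; \frac{2(\omega-2)}{\omega-1}\,\gamma,
\]
which is strictly positive because $\omega > 2$. The main obstacle I anticipate is the bookkeeping around the optimal $\Delta$: one needs to verify that it indeed yields $h \le n$ so that the block decomposition is applicable, which a quick exponent check using $\gamma \le 0.1$ and $\omega \in (2,3)$ confirms. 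A sharper $\gamma'$ can be extracted by replacing the naive block scheme with Coppersmith's rectangular matrix multiplication bound (as Yuster and Zwick do), but the qualitative statement $O(n^{\omega - \gamma'})$ for some $\gamma' > 0$ already follows from this simpler argument.
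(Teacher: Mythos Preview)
The paper does not actually prove this theorem; it is quoted verbatim from Yuster and Zwick~\cite{YusterZ05} and used as a black box in Lemma~\ref{lem:psd_dense}. So there is no ``paper's own proof'' to compare against.

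That said, your reconstruction is correct and is precisely the Yuster--Zwick heavy/light decomposition. Your exponent arithmetic checks out: balancing $\Delta N$ against $n^2(N/\Delta)^{\omega-2}$ with $N = n^{(\omega+1)/2-\gamma}$ gives total cost $n^{\omega - 2\gamma(\omega-2)/(\omega-1)}$, and the side condition $h \le n$ holds since $h \approx (N/n)^{2/(\omega-1)} = n^{1-2\gamma/(\omega-1)} \le n$. The only difference from the cited result is, as you note, that Yuster and Zwick invoke rectangular fast matrix multiplication on the heavy part to get a larger $\gamma'$, whereas you use a cruder square-block partition; for the qualitative statement the paper needs (some $\gamma' > 0$), your version suffices.
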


This allows us to do the following reduction:

\begin{lemma}\label{lem:psd_dense}
Assuming $\omega > 2$, there exists constants $.1 \geq \gamma, \gamma' > 0$ and a reduction running in time $O(\max(n^{\omega-\gamma'}, n^{\omega-\gamma}))$ from $(1-\delta(n))$-approximate linear search problem $(V \in \mathbb{R}^{n \times n}, b)$ with $\nnz(V) \leq n^{\frac{\omega+1}{2}-\gamma}$ to $(1-\delta(n))$-approximate linear search problem $(V', b)$ such that the matrix $V'$ is PSD.
\end{lemma}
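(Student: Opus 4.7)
The plan is to set $V' := V V^T$ and keep the right-hand side $b$ unchanged. Since $V'$ is a Gram matrix it is automatically symmetric positive semi-definite, and the standard identity $\colspace(V V^T) = \colspace(V)$ guarantees that whenever $(V, b)$ is satisfiable, so is $(V', b)$. The key observation that makes the reduction lossless is that every $y \in \mathbb{R}^n$ furnishes a candidate $x := V^T y$ for the original system with
\[
\|V x - b\|_2 = \|V V^T y - b\|_2 = \|V' y - b\|_2,
\]
so the residual norms on the two instances are literally equal and the approximation parameter $1 - \delta(n)$ transfers unchanged between them.

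Given an oracle for the PSD $(1 - \delta(n))$-approximate linear search problem, the reduction has three steps: first form $V' = V V^T$; then call the oracle on $(V', b)$ to obtain a $y$; finally output $x := V^T y$. The only nontrivial computation is the first step, where we multiply two matrices each with $\nnz \leq n^{(\omega+1)/2 - \gamma}$. Theorem~\ref{thm:yz} furnishes constants $\gamma, \gamma' > 0$ such that this product can be computed in $O(n^{\omega - \gamma'})$ time. The post-processing $V^T y$ is a single sparse matrix-vector multiplication costing $O(\nnz(V)) = O(n^{(\omega+1)/2 - \gamma})$, which is dominated by $n^{\omega - \gamma}$ whenever $\omega \geq 1$; hence the overall running time fits the bound $O(\max(n^{\omega - \gamma'}, n^{\omega - \gamma}))$ claimed in the lemma.

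The conceptually important point, and the place where the proof could go wrong if one were not careful, is the choice $V' = V V^T$ rather than the more familiar symmetrization $V' = V^T V$ with $b' = V^T b$. Translating an approximate solution of the latter back to the original system requires dividing by $\sigma_{\min}(V)$ and hence costs a factor of $\kappa(V)$ in $\delta(n)$, which would be fatal in the $\delta(n) = 1/\poly(n)$ regime we care about. With $V' = V V^T$ and the reparametrization $x = V^T y$, by contrast, the two residuals $V x - b$ and $V' y - b$ are not merely comparable but \emph{equal} as vectors, so no condition-number loss is incurred. The remainder of the argument is bookkeeping and an invocation of Theorem~\ref{thm:yz}.
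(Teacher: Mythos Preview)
Your proposal is correct and follows essentially the same approach as the paper: set $V' = V V^T$, invoke Yuster--Zwick (Theorem~\ref{thm:yz}) to compute the product in $O(n^{\omega-\gamma'})$ time, and recover an approximate solution for $(V,b)$ from an approximate solution $y$ for $(V',b)$ via $x = V^T y$. The paper additionally proves the two-sided equality $\min_x \|Vx-b\| = \min_x \|V'x-b\|$ by decomposing the minimizer along and orthogonal to the row space of $V$, but your direct observation that $\|V(V^Ty)-b\| = \|V'y-b\|$ already suffices for the search-to-search reduction, so this extra step is not needed.
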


\begin{proof}
The constants $\gamma, \gamma'$ are from Theorem~\ref{thm:yz}. By Theorem~\ref{thm:yz} we can compute $V '= VV^T$ in time $O(n^{\omega-\gamma'})$. 

We now prove that $\min_x \norm{Vx-b} = \min_x \norm{V'x-b} = \min_x \norm{VV^Tx-b}$. It is clear that $\min_x \norm{Vx-b} \leq \min_x \norm{VV^Tx-b}$ so we are only left with proving $\min_x \norm{VV^Tx-b} \leq \min_x \norm{Vx-b}$. Let $x^* = \arg\min_x \norm{Vx-b}$. Let $W$ be the subspace formed by the rows of $V$. We can split $x^*$ as $x^* = y + z$ where $y = P_W(x)$ and for all $w \in W$, $\iprod{z, w} = 0$. As $y \in W$ we can write it as $y = V^Tu$ for some $u$. Now we note that
\begin{align*}
    Vx^* &= Vy + Vz\\
    &= Vy \hspace{40pt} \text{[$Vz = \mathbf{0}$, As $z$ is orthogonal to rows of $V$]}\\
    &= VV^Tu \hspace{40pt} \text{[$y = V^Tu$]}\\
\end{align*}

Hence we have that $\min_x \norm{V'x-b} = \min_x \norm{VV^Tx-b} \leq \norm{VV^Tu-b} = \norm{Vx^*-b} \leq \min_x \norm{Vx-b}$.

Now consider the linear system $(V' = VV^T, b)$, as $\min_x \norm{Vx-b} = \min_x \norm{VV^Tx-b}$ we have that satisfiable instances are mapped to satisfiable instances. Further, given an $x$ such that $\norm{V'x-b} \leq (1-\delta(n))\norm{b}$ we can easily find $y = V^Tx$ in time $O(\nnz(V)) = O(n^{\frac{\omega+1}{2}-\gamma})$ for which $\norm{Vy-b} =  \norm{VV^Tx-b} \leq (1-\delta(n))\norm{b}$. Creating the linear system $(VV^T, b)$ takes $O(n^{\omega-\gamma'})$ time by Theorem~\ref{thm:yz}. Hence the total time taken by the reduction is $O(\max(n^{\omega-\gamma'}, n^{\omega-\gamma}))$.
\end{proof}

We now compose the above reduction with Conjecture~\ref{conj:rk-mid} to get the following tight conditional hardness.

\begin{corollary}\label{cor:dense_psd}
Under Conjecture~\ref{conj:rk-mid}, for all constants $c > 0$ $(1-1/n^c)$-approximate linear search problem $(A, b)$ where $A$ is restricted to be a PSD matrix, is $\tOm(n^\omega)$ hard in the RealRAM model of computation. Moreover, this remains true even when the matrix $A$ in the given linear system $(A,b)$ has full rank.
\end{corollary}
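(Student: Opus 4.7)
The plan is to combine Corollary~\ref{cor:mid} (hardness of approximate linear search at arbitrary sparsities) with Lemma~\ref{lem:psd_dense} (a sparsity-to-PSD reduction via $V \mapsto VV^T$ using fast sparse matrix multiplication). The key is to choose a sparsity parameter $s$ that simultaneously (a) satisfies the precondition of Lemma~\ref{lem:psd_dense}, namely $s \leq n^{(\omega+1)/2-\gamma}$, and (b) makes the lower bound from Corollary~\ref{cor:mid}, which is $\min(\tOm(sn),\tOm(n^{\omega}))$, evaluate to the full $\tOm(n^{\omega})$.

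First I would fix the constants $\gamma, \gamma' \in (0, 0.1]$ from Theorem~\ref{thm:yz} and set $s := n^{(\omega+1)/2 - \gamma}$ (we may assume $\omega > 2$, since otherwise the claim $\tOm(n^{\omega})$ matches the input size and is trivial). Then $s \cdot n = n^{(\omega+3)/2 - \gamma}$, and because $\omega > 2$ and $\gamma \leq 0.1$ we have $(\omega+3)/2 - \gamma > \omega$, so $\min(\tOm(sn),\tOm(n^{\omega})) = \tOm(n^{\omega})$. Applying Corollary~\ref{cor:mid} with this sparsity yields: under Conjecture~\ref{conj:rk-mid}, the $(1 - 1/n^c)$-approximate linear search problem on instances $(V, b)$ with $V$ square of dimension $n$, $\nnz(V) \leq s$, and (by the full-rank guarantee propagated through Theorem~\ref{thm:red-main-formal}) $V$ full rank in the YES case, is $\tOm(n^{\omega})$-hard.

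Next I would feed these instances into Lemma~\ref{lem:psd_dense}. That lemma transforms, in time $O(\max(n^{\omega-\gamma'}, n^{\omega-\gamma}))$, any such instance $(V, b)$ into an equivalent $(1-1/n^c)$-approximate linear search instance $(V' = VV^T, b)$ where $V'$ is PSD; moreover when $V$ is square and full-rank, $V' = VV^T$ is positive definite. Suppose, toward contradiction, there were a $\tilde{o}(n^{\omega})$-time algorithm for the $(1-1/n^c)$-approximate linear search problem restricted to PSD matrices. Composing with the reduction of Lemma~\ref{lem:psd_dense}, whose running time $O(n^{\omega-\min(\gamma,\gamma')})$ is strictly $\tilde{o}(n^{\omega})$, would produce a $\tilde{o}(n^{\omega})$-time algorithm for the $(1-1/n^c)$-approximate linear search problem at sparsity $s$, contradicting the hardness established in the previous paragraph (and hence Conjecture~\ref{conj:rk-mid}). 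This proves the corollary.

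The main obstacle is just bookkeeping around the choice of $s$: one must check that $s$ is small enough to fall below the $n^{(\omega+1)/2-\gamma}$ threshold required by Lemma~\ref{lem:psd_dense} (so that the $V \mapsto VV^T$ step uses Yuster--Zwick rather than brute force), while simultaneously being large enough that $sn \geq n^{\omega}$ so that Corollary~\ref{cor:mid} delivers the full $\tOm(n^{\omega})$ lower bound rather than a weaker $\tOm(sn)$. Setting $s$ equal to the upper endpoint of the allowed sparsity regime makes both conditions hold for any $\omega > 2$ and any $\gamma \leq (3-\omega)/2$, which is comfortably satisfied by the $\gamma \leq 0.1$ from Theorem~\ref{thm:yz} at current bounds on $\omega$. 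No further ideas beyond Corollary~\ref{cor:mid} and Lemma~\ref{lem:psd_dense} are needed.
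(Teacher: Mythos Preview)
Your proposal is correct and essentially identical to the paper's proof: both pick the sparsity $s=n^{(\omega+1)/2-\gamma}$, invoke Corollary~\ref{cor:mid} to get $\tOm(n^\omega)$ hardness at that sparsity (since $sn\ge n^\omega$), and then apply Lemma~\ref{lem:psd_dense} to pass to a PSD instance in time $O(n^{\omega-\min(\gamma,\gamma')})=\tilde{o}(n^\omega)$. Your explicit verification that $\gamma\le(3-\omega)/2$ (which the paper leaves implicit) and your remark that full-rank $V$ yields positive-definite $VV^T$ are welcome additions but do not change the argument.
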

\begin{proof}
We will use the constants $\gamma, \gamma'$ from Theorem~\ref{thm:yz}.
By Corollary~\ref{cor:mid} we have that $(1-1/n^c)$-approximate linear search problem on matrices of sparsity $z$ is $\min(\tOm(z \cdot n), \tOm(n^\omega))$ hard. For $z = n^{\frac{\omega+1}{2}-\gamma}$ we have that $z \cdot n \geq n^{\omega}$ hence we get $\tOm(n^\omega)$ hardness. By Lemma~\ref{lem:psd_dense} we can reduce $(1-1/n^c)$-approximate linear search problem on matrices of sparsity $z$ to the $(1-1/n^c)$-approximate linear search problem $(V', b)$ such that the matrix $V'$ is PSD in time $O(\max(n^{\omega-\gamma}, n^{\omega-\gamma'}))$. As $\gamma$ and $\gamma'$ are constants greater than 0, we have that $(1-1/n^c)$-approximate linear search problem on PSD matrices is $\tOm(n^\omega)$ hard.
\end{proof}

\subsection{Well-Conditioned Linear Systems}\label{sec:approx-rank}

In this section, we show conditional hardness of approximately solving well-conditioned linear systems. The condition number of a full-rank square matrix is the ratio of its maximum and minimum eigenvalues (see Definition~\ref{def:cond-num}). If the entries of a matrix are all $O(\log n)$-bits then the condition number of this matrix is at most exponential in $n$ (this is true even for rectangular full column-rank matrices). Therefore linear systems over matrices with polynomially-bounded condition number could be significantly easier to solve than general linear systems.

For the case of certain restricted classes of matrices such as directed Laplacians, the algorithm of Cohen et al~\cite{CohenKKPPRS18} for the $\eps(n)$-approximate linear search problem runs in time $\tilde{O}(\nnz(A)\log(\ka(A)/\eps(n)))$ which is a near-linear time algorithm for $\ka(A) = \poly(n)$. This is a significant improvement over algorithms for directed Laplacian systems with no bound on the condition number (which run in time $\tO(n^\omega)$). 

But for general systems no such improvement is known!
Conjugate gradient~\cite{hestenes} runs in time $\tO(\nnz(A))$, when  $\kappa(A) = \poly\log n$, whereas when $\kappa(A) = \poly(n)$ this algorithm gives \emph{no improvement} over the algorithm for general matrices. 

We show that if we assume that the rank-finding problem is hard over well-conditioned matrices ($\kappa(A) = \poly(n)$), then the approximate linear search problem is hard to solve over well-conditioned linear systems. The proof goes along the same lines as that for general matrices: we show in Lemmas~\ref{lem:fc_to_lda_c} and \ref{lem:amp-c} that our main reduction (Theorem~\ref{thm:red-main-formal}) in fact preserves the condition number of our original matrix. Then as a corollary we obtain the conditional hardness for approximately solving well-conditioned linear systems. Note that we show all the results here over the RealRAM but they can be easily modified to work on the WordRAM too.

\begin{figure}[H]
\begin{tikzcd}[column sep=15ex, row sep=10ex]
\begin{tabular}{c}
Well-conditioned \\
Rank-Finding
\end{tabular}
\arrow[r, "\text{Lemma}~\ref{lem:fc_to_lda_c}"] & 
\begin{tabular}{c}
Well-conditioned \\
$1/n^{O(1)}$-Approximate \\
Linear Decision
\end{tabular}
\arrow[d, "\text{Decision to Search}"]
\arrow[r, "\text{Lemma}~\ref{lem:amp-c}"] &
\begin{tabular}{c}
Well-conditioned \\
$(1-1/n^{\Omega(1)})$-Approximate \\
Linear Decision
\end{tabular}
\arrow[d, "\text{Decision to Search}"] 
\\
 &
\begin{tabular}{c}
Well-conditioned \\
$1/n^{O(1)}$-Approximate \\
Linear Search
\end{tabular}
\arrow[r, "\text{Lemma}~\ref{lem:amp-c}"] &
\begin{tabular}{c}
Well-conditioned \\
$(1-1/n^{\Omega(1)})$-Approximate \\
Linear Search
\end{tabular}
\end{tikzcd}
\caption{Reductions on the RealRAM preserving sparsity (up to $\plog(n)$ factors), dimension (up to constant factors) and condition-number (upto $\poly(n)$ factors). One difference from the results from the previous sections is that we no longer have the equivalence for the Rank-Finding Problem and the Linear Decision problem for Well-conditioned matrices.} 
\end{figure}
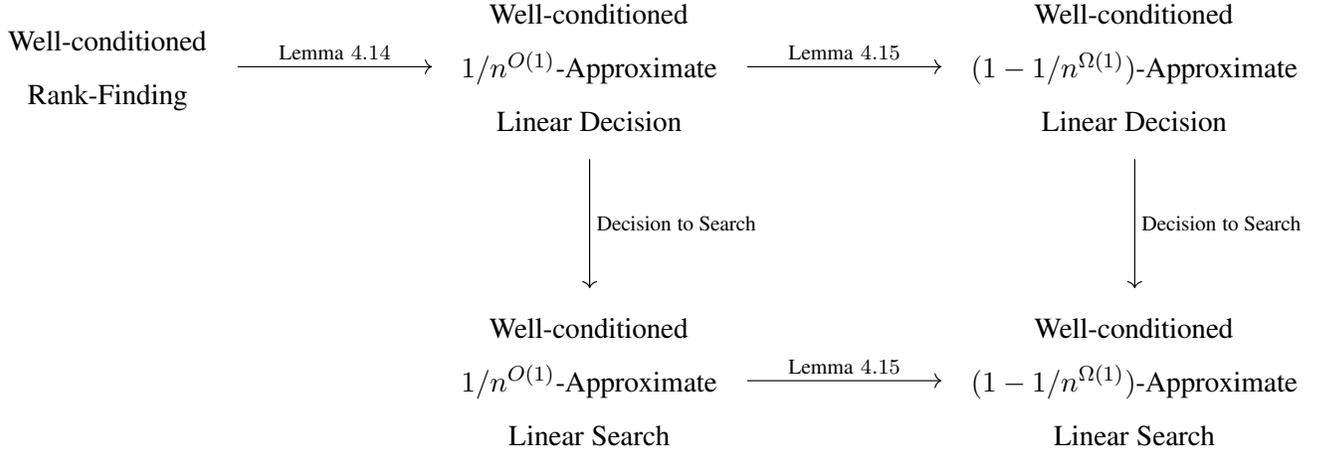

Before diving into the reduction, we will formally define all the problems used in the reduction-map above. As discussed in the definition of the condition-number, the condition-number is bounded only when the matrix has full column-rank, therefore in all the definitions below the matrices considered have dimension $m \times n$ with $n \leq m$.

\begin{definition}[Well-conditioned Rank-Finding Problem]\label{def:wcrf}
Given a matrix $A \in \R^{m \times n}$ where $n \leq m \leq O(n)$, with the promise that it falls into one of the following two sets of instances:
\begin{enumerate}
\item YES instances: $\rk(A) = n$ and $\kappa(A) \leq \poly(n)$.
\item NO instances: $\rk(A) < n$
\end{enumerate}
decide whether $A$ is a YES instance or a NO instance.
\end{definition}

\begin{definition}[Well-conditioned Full Column-Rank Problem]\label{def:wcfr}
Given a square matrix $A \in \R^{m \times n}$, where $n \leq m$, with the promise that it falls into one of the following two sets of instances:
\begin{enumerate}
\item YES instances: $\text{col-rank}(A) = n$ and $\kappa(A) \leq \poly(n)$.
\item NO instances: $\text{col-rank}(A) < n$
\end{enumerate}
decide whether $A$ is a YES instance or a NO instance.
\end{definition}

Note that the Well-conditioned Rank-finding problem easily reduces to the Well-conditioned Full Column-rank problem. This is because the YES instances of the former always have full column-rank by definition. In fact, we can also reduce the well-conditioned rank-finding problem to the well-conditioned \emph{full-rank} problem on \emph{square} matrices, by adding random columns, since this operation preserves the condition-number~\cite{edelman}. For simplicity of presentation we do not perform this operation and continue to work with full column-rank and (possibly) rectangular matrices throughout.

Next we formally introduce the search and decision problems on well-conditioned linear systems:

\begin{definition}[Well-conditioned $\eps(n)$-Approximate Linear Search Problem]\label{def:wc-e-als}
For a function $\eps: \mathbb{N} \rightarrow [0,1]$, given a satisfiable linear system $(A, b)$ with $A \in \mathbb{Z}^{m \times n}$ for $n \leq m$ and $\ka(A) = \poly(n)$ find an assignment $x$ such that $\norm{Ax-b} \leq \eps(n)\norm{b}$
\end{definition}

\begin{definition}[Well-Conditioned $\eps(n)$-Approximate Linear Decision problem]\label{def:approx_lin_c}
For a function $\eps: \mathbb{N} \rightarrow [0,1]$, given a linear system $(A \in \Z^{m \times n}, b \in \Z^n)$ for $n \leq m$, with the promise that it falls into one of the following two sets of instances:
\begin{enumerate}
\item YES instance: There exists an $x \in \Q^n$ such that $Ax = b$ and $A$ is well-conditioned.
\item NO instances: For all $x \in \R^n$, $\norm{Ax-b}_2 > \eps(n) \norm{b}_2$,
\end{enumerate}
decide whether $(A,b)$ is a YES instance or a NO instance.
\end{definition}

We will now show our main reduction from the Well-conditioned Rank-finding problem to the Well-conditioned $(1 - 1/n^{c})$-approximate linear decision problem. As noted above the Well-conditioned Rank-finding problem reduces to the Well-conditioned Full Column-Rank problem, so we will show a reduction from the latter to the approximate linear decision problem. To do so, we will show that the proofs in Section~\ref{sec:main-red} preserve the condition-number of the original matrix.

Let us start with showing that ``well-conditioned'' property is preserved in the reduction in Lemma~\ref{lem:fc_to_lda}.

\begin{lemma}\label{lem:fc_to_lda_c}
There exists a randomized Turing reduction from the Well-conditioned Full Column-Rank Problem on $M \in \Z^{m \times n}$ to the Well-conditioned $(1/n^{O(1)})$-Approximate Linear Decision problem. The reduction produces $\plog(n)$ instances of the form $(M',\mathbf{1}^n)$ where $M' \in \Z^{m \times n}$ and $\kappa(M') = \poly(n)$, runs in time $\tO(\nnz(M))$, and works w.h.p.
\end{lemma}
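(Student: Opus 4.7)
The plan is to follow the structure of Lemma~\ref{lem:fc_to_lda}, which handles the square full-rank case, and adapt it to the rectangular well-conditioned setting by first reducing to a square instance via random column augmentation and then tracking the condition number through each step.

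First I would square-ify the input $M \in \Z^{m \times n}$ (with $n \leq m$) by appending $m-n$ random Gaussian columns via the sampling procedure of Lemma~\ref{lem:add_random}, obtaining $\bar{M} \in \R^{m \times m}$. Iteratively applying Lemma~\ref{lem:add_random} and a union bound shows that in the YES case $\bar{M}$ has full rank with high probability, while in the NO case $\rk(\bar{M}) \leq (n-1) + (m-n) < m$. I would then mimic the proof of Lemma~\ref{lem:fc_to_lda} on $\bar{M}$: sample a random Gaussian vector $b \sim \N(0, 1/m)^m$, form the system $(\bar{M}, b)$, and rescale row $i$ by $1/b_i$ to produce the final instance $(M', \mathbf{1}^m)$. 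The satisfiability/gap analysis then goes through verbatim from Lemma~\ref{lem:fc_to_lda}: the YES case is satisfiable since $\bar{M}$ is square and full-rank, while in the NO case $b$ has a projection of magnitude $\Omega(\norm{b}/\poly(n))$ onto $\colspace(\bar{M})^{\perp}$ with high probability, yielding the required $1/n^{O(1)}$ gap.

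The content beyond Lemma~\ref{lem:fc_to_lda} is verifying that $\kappa(M') = \poly(n)$ in the YES case. There are two sources of condition-number change to control. First, the rescaling of rows by $1/b_i$: since $\abs{b_i} \in [1/\poly(m), \poly(m)]$ with high probability by the Gaussian tail bounds in Lemma~\ref{lem:gaussian}, this contributes at most a $\poly(n)$ multiplicative factor to $\kappa$. Second, the random-column augmentation: I would use standard random matrix theory to argue that the new columns contribute a smallest singular value of at least $1/\poly(n)$ in the $\colspace(M)^{\perp}$ direction, so that $\sigma_{\min}(\bar{M}) \geq \min(\sigma_{\min}(M), 1)/\poly(n)$ and $\sigma_{\max}(\bar{M}) \leq \max(\sigma_{\max}(M), 1) \cdot \poly(n)$. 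Combined with $\kappa(M) = \poly(n)$ in the YES case, this gives $\kappa(\bar{M}) = \poly(n)$ and hence $\kappa(M') = \poly(n)$. (Appropriate normalization of $M$ beforehand ensures the largest singular value bound holds.)

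The main obstacle is the condition-number analysis of the column-augmentation step. For dense Gaussian augmentation this follows from classical random matrix bounds, but when $M$ is sparse and we must use the sparse sampling from Lemma~\ref{lem:add_random} to preserve the $\tO(\nnz(M))$ runtime, the analysis is more delicate: we need to show that the sparse random columns still have well-spread projections onto the $(m-n)$-dimensional subspace $\colspace(M)^{\perp}$ with sufficient probability, so that the appended block is itself well-conditioned in that subspace. Once this is established, the rest of the argument parallels Lemma~\ref{lem:fc_to_lda}, and the overall success probability is amplified to $1 - 1/n^{\log n}$ by $\plog(n)$ repetitions as in the original proof.
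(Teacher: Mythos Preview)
Your proposal is correct in spirit but takes a substantially more involved route than the paper. The paper's proof sketch is essentially one line: it applies the reduction of Lemma~\ref{lem:fc_to_lda} \emph{verbatim} and observes that the only transformation applied to $M$ is a row rescaling by factors $1/b_i$ with $\abs{b_i}\in[1/n^2,n^2]$, which can change the condition number by at most a multiplicative $n^4$; hence $\kappa(M')\le n^4\kappa(M)=\poly(n)$. In particular, the paper explicitly declines to square-ify the matrix (see the paragraph preceding Lemma~\ref{lem:fc_to_lda_c}: ``For simplicity of presentation we do not perform this operation and continue to work with full column-rank and (possibly) rectangular matrices throughout'').

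Your approach instead first augments $M$ to a square $\bar M$ via random columns and only then invokes Lemma~\ref{lem:fc_to_lda}. This forces you to do the extra work you correctly identify as the main obstacle: bounding $\kappa(\bar M)$ after augmentation (the paper only cites~\cite{edelman} for this in passing and never uses it). Your route has the virtue of making the YES-case satisfiability argument airtight---with a genuinely rectangular $M$ and $n<m$, a random $b\in\R^m$ does \emph{not} lie in $\colspace(M)$, so the ``same as Lemma~\ref{lem:fc_to_lda}'' claim in the paper's sketch is glossing over exactly the issue you are fixing. On the other hand, your reduction outputs square instances $(M',\mathbf{1}^m)$ rather than the $m\times n$ instances promised in the lemma statement, and the sparse-augmentation condition-number analysis you flag is genuine additional content that the paper's one-line argument avoids entirely.
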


\begin{proof}[Proof Sketch]
The reduction and the proof is the same as the reduction in Lemma~\ref{lem:fc_to_lda}. The only extra thing we need to prove is that the resulting instances have $\poly(n)$ condition number. As in all the resulting instances $(M', b)$ we have that $M'$ is just a rescaling of rows of $M$ with the absolute value of the rescaling factor being between $1/n^2$ and $n^2$. This operation can only change the condition number by at most a multiplicative factor of $n^2 \cdot n^2 = n^4$. Hence $\ka(M') \leq \ka(M)n^4 = \poly(n)$.
\end{proof}

Next let us show that the ``well-conditioned'' property is preserved in the reduction in Lemma~\ref{lem:amp}.

\begin{lemma}\label{lem:amp-c}
For all constants $c,d > 0$, there exists a deterministic many-one reduction from the Well-conditioned $1/n^d$-Approximate linear search problem on the linear system $(A \in \Z^{m \times n}, \mathbf{1}^n)$ to the Well-conditioned $(1 - 1/n^c)$-Approximate linear search problem on the linear system $(A' \in \Z^{n \times n}, \mathbf{1}^n)$, with $\nnz(A') = O(\nnz(A))$ and $\kappa(A') = \poly(n)$.

As this is a deterministic many-one reduction we also get a gap-amplifying reduction for the $\eps(n)$-Approximate linear decision problem with the same parameters.
\end{lemma}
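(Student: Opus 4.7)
The plan is to reuse the construction from the proof of Lemma~\ref{lem:amp} verbatim: set $A' = MA$ where $M \in \Z^{m\times m}$ is the circulant-like matrix with $M_{ii}=t+1$, $M_{i,(i+1)\bmod m}=-t$ and zeros elsewhere, for $t = n/(\delta(n)\eps(n)) = n^{c+d+1}$. The correctness arguments---that YES instances of $(A,\mathbf{1})$ map to YES instances of $(A',\mathbf{1})$ and that any $(1-1/n^c)$-approximate solution to $(A',\mathbf{1})$ yields a $1/n^d$-approximate solution to $(A,\mathbf{1})$---already appear in Lemma~\ref{lem:amp} (the calculation there works row-by-row and uses only the cyclic structure of $M$, so it goes through unchanged for rectangular $A$). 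The sparsity bound $\nnz(A') \leq 2\,\nnz(A)$ and the integrality of the entries are immediate from $M$ having two integer nonzeros per row, so the only genuinely new obligation is to show $\kappa(A')=\poly(n)$ whenever $\kappa(A)=\poly(n)$.

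For this I would use submultiplicativity: since $M$ is square and invertible, $\sigma_{\min}(M)\sigma_{\min}(A)\leq\|MAx\|\leq\sigma_{\max}(M)\sigma_{\max}(A)$ for every unit $x\in\R^n$ (the lower bound needs $A$ to have full column rank, which is guaranteed by the well-conditioned assumption). Hence $\kappa(A')=\kappa(MA)\leq\kappa(M)\cdot\kappa(A)$, and it suffices to bound $\kappa(M)$. Writing $M=(t+1)I_m-tP$ for the cyclic shift permutation $P$ on $\R^m$, $M$ is a circulant and therefore normal, with eigenvalues $\lambda_k=(t+1)-t\omega_k$ where $\omega_k=e^{2\pi i k/m}$, $0\leq k<m$. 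A direct computation gives
\[
|\lambda_k|^2 \;=\; (t+1)^2+t^2-2t(t+1)\cos(2\pi k/m),
\]
so $\sigma_{\max}(M)=\max_k|\lambda_k|\leq 2t+1$ and $\sigma_{\min}(M)=\min_k|\lambda_k|=|\lambda_0|=1$. Therefore $\kappa(M)\leq 2t+1=\poly(n)$, and $\kappa(A')\leq\poly(n)\cdot\kappa(A)=\poly(n)$, as required.

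The one step I expect to require care is the equality $\sigma_{\min}(M)=1$: it hinges on the coefficients $t+1$ and $-t$ summing to exactly $1$ along each row of $M$, so that $\lambda_0=1$ and the all-ones direction is not suppressed. An antisymmetric amplification pattern with vanishing row sums would make $M$ singular and destroy the bound, so the specific form of the amplification matrix from Lemma~\ref{lem:amp} is precisely what makes the well-conditioned variant go through. With this observation in hand, the decision-to-decision gap-amplification statement follows for free: the construction is deterministic and depends only on $A$, YES instances map to exactly satisfiable well-conditioned systems, and NO instances are preserved by the error-amplification identity~\eqref{eq:t} from the proof of Lemma~\ref{lem:amp}.
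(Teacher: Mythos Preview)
Your proposal is correct and follows essentially the same approach as the paper: the paper's proof sketch simply invokes $\kappa(A')=\kappa(MA)\leq\kappa(M)\kappa(A)$ and asserts that $\kappa(M)=\poly(n^{c+d+1})$ is ``easy to verify,'' while you fill in that verification explicitly via the circulant eigenvalue computation. Your observation that $\sigma_{\min}(M)=|\lambda_0|=1$ because the row sums equal $1$ is exactly the point that makes the bound work, and your remark that submultiplicativity of the lower singular value requires $A$ to have full column rank (supplied by the well-conditioned hypothesis) is a detail the paper leaves implicit.
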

\begin{proof}[Proof Sketch]
    The only operation performed in Lemma~\ref{lem:amp} is that the input matrix $A$ is left multiplied with the ``$M$'' matrix in the proof of Lemma~\ref{lem:amp}. It is easy to verify that $\ka(M) = \poly(n^{d+c+1})$. So we get that $\kappa (A') = \ka(MA) \leq \ka(M)\ka(A) = \poly(n)$.
\end{proof}

Combining the two lemmas above we get the well-conditioned analogue of the main reduction from the Rank-finding problem to the $(1 - 1/n^{c})$-approximate linear decision problem. We can now apply a decision to search reduction to get conditional hardness for approximately solving linear systems over well-conditioned matrices:

\begin{corollary}\label{cor:dense_c}
For all constants $c$, assuming $\tOm(n^{\omega})$ hardness of the well-conditioned rank-finding problem we get that the well-conditioned $(1-1/n^c)$-approximate linear search problem is  $\tOm(n^{\omega})$-hard. 
\end{corollary}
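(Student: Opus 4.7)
The plan is to mirror the three-stage reduction from Section~\ref{sec:main-red}, using the well-conditioned analogues already assembled in this section. First, I would observe that the Well-conditioned Rank-Finding Problem (Definition~\ref{def:wcrf}) reduces trivially to the Well-conditioned Full Column-Rank Problem (Definition~\ref{def:wcfr}): every YES instance of the former is by definition a YES instance of the latter, and the NO conditions coincide, so no work is needed. Next, I would apply Lemma~\ref{lem:fc_to_lda_c} to reduce the Well-conditioned Full Column-Rank Problem to the Well-conditioned $(1/n^{O(1)})$-Approximate Linear Decision Problem; this produces $\plog(n)$ instances of the form $(M', \mathbf{1}^n)$ with $\kappa(M') = \poly(n)$, runs in $\tO(\nnz(M))$ time, and works with high probability.

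Third, I would invoke the decision-to-search reduction (Lemma~\ref{lem:dec_to_search-r}), which transfers unchanged to the well-conditioned setting because it only requires evaluating the residual $\norm{Ax - b}_2$ of a candidate solution against the threshold; the extra promise that the input matrix is well-conditioned is irrelevant to the verification step. This yields a reduction to the Well-conditioned $(1/n^{O(1)})$-Approximate Linear Search Problem. Finally, I would apply Lemma~\ref{lem:amp-c}, choosing the parameter $d$ to match the $O(1)$ exponent produced in the previous step, amplifying the approximation gap from $1/n^d$ all the way to $1 - 1/n^c$. The well-conditioned property survives amplification because the fixed amplification matrix $M$ from the proof of Lemma~\ref{lem:amp-c} satisfies $\kappa(M) = \poly(n)$, so $\kappa(MA) \le \kappa(M)\kappa(A) = \poly(n)$.

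Composing these four steps, each of which runs in $\tilde{o}(n^\omega)$ time and produces at most $\plog(n)$ sub-instances of comparable dimension, a hypothetical $\tilde{o}(n^\omega)$-time algorithm for the Well-conditioned $(1 - 1/n^c)$-Approximate Linear Search Problem would immediately yield a $\tilde{o}(n^\omega)$-time algorithm for the Well-conditioned Rank-Finding Problem, contradicting the hypothesized hardness. I do not anticipate any substantive obstacle here, since all three non-trivial reductions in the chain have been carried out earlier in the section; the only bookkeeping required is tracking that the condition number is preserved up to $\poly(n)$ factors at each stage, which was already verified in the proof sketches of Lemmas~\ref{lem:fc_to_lda_c} and~\ref{lem:amp-c}.
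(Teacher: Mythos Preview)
Your proposal is correct and follows essentially the same approach as the paper: chain the trivial reduction to Full Column-Rank, then Lemma~\ref{lem:fc_to_lda_c}, then Lemma~\ref{lem:amp-c}, together with the decision-to-search reduction (Lemma~\ref{lem:dec_to_search-r}). The only cosmetic difference is that you apply the decision-to-search step \emph{before} the gap amplification of Lemma~\ref{lem:amp-c} (using its search version), whereas the paper's text amplifies the decision gap first and then passes to search; both orderings are valid and indeed both appear in the paper's own reduction diagram (Figure~2).
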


We also state the following search to search reduction which follows directly from Lemma~\ref{lem:amp-c}: 

\begin{corollary}\label{cor:search-amp_c-restated}
For all constants $a,c,d > 0$, if there exists an $\tO(n^{a})$ time algorithm for well-conditioned $(1-1/n^c)$-approximate linear search then there exists an $\tO(n^{a})$-time algorithm for the well-conditioned $(1/n^d)$-approximate linear search problem.
\end{corollary}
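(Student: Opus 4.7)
The plan is to apply Lemma~\ref{lem:amp-c} as a black-box gap-amplifying reduction. Let $\mathcal{A}$ be the hypothesized $\tO(n^a)$-time algorithm for well-conditioned $(1-1/n^c)$-approximate linear search, and suppose we are given an input $(A, b)$ of the well-conditioned $(1/n^d)$-approximate linear search problem. I would first normalize to $b = \mathbf{1}^n$ by rescaling row $i$ of $A$ and entry $b_i$ by $1/b_i$; mirroring the argument at the end of the proof of Lemma~\ref{lem:fc_to_lda_c}, this preserves well-conditioning (the condition number grows by at most a $\poly(n)$ factor) and preserves the target approximation ratio up to a $\poly(n)$ factor in $d$.

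Next, I would invoke the deterministic many-one reduction of Lemma~\ref{lem:amp-c} with $c$ as given and $d' = d + O(1)$ to transform the rescaled instance $(A_1, \mathbf{1}^n)$ into an amplified instance $(A', \mathbf{1}^n)$ in time $\tO(\nnz(A))$; this yields $\kappa(A') = \poly(n)$ and $\nnz(A') = O(\nnz(A))$. I would run $\mathcal{A}$ on $(A', \mathbf{1}^n)$ in time $\tO(n^a)$ to obtain $x'$ with $\norm{A'x' - \mathbf{1}^n} \leq (1 - 1/n^c)\norm{\mathbf{1}^n}$. The explicit reconstruction step in the proof of Lemma~\ref{lem:amp} (setting $x = x'/(1-\gamma_{x'})$ after extracting the mean component $\gamma_{x'}$) then converts $x'$ into a solution $\tilde{x}$ for $(A_1, \mathbf{1}^n)$ of accuracy $1/n^{d'}$ in time $\tO(\nnz(A))$; undoing the initial row rescaling yields $x$ with $\norm{Ax - b} \leq (1/n^d)\norm{b}$.

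The total runtime is $\tO(n^a) + \tO(\nnz(A)) = \tO(n^a)$, where the latter equality uses that any meaningful algorithm for the problem must read its input, so implicitly $\nnz(A) \leq \tO(n^a)$. There is no substantive obstacle here: the corollary is essentially a restatement of Lemma~\ref{lem:amp-c} in algorithmic form, and the only minor bookkeeping issue—handling a general right-hand side $b$—is resolved by the standard rescaling step already carried out in the proof of Lemma~\ref{lem:fc_to_lda_c}.
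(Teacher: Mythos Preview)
Your core approach matches the paper's: the corollary is asserted there to follow directly from Lemma~\ref{lem:amp-c}, and your use of that lemma together with the explicit reconstruction $x = x'/(1-\gamma_{x'})$ from the proof of Lemma~\ref{lem:amp} is exactly right.

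The one wrinkle is your added rescaling step to handle a general right-hand side $b$. The analogy to Lemma~\ref{lem:fc_to_lda_c} does not carry over cleanly: in that lemma the $b_i$'s are Gaussian samples that have been conditioned to satisfy $1/n^2 \leq |b_i| \leq n^2$, and it is precisely this bound that makes the diagonal rescaling change $\kappa$ by at most a factor $n^4$. For an arbitrary instance $(A,b)$ of the well-conditioned search problem there is no such a priori control on the entries of $b$ --- some $b_i$ could be zero (so $1/b_i$ is undefined), or $\max_i|b_i|/\min_i|b_i|$ could be far larger than $\poly(n)$ --- and then the row rescaling can blow up the condition number well beyond $\poly(n)$. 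The paper simply states the corollary as an immediate consequence of Lemma~\ref{lem:amp-c}, which is formulated only for $b=\mathbf{1}^n$; so in effect both the paper's statement and your argument are cleanest when read as claims about instances with the all-ones right-hand side, and the extension to general $b$ via na\"ive rescaling is where a genuine difficulty appears.
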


\section{Hardness of Finding \texorpdfstring{$L_p$}{Lp}-approximate solutions}\label{sec:lp}
In this section, we will consider the following generalization of the $\eps(n)$-Approximate linear search problem to other norms:

\begin{definition}[$\eps(n)$-Approximate Linear Search over $L_p$-norm]\label{def:lp-e-als}
For a function $\eps: \mathbb{N} \rightarrow [0,1]$, given a satisfiable linear system $(A\in \R^{O(n) \times n}, b)$, find an $x \in \R^n$ such that $\norm{Ax-b}_p \leq \eps(n)\norm{b}_p$. 
\end{definition}

As in the case of $L_2$ norm, a 1-approximation is trivially achieved by $x = \mathbf{0}^m$ and we will prove that doing barely better than 1-approximation is conditionally hard. We first prove a search to search reduction between the approximate-linear-search problem in the $L_2$ and the $L_p$ norms (Lemma~\ref{lem:l_p}). Then combining with the conditional hardness for the $L_2$ search problem we get conditional hardness for the $L_p$ problem. 

\begin{corollary}\label{cor:dense_p}
Under conjecture~\ref{conj:rk-dense}, for every constant $c > 0$ and $p \geq 1$, the $(1-1/n^c)$-approximate linear search problem over the $L_p$-norm is $\tOm(n^{\omega})$ hard. Moreover, this remains true even when the matrix $A$ has full rank.
\end{corollary}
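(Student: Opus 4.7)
The plan is to prove Corollary~\ref{cor:dense_p} by establishing Lemma~\ref{lem:l_p}, a dimension-preserving search-to-search reduction from $L_2$-approximate linear search to $L_p$-approximate linear search, and then combining with Corollary~\ref{cor:dense-restated}. The core idea is to re-use the cyclic amplification construction of Lemma~\ref{lem:amp} but analyze its output in $L_p$ norm, using an elementary $L_p$ Poincaré inequality on the cycle in place of the spectral bound from Lemma~\ref{lem:expander}. Because this introduces no dimension or sparsity blow-up, the $\tOm(n^\omega)$-hardness of $L_2$ transfers verbatim to $L_p$, with the full-rank property preserved.

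Given an $L_2$ instance $(A \in \R^{O(n) \times n}, \mathbf{1}^n)$, form the $L_p$ instance $(A', \mathbf{1}^n) = (MA, \mathbf{1}^n)$ with $M$ the cyclic matrix from Lemma~\ref{lem:amp} (so that $M\mathbf{1}^n = \mathbf{1}^n$) and parameter $t = \poly(n)$ to be fixed. Satisfiability and the full-rank property pass through because $M$ is invertible. If the $L_p$ oracle returns $x$ with $\norm{A'x - \mathbf{1}^n}_p \leq (1 - 1/n^c)\,n^{1/p}$, decompose $Ax - \mathbf{1}^n = y_x - \gamma_x \mathbf{1}^n$ with $\sum_i (y_x)_i = 0$, so that $A'x - \mathbf{1}^n = My_x - \gamma_x\mathbf{1}^n$ and $(My_x)_i = (y_x)_i + t\bigl((y_x)_i - (y_x)_{i+1}\bigr)$. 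Pairing $A'x - \mathbf{1}^n$ against $\mathbf{1}^n$ via Hölder and using $\sum_i (My_x)_i = 0$ gives $|\gamma_x| \leq n^{-1/p}\norm{A'x - \mathbf{1}^n}_p \leq 1 - 1/n^c$. The triangle inequality then yields $\norm{My_x}_p \leq 2n^{1/p}$, and combining the identity $t\,\nabla y_x = My_x - y_x$ (where $(\nabla y)_i := y_i - y_{i+1}$) with the $L_p$ cycle Poincaré inequality $\norm{\nabla y_x}_p \geq \norm{y_x}_p/n$ (proved below) gives $\norm{y_x}_p \leq O(n^{1+1/p}/t)$. Choosing $t$ as a sufficiently large polynomial in $n$ and setting $x' := x/(1-\gamma_x)$ (well-defined since $|1-\gamma_x| \geq 1/n^c$), the identity $Ax' - \mathbf{1}^n = y_x/(1-\gamma_x)$ together with the norm comparison $\norm{v}_2 \leq n^{\max(0,\,1/2-1/p)}\norm{v}_p$ yields $\norm{Ax' - \mathbf{1}^n}_2 \leq \norm{\mathbf{1}^n}_2/n^{c'}$ for any prescribed $c' > 0$. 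A hypothetical $\tilde o(n^\omega)$-time $L_p$ oracle would therefore give a $\tilde o(n^\omega)$-time algorithm for the $(1-1/n^{c'})$-approximate $L_2$ search problem, contradicting Corollary~\ref{cor:dense-restated}.

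The $L_p$ cycle Poincaré inequality $\norm{\nabla y}_p \geq \norm{y}_p/n$ for mean-zero $y \in \R^n$ admits a short elementary proof: picking $i^\star$ with $y_{i^\star} = \max_j y_j$ and $i_\star$ with $y_{i_\star} = \min_j y_j$, the constraint $\sum_j y_j = 0$ forces $y_{i^\star} - y_{i_\star} \geq \norm{y}_\infty$; telescoping along the shorter cyclic arc between $i^\star$ and $i_\star$ bounds this difference by $\norm{\nabla y}_1$. Combined with $\norm{y}_\infty \geq \norm{y}_p/n^{1/p}$ and the Hölder bound $\norm{\nabla y}_1 \leq n^{1-1/p}\norm{\nabla y}_p$, this yields the claim for every $p \geq 1$.

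I expect the main obstacle to be simply tracking polynomial parameters across all the inequalities, in particular making sure that the $O(n)$ loss in the $L_p$ Poincaré inequality (as opposed to $O(n^2)$ in the $L_2$ setting of Lemma~\ref{lem:expander}) and the at-most $\sqrt{n}$ loss in the final $L_p \to L_2$ norm conversion are both absorbed into the choice of $t$ and of $c'$; these are polynomial in $n$, so the $\tOm(n^\omega)$ conclusion is unaffected. Since the reduction preserves the dimension, sparsity, and full-rank structure of $A$, combining it with Corollary~\ref{cor:dense-restated} yields $\tOm(n^\omega)$ hardness of the $(1-1/n^c)$-approximate $L_p$ search problem for every constant $c > 0$ and every $p \geq 1$, including the full-rank case claimed in the corollary.
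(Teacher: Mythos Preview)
Your proposal is correct, but it takes a genuinely different and more elaborate route than the paper. The paper's Lemma~\ref{lem:l_p} does not introduce any new transformation of the instance: it feeds the \emph{same} system $(A,\mathbf{1}^m)$ to the $L_p$ oracle, writes the returned point as $Ax' = \alpha\mathbf{1}^m + u$ with $\langle u,\mathbf{1}^m\rangle = 0$, bounds $|1-\alpha|$ via the tangent-hyperplane convexity of the $L_p$ ball at $\mathbf{1}^m$ (Claim~\ref{claim:p_all1}, essentially your H\"older pairing), bounds $\norm{u}_2$ by the crude comparison $\norm{u}_p \geq \norm{u}_2/n$ together with the triangle inequality, and then rescales (Claim~\ref{claim:lin-eq-2}) to obtain a $(1-1/n^c)$-approximate $L_2$ solution. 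No cyclic matrix $M$, no Poincar\'e inequality.

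Your route instead pre-composes with the amplification matrix $M$ of Lemma~\ref{lem:amp} and establishes an elementary $L_p$ Poincar\'e inequality on the cycle. This buys a quantitatively stronger conclusion: from a $(1-1/n^c)$-approximate $L_p$ solution on $(MA,\mathbf{1}^n)$ you recover a $(1/n^{c'})$-approximate $L_2$ solution on $(A,\mathbf{1}^n)$ for any prescribed $c'$, whereas the paper's direct comparison only yields $(1-1/n^c)$-approximate in $L_2$ (and in fact needs $c$ sufficiently large for the bound $\norm{u}_2 \leq n^{c/6}$ used there to hold). For the corollary as stated this extra strength is not needed, so the paper's argument is the shorter one; your version effectively folds the gap amplification of Lemma~\ref{lem:amp} and the $L_2\to L_p$ norm change into a single step.
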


\begin{proof}
The proof follows directly by combining Corollary~\ref{cor:dense-restated} and Lemma~\ref{lem:l_p}.
\end{proof}

Let us now prove the main reduction of this section:

\begin{lemma}\label{lem:l_p}
Let $(A_{m \times n}, \mathbf{1^m})$ be a linear system with $m = O(n)$. Given an $x'$ such that $\norm{Ax' -\mathbf{1^m}}_p < (1-1/n^{c/3})\norm{\mathbf{1^m}}_p$ we can create an $x$ in time $O(\nnz(A))$ such that $\norm{Ax-\mathbf{1^m}}_2 < (1-1/n^c)\norm{\mathbf{1^m}}_2$.
\end{lemma}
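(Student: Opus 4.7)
The plan is to output $x = \alpha^\star x'$ where $\alpha^\star$ is the scalar that $L_2$-optimally rescales $y \coloneqq Ax'$ along the ray through the origin. Writing $z = y - \mathbf{1^m}$, the hypothesis becomes $\norm{z}_p < (1 - 1/n^{c/3}) m^{1/p}$. I take $\alpha^\star = \iprod{y, \mathbf{1^m}} / \norm{y}_2^2$, so that $Ax = \alpha^\star y$ is the orthogonal projection of $\mathbf{1^m}$ onto $\{\beta y : \beta \in \R\}$ and hence $\norm{Ax - \mathbf{1^m}}_2^2 = m - \iprod{y, \mathbf{1^m}}^2/\norm{y}_2^2$. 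Everything is computable in $O(\nnz(A))$ time: $y$ by one sparse matrix-vector product, and the inner product and squared norm in $O(m) \leq O(\nnz(A))$ additional time.

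The analysis reduces to a lower bound on $\iprod{y, \mathbf{1^m}}$ and an upper bound on $\norm{y}_2^2$. For the former, $\iprod{y, \mathbf{1^m}} = m + \iprod{z, \mathbf{1^m}}$, and H\"older yields $\abs{\iprod{z, \mathbf{1^m}}} \leq \norm{z}_1 \leq m^{1-1/p}\norm{z}_p < (1 - 1/n^{c/3}) m$, giving $\iprod{y, \mathbf{1^m}} > m/n^{c/3}$. For the latter, the triangle inequality gives $\norm{y}_2 \leq m^{1/2} + \norm{z}_2$, and the standard $L_p$--$L_2$ comparisons give $\norm{z}_2 \leq \norm{z}_p < m^{1/p}$ for $p \leq 2$ and $\norm{z}_2 \leq m^{1/2 - 1/p}\norm{z}_p < m^{1/2}$ for $p \geq 2$. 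Either way $\norm{y}_2^2 \leq 4 m^{2\max(1/p,\,1/2)}$, and therefore $\iprod{y, \mathbf{1^m}}^2/\norm{y}_2^2 \geq m^{2 - 2\max(1/p, 1/2)} / (4 n^{2c/3})$.

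For $p \geq 2$ this ratio is $\Omega(m/n^{2c/3})$, which for any $c > 0$ exceeds the $2m/n^c$ required to conclude $\norm{Ax - \mathbf{1^m}}_2^2 < (1 - 1/n^c)^2 m$; indeed for $p \geq 2$ the unscaled choice $x = x'$ already suffices by direct norm equivalence, since $\norm{z}_2 < (1-1/n^{c/3}) m^{1/2} \leq (1-1/n^c) m^{1/2}$. For $p \in [1,2)$, the combined inequality requires $n^{c/3} \geq 8 m^{2/p - 1}$, which since $m = O(n)$ holds whenever $c \geq 6/p - 3$; the tight case $p = 1$ needs $c \gtrsim 3$, matching precisely the factor of $3$ baked into the gap conversion $1/n^{c/3} \mapsto 1/n^c$ in the statement.

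The main obstacle is the $p < 2$ regime: an adversarial $z$ concentrated on a single coordinate can inflate $\norm{y}_2$ all the way to $\Theta(m^{1/p})$ while keeping $\norm{z}_p$ just below $m^{1/p}$, so the unscaled output gives no $L_2$ improvement whatsoever. The OLS scaling $\alpha^\star$ salvages a nontrivial improvement because the $L_1$ bound forces $\sum_i y_i \geq m/n^{c/3}$ even in that worst case, so rescaling to match the correct average converts the weak $L_p$ guarantee into a genuine $L_2$ residual bound at the cost of the factor-$3$ blowup in the inverse gap. For Corollary~\ref{cor:dense_p} this threshold is harmless: for any $c$ above the threshold the corollary is immediate from Corollary~\ref{cor:dense-restated}, and instances with smaller $c$ (harder problems with tighter gaps) follow a fortiori because any algorithm for them would solve the easier instance above the threshold.
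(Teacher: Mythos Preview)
Your proof is correct and follows the same overall strategy as the paper: decompose $y = Ax'$ along the $\mathbf{1}^m$ direction, use H\"older to show the component along $\mathbf{1}^m$ is at least $m/n^{c/3}$, and then rescale $x'$ by a scalar. Both arguments need $c$ to exceed a threshold (the paper silently assumes ``$c$ is a large enough constant'' as well) and both handle this identically for the downstream corollary.

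The difference is in how the rescaling is carried out. The paper writes $y = \alpha \mathbf{1}^m + u$ with $\iprod{u,\mathbf{1}^m}=0$, bounds $|1-\alpha|$ via a geometric tangent-plane argument about the $L_p$ ball (Claim~\ref{claim:p_all1}), separately bounds $\norm{u}_2$ by norm equivalence, and then invokes an auxiliary claim (Claim~\ref{claim:lin-eq-2}) that picks the fixed scalar $\eps(n)/\alpha$. You instead take the OLS-optimal scalar $\alpha^\star = \iprod{y,\mathbf{1}^m}/\norm{y}_2^2$ and use the closed-form residual $m - \iprod{y,\mathbf{1}^m}^2/\norm{y}_2^2$, which sidesteps both auxiliary claims. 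Your H\"older bound $|\iprod{z,\mathbf{1}^m}| \le m^{1-1/p}\norm{z}_p$ is exactly equivalent to what Claim~\ref{claim:p_all1} yields, so nothing is lost there. The payoff of your route is a cleaner one-shot computation and a sharper threshold (roughly $c > 6/p - 3$, so $c \gtrsim 3$ at $p=1$, versus the paper's implicit $c \gtrsim 12$ coming from the crude $\norm{u}_2 \le n^2 \le n^{c/6}$ step); the paper's modular packaging into Claims~\ref{claim:lin-eq-2} and~\ref{claim:p_all1} buys reusability but at some cost in tightness.
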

\begin{proof}
Let $y = Ax' = \alpha\mathbf{1^m}+u$ where $u \cdot \mathbf{1^m} = 0$. Note that 
 \begin{align*}
     \norm{Ax'-\mathbf{1}}_p &= \norm{\alpha\mathbf{1}+z-\mathbf{1}}_p\\
     &= \norm{(\alpha-1)*\mathbf{1}+u}_p\\
     &\geq \norm{(\alpha-1)*\mathbf{1}}_p \hspace{30pt} \text{[By Claim~\ref{claim:p_all1} as $p \geq 1$ and $u \cdot \mathbf{1^m} = 0$]}\\
     &= \abs{1-\alpha}\norm{\mathbf{1}}_p.
 \end{align*}
Combining this with $\norm{Ax'-\mathbf{1}}_p < (1-1/n^{c/3})\norm{\mathbf{1}}_p$ gives us that $1/n^{c/3} \leq \alpha \leq 2$, which implies $\abs{\alpha} \leq 2$. We also have:

 \begin{align*}
     \norm{Ax'-\mathbf{1}}_p &= \norm{\alpha\mathbf{1}+z-\mathbf{1}}_p\\
     &= \norm{(\alpha-1)*\mathbf{1}+u}_p\\
     &\geq \norm{u}_p - \norm{(\alpha-1)*\mathbf{1}}_p \hspace{10pt} \text{[By triangle inequality for $L_p$ norm]}\\
     &\geq \norm{u}_2/n - \abs{1-\alpha}\norm{\mathbf{1}}_p \hspace{10pt} \text{[As $\norm{u}_p \geq \norm{u}_2/n$ for all $p \geq 1$]}
 \end{align*}
This implies that $\norm{u}_2 \leq n(\norm{Ax'-\mathbf{1}}_p+\abs{1-\alpha}\norm{\mathbf{1}}_p) \leq 2n\norm{\mathbf{1}^m}_p \leq n^2 \leq n^{c/6}$ as $m = O(n)$ and $c$ is a large enough constant.

Now using $\abs{\alpha} \leq 2$ and $\norm{u}_2 \leq n^{c/6}$ we can apply Claim~\ref{claim:lin-eq-2} (with $\eps(n) = 2/n^c, b = \mathbf{1}^m$) to find an $x$ in time $O(\nnz(A))$ such that $\norm{Ax - \mathbf{1}^m}_2 < (1-1/n^c)\norm{\mathbf{1}^m}_2$ 
\end{proof}

We now prove the outstanding claims that were used in the proof of the main lemma above.

\begin{claim}\label{claim:lin-eq-2}
Let $(A \in \mathbb{R}^{m \times n}, b \in \mathbb{R}^m, m = O(n))$ be a linear system. Let $x' \in \mathbb{R}^n$ and let $Ax' = \alpha \cdot b+u$ where $\iprod{u, b} = 0$ such that
    $$\norm{u} < \norm{b}\abs{\alpha}\sqrt{\frac{1-\eps(n)}{\eps(n)}}$$

Given $x'$ we can find an $x$ such that $\norm{Ax - b} < (1-\eps(n)/2)\norm{b}$ in time $O(\nnz(A))$.
\end{claim}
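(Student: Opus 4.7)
The plan is to search within the one-parameter family $\{c x' : c \in \R\}$ and pick the optimal scalar $c$. Since $Ax' = \alpha b + u$ with $\iprod{u, b} = 0$, the Pythagorean theorem gives
$$\norm{A(cx') - b}^2 = \norm{(c\alpha - 1) b + cu}^2 = (c\alpha - 1)^2 \norm{b}^2 + c^2 \norm{u}^2,$$
a convex quadratic in $c$ whose unique minimizer is $c^* = \alpha \norm{b}^2 / (\alpha^2 \norm{b}^2 + \norm{u}^2)$, with minimum value $\norm{u}^2 \norm{b}^2 / (\alpha^2 \norm{b}^2 + \norm{u}^2)$. Geometrically, $c^* A x'$ is just the orthogonal projection of $b$ onto the line spanned by $Ax'$. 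Note the hypothesis $\norm{u} < \norm{b}\abs{\alpha}\sqrt{(1-\eps(n))/\eps(n)}$ forces $\alpha \neq 0$ (the case $b=0$ is trivial with $x=0$), so $c^*$ is well-defined; and if $\norm{u} = 0$ then $x = x'/\alpha$ already satisfies $Ax = b$ exactly.

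Next I would substitute the hypothesis on $\norm{u}$ into the formula for the minimum. Squaring and rearranging the hypothesis yields $\alpha^2 \norm{b}^2 / \norm{u}^2 > \eps(n)/(1-\eps(n))$, so
$$\frac{\norm{u}^2 \norm{b}^2}{\alpha^2 \norm{b}^2 + \norm{u}^2} \;=\; \frac{\norm{b}^2}{1 + \alpha^2\norm{b}^2/\norm{u}^2} \;<\; \frac{\norm{b}^2}{1 + \eps(n)/(1-\eps(n))} \;=\; (1-\eps(n)) \norm{b}^2.$$
Combining with the elementary inequality $\sqrt{1-\eps} \leq 1 - \eps/2$ for $\eps \in [0,1]$ (since $(1-\eps/2)^2 = 1 - \eps + \eps^2/4 \geq 1-\eps$) yields $\norm{A(c^* x') - b} < (1 - \eps(n)/2)\norm{b}$, as required. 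The output is $x = c^* x'$.

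For the running time, we need only produce $c^*$, which in turn requires $\alpha$ and $\norm{u}^2$. Compute $y = Ax'$ in $O(\nnz(A))$ time, then $\alpha = \iprod{y, b}/\norm{b}^2$ and $\norm{u}^2 = \norm{y}^2 - \alpha^2 \norm{b}^2$ in $O(m) = O(n) \leq O(\nnz(A))$ time, using the standing convention $\nnz(A) \geq \max(m,n)$. There is no serious obstacle here; the one conceptual point worth flagging is that the naive choice $x = x'/\alpha$, which zeroes out the component of $Ax' - b$ along $b$ but preserves $u$ entirely, gives $\norm{Ax - b} = \norm{u}/\abs{\alpha}$, a quantity only bounded by $\norm{b}\sqrt{(1-\eps(n))/\eps(n)}$, which is much larger than $\norm{b}$ when $\eps(n)$ is small. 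It is essential to instead take the full least-squares-optimal rescaling $c^*$, whose error shrinks to below $\sqrt{1-\eps(n)}\,\norm{b}$.
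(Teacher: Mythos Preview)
Your proof is correct and follows the same overall strategy as the paper: take $x = c\,x'$ for a suitable scalar $c$ and bound $\norm{Ax-b}^2$ via Pythagoras. The only difference is the choice of scalar. You take the least-squares optimal $c^* = \alpha\norm{b}^2/(\alpha^2\norm{b}^2 + \norm{u}^2)$, whereas the paper simply sets $c = \eps(n)/\alpha$; both choices yield $\norm{Ax-b}^2 < (1-\eps(n))\norm{b}^2$ and hence the claim. The paper's choice is slightly slicker in that it avoids computing $\norm{u}^2$ (only $\alpha$ is needed), while yours is the geometrically natural projection and of course can only do better. One small correction to your closing remark: it is \emph{not} essential to use the full optimal $c^*$ --- the paper's fixed choice $\eps(n)/\alpha$ already suffices, so your statement that the optimal rescaling is required overstates the case; what fails is only the naive $c = 1/\alpha$.
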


\begin{proof}
Let $x = (\eps(n)/\alpha)x'$, then
\begin{align*}
    \norm{Ax-b}^2 &= \norm{(\alpha (\eps(n)/\alpha)-1)\cdot b + (\eps(n)/\alpha)u}^2\\
    &= (\eps(n)-1)^2\norm{b}^2 + (\eps(n)/\alpha)^2\norm{u}^2 \hspace{15pt} \text{[As $\iprod{u, b} = 0$]}\\
    &< (1-\eps(n))\norm{b}^2 \hspace{30pt} \left[\text{As $\norm{u} < \norm{b}\abs{\alpha}\sqrt{\frac{1-\eps(n)}{\eps(n)}}$}\right]\\
\end{align*}
which implies that $\norm{Ax - b} < (1-\eps(n)/2)\norm{b}$.

As we can calculate $\alpha$ is time $O(\nnz(A))$, we can find $x$ in time $O(\nnz(A))$.
\end{proof}

\begin{claim}\label{claim:p_all1}
Let $u \in \mathbb{R}^n$ be a vector such that $\iprod{u, \mathbf{1}^n} = 0$. Then for all $p \geq 1$ we have that $\norm{\gamma \cdot \mathbf{1}^n + u}_p \geq \norm{\gamma \cdot \mathbf{1}^n}_p$.
\end{claim}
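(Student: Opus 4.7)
The plan is to deduce the inequality from the convexity of $x \mapsto |x|^p$ on $\mathbb{R}$, which holds precisely when $p \geq 1$ (and this is where the hypothesis $p \geq 1$ is essential). Setting $v := \gamma \cdot \mathbf{1}^n + u$, the orthogonality assumption $\iprod{u,\mathbf{1}^n} = 0$ says exactly that $\sum_i u_i = 0$, and therefore the coordinate-wise average of $v$ equals $\gamma$, i.e.\ $\frac{1}{n}\sum_i v_i = \gamma$.

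Next I would apply Jensen's inequality to the convex function $\phi(x) = |x|^p$, using the uniform distribution over the $n$ coordinates of $v$. This yields
\[
\frac{1}{n}\sum_{i=1}^n |v_i|^p \;=\; \frac{1}{n}\sum_{i=1}^n \phi(v_i) \;\geq\; \phi\!\left(\frac{1}{n}\sum_{i=1}^n v_i\right) \;=\; |\gamma|^p.
\]
Multiplying by $n$ and taking $p$-th roots gives $\norm{v}_p \geq n^{1/p}|\gamma| = \norm{\gamma \cdot \mathbf{1}^n}_p$, which is exactly what we want.

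There is essentially no obstacle here: once one recognizes that "$u \perp \mathbf{1}^n$ in the standard inner product" is the same as "$v$ has mean $\gamma$", the result is an immediate consequence of Jensen. The only subtlety worth flagging is that $|x|^p$ fails to be convex for $0 < p < 1$, so the claim genuinely needs $p \geq 1$; this is consistent with the hypothesis on $p$ used throughout Section~\ref{sec:lp}.
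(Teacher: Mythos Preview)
Your proof is correct and takes a somewhat different route from the paper's. The paper argues geometrically: the point $\gamma \cdot \mathbf{1}^n$ lies on the boundary of the $L_p$-ball of radius $\norm{\gamma \cdot \mathbf{1}^n}_p$, the supporting hyperplane there has normal direction $\mathbf{1}^n$, and since the ball is convex for $p \geq 1$, any point on that hyperplane---in particular $\gamma \cdot \mathbf{1}^n + u$---lies outside the ball. You instead apply Jensen's inequality directly to the scalar convex function $\phi(x) = |x|^p$, using that the coordinate average of $v = \gamma \cdot \mathbf{1}^n + u$ equals $\gamma$. Both arguments rest on convexity (which is precisely the content of $p \geq 1$), but yours is more self-contained: it avoids identifying the tangent hyperplane and reduces the claim to a one-line application of Jensen. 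The paper's geometric picture, on the other hand, makes the role of the hypothesis $\iprod{u,\mathbf{1}^n}=0$ visually transparent and carries over verbatim to $p = \infty$, whereas your argument would need a separate limiting step (or the easy observation that the maximum coordinate of $v$ is at least the mean $\gamma$) to cover that endpoint.
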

\begin{proof} 
Let $z = \norm{\gamma \cdot \mathbf{1}^n}_p$, consider the closed $p$-norm ball $(\sum_{i=1}^n \abs{x_i}^p)^{1/p} \leq z$. The point $\gamma \cdot \mathbf{1}^n$ lies on the boundary of this $p$-norm ball with the tangent hyperplane being $\iprod{v, \mathbf{1}^n} = 0$. We see that $u$ lies on the tangent plane and as the $p$-norm ball is convex we have that the point $\gamma \cdot \mathbf{1}^n + u$ lies outside the $p$-norm ball. Hence $\norm{\gamma \cdot \mathbf{1}^n + u}_p \geq \norm{\gamma \cdot \mathbf{1}^n}_p$.
\end{proof}

\section{Hardness of finding \texorpdfstring{$L_2$}{L2}-Approximate solutions on Word RAM}\label{sec:l2_w}

\begin{figure}[H]
\begin{tikzcd}[column sep=15ex, row sep=10ex]
\text{Rank-Finding} 
\arrow[d, leftrightarrow, "\text{Lemma}~\ref{lem:eq_w}"] 
\arrow[r, "\text{Lemma}~\ref{lem:rank_w}+\ref{lem:fc_to_lda_w}"] & 
\begin{tabular}{c}
$1/n^{O(1)}$-Approximate \\
Linear Decision
\end{tabular}
\arrow[d, "\substack{\text{Decision to Search}\\\text{Lemma}~\ref{lem:dec_to_search-w}}"]
\arrow[r, "\text{Lemma}~\ref{lem:amp-w}"] &
\begin{tabular}{c}
$(1-1/n^{\Omega(1)})$-Approximate \\
Linear Decision
\end{tabular}
\arrow[d, "\substack{\text{Decision to Search}\\\text{Lemma}~\ref{lem:dec_to_search-w}}"] 
\\
\text{Linear Decision} &
\begin{tabular}{c}
$1/n^{O(1)}$-Approximate \\
Linear Search
\end{tabular}
\arrow[r, "\text{Lemma}~\ref{lem:amp-w}"] &
\begin{tabular}{c}
$(1-1/n^{\Omega(1)})$-Approximate \\
Linear Search
\end{tabular}
\end{tikzcd}
\caption{Reductions on WordRAM preserving sparsity (up to $\plog(n)$ factors), dimension (up to constant factors) and bit complexity of entries (up to constant factors).}
\end{figure}
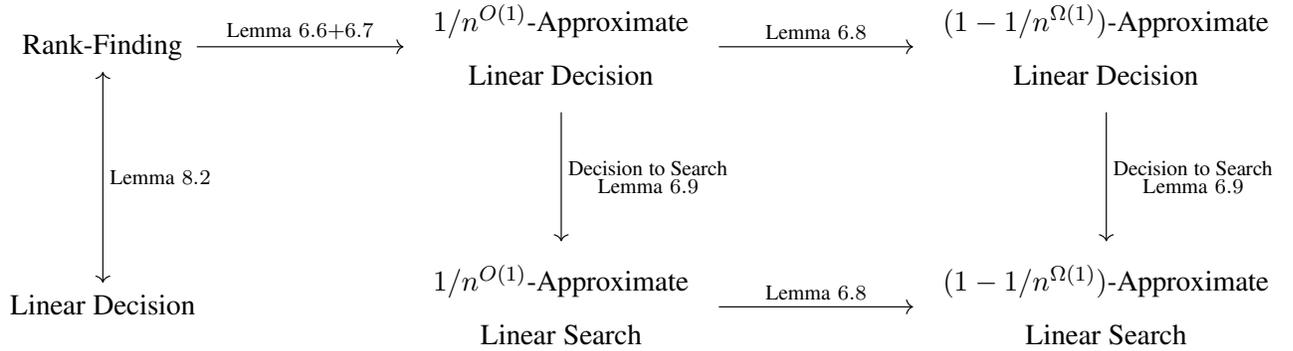

In Section~\ref{sec:l2} we gave conditional hardness for solving linear equations in RealRAM. In this section we will give hardness for WordRAM through modifications of the reduction for RealRAM.
We will assume that for a matrix/vector of dimension $m \times n$ in this section that all input entries are from $\mathbb{Z}$ and have $O(\log mn)$ bits. 

The reduction over the RealRAM (Theorem~\ref{thm:red-main-formal}) does not directly work for the WordRAM but the following modifications are sufficient for the reduction to go through:
\begin{enumerate}
    \item As we are in the WordRAM we need to argue that the final problem instance has bounded bit complexity if the starting problem has bounded bit complexity. To verify this we show that this is true for all the steps of the reduction.
    \item We sampled a random gaussian vector in the reduction (Lemma~\ref{lem:fc_to_lda}), this is not possible in the WordRAM. We will get around this issue (in Lemma~\ref{lem:fc_to_lda_w}) by sampling a random vector whose each entry is a random integer from a predefined range.
    \item The decision to search reduction for the RealRAM (Lemma~\ref{lem:dec_to_search-r}) was nearly trivial. This is not the case for the WordRAM as the solution can have large bit complexity and hence given a solution directly substituting to check if it is a good solution or not may require too much time (See Proof of Lemma~\ref{lem:dec_to_search-w} for more discussion). We give an alternative decision to search reduction in the WordRAM in Lemma~\ref{lem:dec_to_search-w}.
\end{enumerate}

We start by defining the Linear Decision Problem over Word-RAM:

\begin{definition}[Linear Decision Problem over Word-RAM]\label{def:ldp_w}
Given a linear system $(A, b)$ with $A \in \mathbb{Q}^{m \times n}$ with $\bc(A) = O(\log mn)$, distinguish between the following two sets of instances:
\begin{itemize}
    \item YES instances: There exists an $x$ such that $Ax = b$.
    \item NO instances: For all $x$, $Ax \neq b$.
\end{itemize}
\end{definition}

We will consider the following conjecture (analogous to Conjecture~\ref{conj:rk-dense}) for the Word RAM:

\begin{conjecture}[Rank-Finding Conjecture on WordRAM]\label{conj:rk-dense-w}
There exists no $\tilde{o}(n^{\omega})$-time randomized algorithm for finding the rank of a matrix $A \in \Z^{m \times n}$ with $m = O(n)$ and $\bc(A) = O(\log n)$, in the WordRAM model of computation.
\end{conjecture}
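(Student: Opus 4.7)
The statement is a conjecture that the paper adopts as a hardness assumption rather than a theorem it intends to prove, so a literal proof is out of reach: unconditional super-linear Word RAM lower bounds are a long-standing open problem. The natural plan to support the conjecture is to port the Baur--Strassen reduction---matrix multiplication linearly reduces to the determinant in the arithmetic circuit model---into the Word RAM setting, while carefully preserving bit complexity. If one could reduce $n \times n$ matrix multiplication on $O(\log n)$-bit integers to $\tilde{O}(1)$ rank-finding calls on an $O(n) \times O(n)$ integer matrix with $O(\log n)$-bit entries, then any $\tilde{o}(n^{\omega})$ algorithm for Word RAM rank-finding would yield a faster matrix multiplication algorithm, contradicting the definition of $\omega$.

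The concrete plan would proceed in three steps. First, given $A, B \in \Z^{n \times n}$ with $O(\log n)$-bit entries, build a structured $O(n) \times O(n)$ block matrix $M(A, B)$ whose rank encodes nontrivial linear information about the product $AB$; the canonical three-block construction with identity blocks on the diagonal and $A, B$ in the off-diagonal positions has full rank, and its Schur complement exposes $AB$, so suitable perturbations should let rank drops diagnose entries of $AB$. Second, show that $\tilde{O}(1)$ such rank queries suffice to recover $AB$, using random linear combinations over the output coordinates and polynomial identity testing---the analogue of the algebraic batching underlying Baur--Strassen. Third, verify that all intermediate entries of $M(A, B)$ remain $O(\log n)$-bit, so that the constructed instances lie in the regime the conjecture targets.

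The main obstacle is the output-size gap: matrix multiplication has $\Theta(n^2)$ output entries, while a single rank-finding call returns only $O(\log n)$ bits, so a naive one-query-per-entry reduction would only yield the weak lower bound $T_{\mathrm{rank}}(n) \geq n^{\omega - 2}$. Any useful reduction must therefore batch the recovery of the $n^2$ entries of $AB$ into a polylogarithmic number of rank queries, exactly the step where Baur--Strassen exploits the algebraic structure of differentiation in the circuit model; a Word RAM implementation would have to reproduce this batching without relying on the smoothness of an ambient arithmetic model. A secondary obstacle is that the Word RAM permits bit-level manipulations outside arithmetic circuits, so even a clean Baur--Strassen analogue in the algebraic model does not automatically transfer. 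For these reasons the paper treats the statement as a working assumption, justified by (i) the Baur--Strassen arithmetic-circuit lower bound, (ii) the absence of any rank-finding algorithm that does not essentially perform matrix multiplication, and (iii) the direct analogy with Conjecture~\ref{conj:rk-dense} over the Real RAM.
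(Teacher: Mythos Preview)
You correctly recognize that this is a conjecture the paper adopts as a hardness hypothesis, not a theorem it proves; the paper offers no proof and simply states it, noting only that the matching $\tO(n^\omega)$ upper bound makes the conjecture tight. Your extended sketch of a hypothetical Baur--Strassen-style Word RAM reduction and its obstacles is reasonable supporting commentary, but it goes well beyond anything the paper attempts---the paper's own justification is confined to the RealRAM analogue (Conjecture~\ref{conj:rk-dense}) and consists of the arithmetic-circuit Baur--Strassen result, the algorithmic consequences for matching, and the Schatten-norm evidence, with the WordRAM version introduced purely by analogy.
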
 

The conjecture is tight as using an easy randomized reduction~\cite{Storjohann05} Rank over integers with $O(\log n)$ bit complexity can be reduced to rank over finite fields $\text{GF}(\poly(n))$ which gives a $\tO(n^\omega)$ time randomized algorithm on WordRAM.

We will prove the following main theorem:

\begin{corollary}[Hardness of solving linear equation on WordRAM]\label{cor:dense_w}
For all constant $c > 0$, under Conjecture~\ref{conj:rk-dense}, there does not exist a $\tilde{o}(n^{\omega})$ time randomized algorithm for the $(1-1/n^c)$-approximate linear search problem $(A, b)$ with $\bc(A) = O((c+1)\log n)$ in the WordRAM model of computation. Moreover, this remains true even when the matrix $A$ in the given linear system $(A,b)$ has full rank.
\end{corollary}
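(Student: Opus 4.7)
The plan is to mirror the RealRAM argument from Corollary~\ref{cor:dense-restated} along the reduction chain depicted in the figure at the start of Section~\ref{sec:l2_w}, but verifying at every step that the bit complexity of the produced instances stays $O((c+1)\log n)$. The chain I would follow is: starting from Conjecture~\ref{conj:rk-dense-w}, compose the WordRAM rank-to-full-rank reduction (Lemma~\ref{lem:rank_w}) with the full-rank-to-approximate-decision reduction (Lemma~\ref{lem:fc_to_lda_w}) to obtain $\tOm(n^\omega)$-hardness of the $1/n^{O(1)}$-approximate linear decision problem with $O(\log n)$-bit entries; then apply the gap amplification Lemma~\ref{lem:amp-w} to push the gap up to $1-1/n^c$; and finally invoke the WordRAM decision-to-search reduction of Lemma~\ref{lem:dec_to_search-w} to transfer hardness from the decision version to the search version. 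Since each reduction is quasi-linear in its input size, the composition preserves the $\tOm(n^\omega)$ lower bound up to polylogarithmic factors.

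The first thing I would check carefully is that the bit complexity budget is honored. The rank-to-full-rank step adds sparse random integer vectors (in place of the sparse Gaussians of Lemma~\ref{lem:add_random}), and these can be sampled from a range of size $\poly(n)$, so they carry $O(\log n)$ bits each; this preserves bit complexity of the input. The full-rank-to-decision step (Lemma~\ref{lem:fc_to_lda_w}) samples an integer right-hand side $b$ from a polynomial range and then rescales rows so that $b$ becomes a vector whose entries are small integers, an operation which at worst blows bit complexity up by a constant factor. The amplification step (Lemma~\ref{lem:amp-w}) left-multiplies by the two-banded matrix with entries $t+1$ and $-t$ where $t = \poly(n)$; since $t$ and $t+1$ have $O(c\log n)$ bits, the resulting matrix $A'$ has entries of bit complexity $O((c+1)\log n)$, matching the claim. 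Adding these up through the chain one sees that the $O((c+1)\log n)$-bit bound in the corollary is exactly what comes out.

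The main obstacle, and the point where the WordRAM story genuinely diverges from the RealRAM one, is the decision-to-search direction encoded in Lemma~\ref{lem:dec_to_search-w}. In RealRAM (Lemma~\ref{lem:dec_to_search-r}) one simply calls the search oracle, plugs the returned $x$ into $\norm{Ax-b}_2 \le \eps\norm{b}_2$, and decides. In WordRAM the candidate $x$ could in principle have bit complexity much larger than $O(\log n)$, which would make even the verification $Ax$ take more than $\tO(n^\omega)$ time and kill the reduction. So the planned argument will need to invoke Lemma~\ref{lem:dec_to_search-w} (which presumably ensures, or enforces, bounded bit complexity in the candidate answer, possibly by rounding or by reducing to a well-conditioned instance first) to certify the answer within the target running time.

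Once Lemma~\ref{lem:dec_to_search-w} is in hand, the proof of Corollary~\ref{cor:dense_w} itself is a one-line composition: any $\tilde{o}(n^\omega)$-time randomized algorithm for $(1-1/n^c)$-approximate linear search with $O((c+1)\log n)$-bit entries would, by the composed reductions, give a $\tilde{o}(n^\omega)$-time randomized algorithm for rank-finding on $O(\log n)$-bit matrices, contradicting Conjecture~\ref{conj:rk-dense-w}. As in the RealRAM case, the instances produced by the chain in the YES case are square and full rank (this is preserved by Lemma~\ref{lem:rank_w}, Lemma~\ref{lem:fc_to_lda_w}, and Lemma~\ref{lem:amp-w}), so the ``moreover'' clause follows for free.
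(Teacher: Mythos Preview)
Your proposal is correct and follows essentially the same route as the paper: the paper simply packages the chain Lemma~\ref{lem:rank_w} $\to$ Lemma~\ref{lem:fc_to_lda_w} $\to$ Lemma~\ref{lem:amp-w} into the single Lemma~\ref{lem:red-main_w}, composes it with Conjecture~\ref{conj:rk-dense-w}, and then applies Lemma~\ref{lem:dec_to_search-w}, which is exactly the decomposition you spell out. Your one speculative remark about how Lemma~\ref{lem:dec_to_search-w} works (``possibly by rounding or by reducing to a well-conditioned instance'') is off --- the lemma actually allows $\bc(x)=\tO(n)$ and estimates $\norm{Ax-b}$ via random inner products rather than bounding $\bc(x)$ --- but since you invoke it as a black box this does not affect the argument.
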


The conditional hardness in the above theorem is tight as there exists $O(n^\omega)$ time algorithms for exactly solving linear equations over $\Z$ in WordRAM~\cite{Storjohann05,PS12,BirmpilisLS19}.

We are also able to establish the following corollary of one of the intermediate steps in our reduction (Lemma~\ref{lem:amp-w}) which reduces approximately solving linear systems with low error to approximately solving linear systems with barely non-trivial error. \\
\begin{corollary}\label{cor:search-amp_w}
For all constant $c,d > 0$, if over WordRAM there exists as $\tO(n^{a})$ time algorithm for $(1-1/n^c)$-Approximate Linear Search then there exists a $\tO(n^{a})$-time algorithms for $(1/n^d)$-Approximate Linear Search problem.
\end{corollary}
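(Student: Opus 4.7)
The plan is to obtain this corollary as a direct consequence of the WordRAM gap-amplification lemma (Lemma~\ref{lem:amp-w}), the WordRAM analogue of Lemma~\ref{lem:amp}. That lemma, applied with parameters $\eps(n) = 1/n^d$ and $\delta(n) = 1/n^c$, gives a deterministic many-one reduction from the $(1/n^d)$-Approximate Linear Search problem on an instance $(A, \mathbf{1}^n)$ to the $(1-1/n^c)$-Approximate Linear Search problem on a new instance $(A', \mathbf{1}^n)$ of comparable bit complexity, together with an efficient post-processing step that converts an approximate solution to the new instance into one for the old.

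First, given an input $(A, b)$ for the $(1/n^d)$-Approximate Linear Search problem, I would rescale the rows of $A$ so that the target vector becomes $\mathbf{1}^n$, using the same preprocessing as in Lemma~\ref{lem:fc_to_lda_w}. Next, apply Lemma~\ref{lem:amp-w} to produce the amplified instance $(A', \mathbf{1}^n)$; by the construction in the proof of Lemma~\ref{lem:amp}, this amounts to pre-multiplying $A$ by the circulant matrix $M$ with $M_{ii} = t+1$, $M_{i,i+1} = -t$, and a cyclic corner entry $-t$, where $t = n^{c+d+1}$. Since $t = \poly(n)$, the matrix $M$ has integer entries of bit complexity $O((c+d)\log n)$, so $A'$ has bit complexity $O(\bc(A) + (c+d)\log n)$ and sparsity $\tO(\nnz(A))$. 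Then I would invoke the hypothesized $\tO(n^a)$-time algorithm $\mathcal{B}$ for $(1-1/n^c)$-Approximate Linear Search on $(A', \mathbf{1}^n)$ to obtain some $x'$, and finally recover the $(1/n^d)$-approximate solution $x = x'/(1-\gamma_{x'})$ of the original system, where $\gamma_{x'}$ is the projection of $Ax' - \mathbf{1}^n$ onto the all-ones direction, exactly as in the proof of Lemma~\ref{lem:amp}.

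The main obstacle, relative to the RealRAM reduction in Corollary~\ref{cor:search-amp-restated}, is ensuring that all intermediate computations remain within the WordRAM model. The construction $A' = MA$ is safe because $M$ has $\poly(n)$-bounded integer entries, so bit complexity grows only by an additive $O((c+d)\log n)$. The post-processing division $x'/(1-\gamma_{x'})$ is the most delicate step: I must argue that $|1-\gamma_{x'}| \geq 1/n^c$ is inverse-polynomially bounded away from $0$, which follows from the analysis of Lemma~\ref{lem:amp} (otherwise $x'$ itself would violate the $(1-1/n^c)$-approximation guarantee). Consequently the resulting rational $x$ has bit complexity only an additive $O(c\log n)$ more than $x'$, and $\gamma_{x'}$ itself is computable in $\tO(\nnz(A))$ time. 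With all bit complexities controlled, the reduction adds only $\tO(\nnz(A))$ time on top of the call to $\mathcal{B}$, yielding an overall $\tO(n^a)$-time algorithm for the $(1/n^d)$-Approximate Linear Search problem.
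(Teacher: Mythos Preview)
Your approach is essentially the paper's: invoke Lemma~\ref{lem:amp-w} with $\eps(n)=1/n^{d}$ and $\delta(n)=1/n^{c}$, then use the post-processing $x = x'/(1-\gamma_{x'})$ from Lemma~\ref{lem:amp}. The paper's proof is terser---it just observes that the input size is $\tO(n^{2})$, hence $a\geq 2$, so the $\tO(n^{2})$ overhead of the reduction is absorbed into $\tO(n^{a})$---whereas you spell out the bit-complexity bookkeeping for $A'=MA$ and for the division by $1-\gamma_{x'}$; that extra detail is fine and matches the intended argument.

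One step in your write-up does not work as stated: the initial ``rescale the rows so that $b$ becomes $\mathbf{1}^{n}$, as in Lemma~\ref{lem:fc_to_lda_w}.'' In that lemma the rescaling succeeds only because $b$ is a \emph{random} vector whose entries satisfy $1/n^{2}\leq |b_i|\leq n^{2}$ with high probability; for a worst-case input to Approximate Linear Search some $b_i$ may be zero (so division is undefined) or exponentially small (so bit complexity blows up). The paper sidesteps this by applying Lemma~\ref{lem:amp-w} directly, which is stated only for target vector $\mathbf{1}^{n}$, so strictly speaking neither argument handles fully general $b$; if you want to close this gap you need a different normalization (e.g., append the equation $\langle \mathbf{0},x\rangle = 0$ to zero-$b_i$ rows, or apply a random orthogonal transformation to $(A,b)$ first), not the row-scaling of Lemma~\ref{lem:fc_to_lda_w}.
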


Analogous to Theorem~\ref{thm:red-main-formal}, we will prove a reduction from the rank finding problem to approximate linear decision problem from which Corollary~\ref{cor:dense_w} will follow by a decision to search reduction (Lemma~\ref{lem:dec_to_search-w}). Formally,

\begin{lemma}[Reduction from exact to approximate]\label{lem:red-main_w}
For all constant $c$, there exists a randomized Turing reduction from the rank-finding problem on $A \in \Z^{m \times n}$ to the $(1 - 1/n^c)$-Approximate linear decision problem on $n' \times n'$-square matrices with bit complexity $\bc(A)=O((c+1)\log(n))$, with $n' = O(\max(m,n))$, where in the YES case we have the additional property that the matrices produced have full rank. The reduction runs in time $\tO(c \cdot \nnz(A))$, produces $\plog(mn)$ instances of the approximate linear decision problem and works w.h.p..
\end{lemma}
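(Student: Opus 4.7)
The plan is to mirror the three-stage decomposition of Theorem~\ref{thm:red-main-formal} verbatim, but to furnish WordRAM-friendly analogues of Lemmas~\ref{lem:rank}, \ref{lem:fc_to_lda}, and \ref{lem:amp} that additionally certify a bit-complexity bound. Concretely, I will establish (i)~Lemma~\ref{lem:rank_w}: rank-finding reduces to Full-Rank with $O(\log n)$-bit entries, (ii)~Lemma~\ref{lem:fc_to_lda_w}: Full-Rank reduces to $1/n^{O(1)}$-approximate linear decision with $O(\log n)$-bit entries, and (iii)~Lemma~\ref{lem:amp-w}: the amplification from gap $1/n^{O(1)}$ to gap $1 - 1/n^{c}$ blows up bit complexity by exactly an $O((c+1)\log n)$ factor. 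Composing these three gives the claimed reduction; the $c$ factor in the running time $\tO(c\cdot\nnz(A))$ comes from the cost of arithmetic on $O((c+1)\log n)$-bit words.

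For steps (i) and (ii), the only obstruction in the RealRAM proof is the use of Gaussians for $z$ in Lemma~\ref{lem:add_random} and for $b$ in Lemma~\ref{lem:fc_to_lda}. I will replace both by uniform samples from $\{-N,\dots,N\}$ with $N = n^{O(1)}$. The anti-concentration needed is Schwartz--Zippel style: for any fixed nonzero $v\in\R^m$, $\Pr[\iprod{v,z}=0] \le O(1/N)$ when $z$ is uniform on a grid of width $\Theta(N)$ restricted to the support pattern dictated by Lemma~\ref{lem:add_random}. The same replacement turns the magnitude bound $|\iprod{w,b}|\ge \norm{b}/n^{O(1)}$ from Lemma~\ref{lem:fc_to_lda} into a statement about an integer-valued random variable in a window of width $N$, which gives the same $1/n^{O(1)}$-gap with probability $1 - O(1/\sqrt n)$ amplifiable by independent repetition.

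For step (iii), the crucial observation is that the circulant matrix $M$ in the proof of Lemma~\ref{lem:amp} has only the integer entries $\{-t, 0, t+1\}$ with $t = n/(\delta(n)\eps(n))$; taking $\eps(n) = 1/n^{O(1)}$ and $\delta(n) = 1/n^{c}$ yields $t = n^{O(c+1)}$, i.e.\ $\bc(M) = O((c+1)\log n)$. Since each row of $M$ has at most two nonzero entries, the product $MA$ can be formed in time $\tO(c\cdot\nnz(A))$ and its entries satisfy $\bc(MA) \le \bc(M) + \bc(A) + O(\log n) = O((c+1)\log n)$. The identity $M\mathbf{1}^{n}=\mathbf{1}^{n}$ and the quadratic form bound from Lemma~\ref{lem:expander} carry over unchanged since they are algebraic.

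The main obstacle will be the rescaling trick at the end of the RealRAM Lemma~\ref{lem:fc_to_lda}, which produces an instance with right-hand side $\mathbf{1}^{n}$ by dividing row $i$ by $b_i$; this destroys integrality. I plan to handle this in one of two ways. The cleaner route is to sample $b$ not uniformly but from the set of integer vectors all of whose entries lie in a narrow window $\{N, N+1, \dots, 2N\}$; then rescaling row $i$ by $\prod_{j\ne i}b_j$ yields a common right-hand side $\prod_j b_j$ with bit complexity only $O(n\log N)$, which is too large. The alternative (and preferable) route is to skip the rescaling entirely, keep $b$ as a general integer vector of bit complexity $O(\log n)$, and prove a variant of Lemma~\ref{lem:amp-w} that amplifies the gap for an arbitrary integer $b$ by working with the circulant matrix $M$ whose fixed vector $b'$ is constructed to match $b$ (e.g.\ by a local reweighting that preserves $O((c+1)\log n)$ bits). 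Once this variant is in place, a union bound over the $\plog(mn)$ instances produced in the composition, together with the WordRAM decision-to-search Lemma~\ref{lem:dec_to_search-w}, completes the proof of Lemma~\ref{lem:red-main_w} and yields Corollary~\ref{cor:dense_w}.
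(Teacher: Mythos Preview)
Your three-stage decomposition and the replacements you propose for steps (i) and (iii) match the paper's proof essentially verbatim: compose Lemmas~\ref{lem:rank_w}, \ref{lem:fc_to_lda_w}, and~\ref{lem:amp-w}, replacing Gaussians by bounded-range integers (Schwartz--Zippel anti-concentration) and tracking that the circulant $M$ in the amplification step contributes an additive $O((c+1)\log n)$ to the bit complexity. For step (ii) the paper uses rounded Gaussians rather than uniform integers, but your variant works equally well.

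Where you diverge from the paper is in your handling of the rescaling at the end of Lemma~\ref{lem:fc_to_lda_w}, and here you are overcomplicating matters. The paper samples $b' \sim \mathcal N(0,1/n)^n$ and rounds each entry to the nearest multiple of $1/N$ with $N = n^5$, so that $b_i = a_i/N$ with $a_i$ an integer; the subsequent magnitude bound $1/n^2 \le |b_i| \le n^2$ forces $|a_i|\in[n^3,n^7]$, i.e.\ $a_i$ has $O(\log n)$ bits. Hence $1/b_i = N/a_i$ is a rational with $O(\log n)$-bit numerator and denominator, and $M'_{ij} = M_{ij}/b_i$ has bit complexity $O(\bc(M)+\log n)$ as a rational. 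The claim ``$M'\in\Z^{n\times n}$'' in the lemma statement is loose; what is actually used downstream is only that the entries have $O(\log n)$ bit complexity, and the amplification in Lemma~\ref{lem:amp-w} composes with rational entries without change (each row of $M\!A$ is a two-term integer combination of rows of $A$, so bit complexity grows by $\bc(M)+O(1)$). Both of your proposed workarounds --- rescaling by $\prod_{j\ne i} b_j$ or redesigning the amplification for general $b$ --- are therefore unnecessary; the straightforward rescaling already preserves bounded bit complexity over $\Q$, and that is all the composition needs.
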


To prove the above Lemma, we will go through each of the steps in the proof of Theorem~\ref{thm:red-main} and see that all of them work for WordRAM with small modifications. We start with a modification of Lemma~\ref{lem:add_random}.
\begin{lemma}\label{lem:add_random_w} 
Let $M \in \Z^{m \times n}$ be a matrix of rank $\geq r$. Then there exists a sampling procedure to sample $z \in \Z^m$, with $\nnz(z) = \min(m, m\log^2(m)/(m-r +1))$ and $\bc(z) = O(\log \max(m, n))$, that runs in time $\tO(\nnz(z))$, such that the matrix $B = [M ~ z]$ has $\rk(B) \geq \min(r+1, m)$ with probability $\geq 1-1/(\max(m, n))^4$. 
\end{lemma}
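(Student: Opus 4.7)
The plan is to follow the structure of Lemma~\ref{lem:add_random} almost verbatim, with two modifications to adapt it to the WordRAM setting: (i) replace the Gaussian sampling of $z$'s nonzero entries with uniform sampling over a polynomial-size integer range, and (ii) track bit complexity through the construction. Concretely, I would keep the same choice of $S \subseteq [m]$ as before (either $S = [m]$ if $s = m$, or $s$ coordinates sampled with replacement otherwise, where $s = \min(m, m\log^2(m)/(m-r+1))$), set $z_i = 0$ for $i \notin S$, and for $i \in S$ draw $z_i$ uniformly at random from the integers in $[-N,N]$ with $N = \max(m,n)^{10}$. Each entry then has bit complexity $O(\log\max(m,n))$ and the sampling runs in time $\tO(\nnz(z))$.

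For correctness, the case $\rk(M) \geq m$ is immediate, so assume $\rk(M) = r < m$. Let $W = \colspace(M)^{\perp}$ with $\dim(W) = m-r \geq 1$. I would invoke Lemma~\ref{lem:get_nz} to obtain a fixed witness vector $v \in W$ whose support $G$ satisfies $\abs{G} \geq m-r$ (note that this is an existence statement and does not itself require any WordRAM-incompatible sampling). The exact calculation from Lemma~\ref{lem:add_random} then yields $\Pr[S \cap G = \emptyset] \leq (r/m)^s \leq 1/m^{\log m}$, so with overwhelming probability there is some $i_0 \in S \cap G$.

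The only step that breaks when moving from $\R$ to $\Z$ is the conclusion that $\iprod{z, v} \neq 0$: the Gaussian version used a measure-zero argument that is unavailable over the integers. I would replace this with a Schwartz--Zippel style argument. Condition on $S$ and on all the values $z_j$ with $j \in S \setminus \{i_0\}$; then the equation $\iprod{z, v} = 0$ becomes a linear equation in $z_{i_0}$ with nonzero coefficient $v_{i_0}$, and so has at most one real solution. Since $z_{i_0}$ is uniform over $2N+1$ integers, the conditional probability of hitting that solution is at most $1/(2N+1) \leq 1/\max(m,n)^{10}$. A union bound over the two bad events then gives total failure probability at most $1/m^{\log m} + 1/\max(m,n)^{10} \leq 1/\max(m,n)^{4}$, as required.

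I do not expect any real obstacle: the only nontrivial design choice is picking $N$ large enough (polynomial in $\max(m,n)$) so that the Schwartz--Zippel bound dominates the overall error, while still small enough to keep $\bc(z) = O(\log\max(m,n))$, and a polynomial choice threads this trade-off comfortably.
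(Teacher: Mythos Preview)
Your proposal is correct and matches the paper's proof essentially step for step: the paper also keeps the support-sampling from Lemma~\ref{lem:add_random} unchanged, replaces the Gaussian entries by uniform integers in $[1,N]$ with $N=\max(m,n)^5$, invokes Lemma~\ref{lem:get_nz} for the witness $v$, and finishes with the same Schwartz--Zippel-style argument (fix all but one coordinate in $S\cap G$) and union bound. The only differences are cosmetic---your range $[-N,N]$ versus $[1,N]$ and your exponent $10$ versus $5$---neither of which affects the argument or the stated bounds.
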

\begin{proof}
We will follow the proof of Lemma~\ref{lem:add_random} except instead of Gaussians we will use a uniform distribution over integers $[1, N]$.

Let $s = \min(m, m\log^2(m)/(m-r+1))$. If $s = m$ let $S = [m]$, otherwise randomly choose $S \subseteq [m]$, by sampling $s$ coordinates from $[m]$ with replacement. If $i \notin S$, then set $z_{i} = 0$, else sample it from $[1, N]$ for $N = \max(m, n)^5$. It is easy to see that $\bc(z) = O(\log \max(m, n))$.

Now consider the matrix $B = [M ~ z]$. Suppose that $\rk(M) \geq \min(r+1, m)$, then we already have that $\rk(B) \geq \rk(M) \geq \min(r+1, m)$. So we need to only prove that when $\rk(M) = r < m$, then this procedure gives $\rk(B) \geq r+1$. This is equivalent to proving that $z \notin \colspace(M)$.

Let $W$ be the orthogonal subspace to $\colspace(M)$. $W$ has dimension $m-r > 0$ and for all vectors $w \in W, u \in \colspace(M)$, $\iprod{w,u} = 0$. We will prove that with large probability there exists a vector $v \in W$, such that $\iprod{v, z} > 0$, which implies that $z \notin \colspace(M)$. 

Since $\rk(W) = m - r$, by Lemma~\ref{lem:get_nz} there exists vector $v \in W$ which is non-zero on at least $m-r$ coordinates. Let $G$ denote the set of coordinates where $v$ is non-zero with $\abs{G} \geq m-r$.

Since $S$ (the set of non-zero coordinates of $z$) is a large enough random set of coordinates, one can prove that $S,G$ have a non-zero intersection. Formally,

\begin{align*}
    \Pr[S \cap G = \phi] &= \left(\Pr_{i \sim [m]}[i \not\in G]\right)^{\abs{S}}\\
    &\leq \left(r/m\right)^{m\log^2(m)/(m-r+1)}\\
    &\leq \frac{1}{m^{\log m}}.
\end{align*}

Let us now consider the inner product between $z,v$. We have that, $\iprod{z,v} = \sum_{i \in S \cap G} z_iv_i$. Assuming $S \cap G \neq \phi$ and $S \cap G = (i_1, i_2, \ldots, i_j)$ then for any given values of $z_{i_2}, z_{i_3}, \ldots, z_{i_j}$ there exists at most one value of $z_{i_1}$ among $[1, N]$ such that $\iprod{z,v} = 0$. Hence $$\Pr[\iprod{z,v} = 0 \mid S \cap G \neq \phi] \leq \frac{1}{N}$$ which in turn implies that:
\begin{align*}
    \Pr_{z \in [1, N]^m}[\iprod{z,v} = 0] &\leq \Pr[S \cap G = \phi] + \Pr[\iprod{z,v} = 0 \mid S \cap G \neq \phi]\\
    &\leq \frac{1}{m^{\log m}}+\frac{1}{N}\\
    &\leq \frac{1}{\max(m, n)^4} \hspace{70pt} \text{[As $N = 1/(\max(m, n))^5$]}
\end{align*}

$\iprod{z,v} \neq 0$ implies that $z \notin \colspace(M)$, and the matrix $B = [M ~ z]$ has rank equal to $\min(r+1, m)$. Hence the matrix $B = [M ~ z]$ has rank equal to $\min(r+1, m)$ with probability $\geq 1-1/(\max(m, n))^4$.
\end{proof}

Next we reduce the Rank-Finding Problem to the Full rank problem.

\begin{lemma}\label{lem:rank_w} There exists a randomized Turing  reduction which works w.h.p. from the Rank-Finding problem on $A \in \Z^{m \times n}$ with $m = O(n)$ and $\bc(A) = O(\log n)$, to the Full rank problem, that runs in time $\tO(\nnz(A))$ and produces $O(\log m)$ instances of the Full rank problem, such that all instances of the matrices produced have dimension $O(\max(m,n)) \times O(\max(m,n))$, sparsity $\tO(\nnz(A))$ and bit complexity $O(\log n)$.
\end{lemma}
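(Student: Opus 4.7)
The plan is to mimic the RealRAM reduction of Lemma~\ref{lem:rank} step-for-step, substituting Lemma~\ref{lem:add_random_w} for Lemma~\ref{lem:add_random} at every place a random vector is sampled. The chain to execute is Rank-Finding $\leq_T$ Rank-Decision $\leq_T$ Full Column Rank $\leq_T$ Full Rank: the first reduction is a binary search over $k \in [m]$ using $O(\log m)$ queries, the second pads $M$ on the right by $m-k$ random sparse integer vectors $z_1,\ldots,z_{m-k}$ so that $\rk([M \mid z_1 \mid \cdots \mid z_{m-k}]) = m$ iff $\rk(M) \geq k$, and the third pads a (possibly rectangular) full-column-rank matrix with random rows until it is square. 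The sparsity accounting $\nnz(M) + \sum_i \min(m, m\log^2 m / (m-k-i+2)) = \tO(\nnz(M))$ is identical to the RealRAM case, and dimensions remain $O(\max(m,n))$ by construction.

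The new ingredient that must be verified is bit-complexity preservation. Lemma~\ref{lem:add_random_w} guarantees that each sampled vector has integer entries of magnitude at most $\max(m,n)^5$, so every $z_i$ satisfies $\bc(z_i) = O(\log \max(m,n)) = O(\log n)$ using $m = O(n)$. Since the only operation performed is concatenation of these vectors with the original $M$ (whose entries have bit complexity $O(\log n)$ by assumption), the final matrix handed to the Full Rank oracle has bit complexity $\max(\bc(M), \max_i \bc(z_i)) = O(\log n)$. The entire construction runs in time $\tO(\nnz(M))$ because each $z_i$ is sampled and written in time $\tO(\nnz(z_i))$ and we only touch nonzero coordinates.

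The one subtlety beyond the RealRAM proof is the success probability: Lemma~\ref{lem:add_random_w} yields only probability $1 - 1/\max(m,n)^4$ per appended vector, rather than the stronger w.h.p.\ bound used in Lemma~\ref{lem:add_random}. However, a union bound over the at most $m$ vectors appended across a single Rank-Decision query (and the $O(\log m)$ outer binary search queries) gives a per-instance success probability of at least $1 - O(m \log m / n^4) \geq 1 - 1/n^3$, which is well above the $2/3$ threshold required to invoke the standard amplification scheme from the preliminaries; running the reduction $\plog(n)$ times with fresh randomness and taking majority votes on each Full Rank query boosts the success probability to $1 - 1/n^{\log n}$ while affecting neither the dimension, sparsity, nor bit complexity of any individual query. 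The rest of the argument then goes through verbatim from the proof of Lemma~\ref{lem:rank}.
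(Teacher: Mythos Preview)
Your proposal is correct and follows exactly the approach the paper takes: the paper's entire proof is the one sentence ``Given Lemma~\ref{lem:add_random_w}, the proof of the above Lemma~\ref{lem:rank_w} is the same as the proof of Lemma~\ref{lem:rank}.'' You actually supply more detail than the paper does, in particular the explicit bit-complexity bookkeeping and the union-bound/amplification argument needed because Lemma~\ref{lem:add_random_w} only guarantees $1 - 1/\max(m,n)^4$ per step rather than the w.h.p.\ of Lemma~\ref{lem:add_random}.
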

 Given Lemma~\ref{lem:add_random_w}, the proof of the above Lemma~\ref{lem:rank_w} is the same as the proof of Lemma~\ref{lem:rank}.

Analogous to Lemma~\ref{lem:fc_to_lda}, we now reduce the Full-Rank Problem to the $(1/n^{O(1)})$-Linear Decision Approximation problem on WordRAM.

\begin{lemma}\label{lem:fc_to_lda_w}
Consider a matrix $M \in \Z^{n \times n}$. There exists a randomized Turing reduction from the problem of checking whether $M$ has full rank to the $(1/n^{12})$-Linear Decision Approximation problem. The reduction runs in time $\tO(\nnz(M))$, produces $\plog(n)$ instances of the form $(M',\mathbf{1}^n)$ where $M' \in \Z^{n \times n}, \bc(M') = O(\bc(M)+(\log n))$, in the YES case $M'$ is a full rank matrix, and works w.h.p..
\end{lemma}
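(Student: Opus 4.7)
The plan is to follow the proof of Lemma~\ref{lem:fc_to_lda} almost verbatim, but to replace the continuous Gaussian sampling of $b$ with uniform Rademacher sampling $b \in \{-1,+1\}^n$. This is the natural integer analogue of a symmetric mean-zero distribution, and it sidesteps the one step of the RealRAM proof that introduces fractional entries, namely the rescaling $M'_{ij} \gets M_{ij}/b_i$. Since $1/b_i = b_i$ when $b_i \in \{\pm 1\}$, setting $M' \coloneqq \operatorname{diag}(b)\,M$ gives an $n \times n$ integer matrix with $\bc(M') = \bc(M)$, and the system $M'x = \mathbf{1}^n$ is equivalent to $Mx = b$ (with $\norm{M'x-\mathbf{1}^n}_2 = \norm{Mx-b}_2$, since $\operatorname{diag}(b)$ is orthogonal). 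In the YES case, $\operatorname{diag}(b)$ is invertible so $M'$ inherits the full rank of $M$, and the system is satisfiable because $\colspace(M)=\R^n$.

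For the NO case we need anti-concentration of Rademacher sums. Let $V = \colspace(M)$, fix any unit $w \in V^\perp$ (which exists since $\dim V^\perp \geq 1$), and let $i^* \in \arg\max_i |w_i|$, so $|w_{i^*}| \geq 1/\sqrt{n}$. Conditioning on $(b_j)_{j \neq i^*}$, the value $\iprod{w,b} = C \pm w_{i^*}$ takes these two values with equal probability; by the triangle inequality at least one of them has absolute value $\geq |w_{i^*}|$. Therefore $\Pr[\,|\iprod{w,b}| \geq 1/\sqrt{n}\,] \geq \tfrac{1}{2}$. Using Lemma~\ref{lem:solns}, $\min_x \norm{Mx-b}_2 \geq |\iprod{w,b}|$, and since $1/\sqrt{n} > 1/n^{11.5} = \norm{\mathbf{1}^n}_2/n^{12}$, with probability at least $1/2$ the instance $(M',\mathbf{1}^n)$ is a clear NO instance of the $(1/n^{12})$-approximate linear decision problem.

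To boost this constant success probability to w.h.p., I would draw $L = \log^2 n$ independent Rademacher vectors $b^{(1)},\ldots,b^{(L)}$, produce the instances $(M^{(\ell)}, \mathbf{1}^n) = (\operatorname{diag}(b^{(\ell)})\,M,\,\mathbf{1}^n)$, query the oracle on each, and output YES iff every single call returns YES. In the YES case all instances are satisfiable, so all oracle calls return YES and the output is correctly YES. In the NO case, for each $\ell$ the oracle can return YES only when $(M^{(\ell)},\mathbf{1}^n)$ is satisfiable (i.e.\ $b^{(\ell)} \in V$) or falls in the promise gap, and both of these are absorbed in the event $\{\text{dist}(b^{(\ell)},V) \leq 1/n^{11.5}\}$, which has probability at most $1/2$ by the bound above. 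By independence of the $b^{(\ell)}$'s, $\Pr[\text{every oracle returns YES}] \leq (1/2)^{L} = 1/n^{\log n}$, so the algorithm correctly outputs NO w.h.p. Each instance is constructed in $O(\nnz(M))$ time, giving $L = \plog(n)$ queries in total time $\tO(\nnz(M))$, and $\bc(M^{(\ell)}) = \bc(M) = O(\bc(M)+\log n)$ as required.

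The main obstacle is precisely the weaker anti-concentration of Rademacher sums compared to Gaussians: the single-trial success bound of $1/2$ that we get here is much weaker than the $1 - O(1/\sqrt{n})$ in Lemma~\ref{lem:fc_to_lda}, so the ordinary "repeat and take majority" amplification used in the RealRAM proof would not drive the failure probability down to $1/n^{\log n}$. The resolution is to exploit the asymmetry that in the YES case every instance is deterministically satisfiable: this lets us amplify by a plain AND of the oracle outputs rather than a majority, and $L = \log^2 n = \plog(n)$ independent trials already drive the error below $1/n^{\log n}$.
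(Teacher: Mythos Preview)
Your argument is correct, and it is a genuinely different and more elementary route than the paper's. The paper samples $b'\sim\N(0,1/n)^n$, rounds each coordinate to the nearest multiple of $1/N$ with $N=n^5$, and then reruns the analysis of Lemma~\ref{lem:fc_to_lda} with an additional $\sqrt{n}/N$ rounding-error term; it still obtains a per-trial success probability of $1-O(1/\sqrt{n})$ and then amplifies. You instead sample $b\in\{\pm 1\}^n$, use the one-line Rademacher anti-concentration $\Pr[|\iprod{w,b}|\geq |w_{i^*}|]\geq 1/2$, and exploit the one-sidedness of the YES case to amplify by an AND rather than a majority. Two concrete advantages of your route: (i) the rescaling step $M'=\operatorname{diag}(b)\,M$ keeps $M'$ integer with $\bc(M')=\bc(M)$ on the nose, whereas the paper's rescaling $M'_{ij}=M_{ij}/b_i$ with $b_i=a_i/N$ is not obviously integer-valued and needs an extra argument to meet the $M'\in\Z^{n\times n}$ claim; (ii) you avoid all Gaussian-plus-rounding calculations. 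The price you pay is the weaker per-trial bound, but as you correctly observe, the deterministic satisfiability in the YES case makes AND-amplification sound (and indeed necessary here, since the $1/2$ lower bound on a single NO trial can be tight, so a plain majority would not amplify). All the side conditions of the lemma (full rank of $M'$ in the YES case, $\plog(n)$ instances, $\tO(\nnz(M))$ time, target vector $\mathbf{1}^n$) are verified by your construction.
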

\begin{proof}
Let $M$ be an $n \times n$ matrix, and suppose that we need to check if it has full rank or not. We will reduce this problem to a linear system, by sampling a random vector $b' \sim \N(0,1/n)^n$ rounding it to the nearest fraction of the form $a/N$ for $N = n^5$ to get a new vector $b$ i.e. $b_i = (\argmin_{a \in \Z}\abs{b'_i - (a/N)})/N$ and checking satisfiability of the linear system $(M, b)$. We will now show the following:
\begin{enumerate}
    \item When $M$ has full rank, then the linear system is satisfiable i.e. there exists an $x$ such that $Mx = b$.\label{prop:1_w}
    \item When $M$ does not have full rank, then for all $x \in \R^n$, $\norm{Mx - b}_2 \geq \norm{b}/n^2$ w.p. $1-O(1/\sqrt{n})$.\label{prop:2_w}
\end{enumerate}

Property~\ref{prop:1_w} above holds for the linear system since $M$ has full column span if it has full rank, which means that $Mx = b$ is satisfiable for all $b \in \R^n$. We will now prove the second point, by showing that w.p. $1 - O(1/\sqrt{n})$, $b$ has a large component in $\colspace(M)^{\perp}$

If $M$ does not have full rank, then $\rk(\colspace(M)) < n$. Let $V = \colspace(M)$ and $W = V^{\perp}$; since $V$ has dimension $< n$, we have that $\dim(W) \geq 1$. So there exists a unit vector $w \in \R^n$, with $\norm{w} = 1$, such that $\iprod{w, a} = 0, \forall a \in \colspace(M)$. From Lemma~\ref{lem:solns}, we have that any solution $x \in \R^n$ will have error, $\norm{Mx - b} \geq \norm{P_W(b)} \geq \abs{\iprod{w,b}}$ as $w \in W$ and $w$ is a unit vector. For large $N$, we expect that the distribution of $\iprod{w,b}$ is near to the Gaussian $\N(0,\norm{w}_2^2/n) = \N(0,1/n)$ by Lemma~\ref{lem:gaussian}. So we should expect $|\iprod{w,b}| \approx 1/\sqrt{n} \approx \norm{b}/\sqrt{n}$. Define $e = b-b'$ and note that for all $i$, $\abs{e_i} \leq 1/N$. Calculating exactly we get that $\iprod{w,b} = \iprod{w,b'}+\iprod{w,e'}$. This implies that $\iprod{w,b'} - \sqrt{n}/N \leq \iprod{w,b} \leq \iprod{w,b'} + \sqrt{n}/N$.
\begin{align*}
    \Pr\left[\abs{\iprod{w , b}} \leq \frac{\norm{b}}{n^2} \right] &\leq 
    \Pr\left[\abs{\iprod{w , b}} \leq \frac{\norm{b}}{n^2} \mathrel{\Big|} \norm{b} \leq n \right] + \Pr[\norm{b} \geq n] \\
    &\leq \Pr\left[\abs{\iprod{w , b}} \leq \frac{1}{n} \right] + \Pr[\norm{b}^2 \geq n^2] \\
    &\leq \Pr\left[\abs{\iprod{w , b'}+\iprod{w , e}} \leq \frac{1}{n} \right] + \Pr[\norm{b'}^2+\norm{e}^2 \geq n^2] \\
    &\leq \Pr\left[\abs{\iprod{w , b'}} \leq \frac{1}{n}+\frac{\sqrt{n}}{N} \right] + \Pr[\norm{b'}^2 \geq n^2-\frac{\sqrt{n}}{N}] \hspace{20pt} \text{[As $\abs{e_i} \leq 1/N$]} \\
    &\leq \Pr\left[\abs{\iprod{w , b}} \leq \frac{2}{n} \right] + \Pr[\norm{b}^2 \geq n^2/2] \\
    &\leq \frac{2\sqrt{n}}{n} + \Pr[\norm{b}^2 \geq n^2/2] \hspace{10pt}\text{ Using property~\ref{prop:gp1} of Lemma~\ref{lem:gaussian}}\\
    &\leq \frac{2}{\sqrt{n}}+\frac{2}{n^2} \hspace{10pt}\text{ Using property~\ref{prop:gp3} of Lemma~\ref{lem:gaussian}}\\
    &\leq O\left(\frac{1}{\sqrt{n}}\right)
\end{align*}
which proves property~\ref{prop:2_w}. 

The rest of the proof of this lemma proceeds in the same way as the proof of Lemma~\ref{lem:fc_to_lda}. 
\end{proof}

Finally we reduce the $(1/n^{O(1)})$-Linear Decision Approximation problem to the $(1-1/n^c)$-Linear Decision Approximation problem over WordRAM. This is straightforward to work out from the analogous Lemma~\ref{lem:amp} in Section~\ref{sec:l2}.

\begin{lemma}\label{lem:amp-w}
For all constants $c$ and $\eps(n) = 1/n^{O(1)}, \delta(n) = 1/n^c$, there exists a deterministic many-one reduction from the $\eps(n)$-Approximate linear search problem on the linear system $(A \in \R^{n \times n}, \mathbf{1}^n)$ to the $(1 - \delta(n))$-Approximate linear search problem on the linear system $(A' \in \R^{n \times n}, \mathbf{1}^n)$, with $\nnz(A') = O(\nnz(A))$. The reduction runs in time $\tO(\nnz(A))$. Additionally if $A$ is full rank then the matrix $A'$ produced is also full rank. 

As this is a deterministic many-one reduction we also get a gap-amplifying reduction for the $\eps(n)$-Approximate linear decision problem with the same parameters.
\end{lemma}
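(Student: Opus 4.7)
The plan is to reuse the construction and analysis of Lemma~\ref{lem:amp} essentially verbatim, and verify that each step is compatible with the WordRAM bit-complexity bookkeeping. Concretely, given $(A, \mathbf{1}^n)$, I would set $t = \lceil n/(\delta(n)\eps(n)) \rceil$, which is a positive integer of magnitude $n^{O(1)}$ since $\eps(n) = 1/n^{O(1)}$ and $\delta(n) = 1/n^c$, and output the instance $(A' = MA,\ \mathbf{1}^n)$ where $M$ is the $n\times n$ circulant matrix with $M_{ii}=t+1$, $M_{i,i+1}=-t$ (indices modulo $n$), and zeros elsewhere. Since $M$ has only $2n$ nonzero integer entries each of bit complexity $O(\log n)$, we can form $A'$ in $\tO(\nnz(A))$ time with $\nnz(A') \le 2\,\nnz(A) = O(\nnz(A))$ and $\bc(A') = O(\bc(A) + \log n)$, so the instance lives in the WordRAM model.

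For correctness, the algebraic computation in Lemma~\ref{lem:amp} that expands $\norm{A'x-\mathbf{1}^n}^2$, exploits the cyclic cancellation of the telescoping sum $\sum_i(z_i^2-z_{i+1}^2)=0$, and invokes the cycle-Laplacian bound of Lemma~\ref{lem:expander}, depends only on the structure of $M$ and not on whether entries are real or integer. Hence it yields the same inequality
\[
\norm{A'x-\mathbf{1}^n}^2 \ \ge\ \norm{z_x}^2 + t(t+1)\,\norm{y_x}^2/n^2,
\]
where $z_x = Ax-\mathbf{1}^n = y_x - \gamma_x\mathbf{1}^n$ with $y_x\perp\mathbf{1}^n$. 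Satisfiable instances map to satisfiable instances, and given any $x'$ with $\norm{A'x'-\mathbf{1}^n} \le (1-\delta(n))\norm{\mathbf{1}^n}$ the recovery $x = x'/(1-\gamma_{x'})$ is verified—exactly as in the RealRAM proof—to satisfy $\norm{Ax-\mathbf{1}^n} \le \eps(n)\norm{\mathbf{1}^n}$. Full-rank preservation follows because $M$ is a circulant matrix with determinant $(t+1)^n - t^n \ne 0$ for $t\ge 1$, so $\rk(A') = \rk(MA) = \rk(A)$.

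The one point that needs a second glance in the WordRAM setting is the recovery map $x = x'/(1-\gamma_{x'})$: although it is a search-to-search instruction and not a many-one mapping of instances, it must output a solution with polynomial bit complexity. This is immediate once we observe that $\gamma_{x'} = \iprod{Ax'-\mathbf{1}^n, \mathbf{1}^n}/n$ is a rational whose numerator and denominator have bit complexity $O(\bc(A)+\bc(x')+\log n)$, so the division can be carried out symbolically in $\tO(\nnz(A))$ time without changing the approximation guarantee. The ``many-one'' claim then refers to the instance transformation $A\mapsto MA$, and the gap-amplifying consequence for the decision version follows immediately because YES instances go to YES instances and NO instances (no $x$ with error $\le \eps(n)\norm{\mathbf{1}^n}$) go to NO instances (no $x$ with error $\le(1-\delta(n))\norm{\mathbf{1}^n}$) by the contrapositive of the recovery argument. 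There is no genuine obstacle here; the whole lemma is essentially a bookkeeping check that the RealRAM proof lifts.
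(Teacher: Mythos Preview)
Your proposal is correct and follows the same approach as the paper: the paper's proof of Lemma~\ref{lem:amp-w} is a one-paragraph sketch that simply points back to Lemma~\ref{lem:amp} and notes that multiplying by the circulant matrix $M$ increases the bit complexity by $O(\log(n/(\eps(n)\delta(n))))$. Your write-up supplies more detail than the paper does (the explicit integer rounding of $t$, the determinant computation $\det M = (t+1)^n - t^n$, and the bit-complexity of the recovery step), all of which is correct and in the same spirit.
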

\begin{proof}[Proof Sketch]

 The only operation that is performed in the proof of Lemma~\ref{lem:amp} is that the matrix $A$ is multiplied with the ``$M$'' matrix. As the entries of $M$ are of the order $n/(\eps(n)\delta(n))$ our final bit complexity increases by an additive factor of $O(\log(n/(\eps(n)\delta(n))))$. 
 
\end{proof}

Now we are ready to prove Lemma~\ref{lem:red-main_w}.

\begin{proof}[Proof of Lemma~\ref{lem:red-main_w}]
    Similar to the proof of Theorem~\ref{thm:red-main}, The proof follows by composing Lemma~\ref{lem:rank_w}, Lemma~\ref{lem:fc_to_lda_w} and Lemma~\ref{lem:amp-w}  (for $\eps(n) = 1/n^{O(1)}$ and $\delta(n) = 1/n^c$). The bit complexity of the final matrix is $O(\log n)+O(\log(n/(\eps(n)\delta(n)))) = O((c+1)(\log n))$ and the running time is $\tO(c \cdot \nnz(A))$
\end{proof}

To prove Corollary~\ref{cor:dense_w} we need the following decision to search reduction:

\begin{lemma}\label{lem:dec_to_search-w}
Let $\eps(n) = 1/n^{O(1)}$, given an $A \in \Z^{m \times n}, x, b$ where $m = O(n), \bc(A), \bc(b) = \plog(n), \bc(x) = \tO(n) $ we can distinguish between:
\begin{enumerate}
    \item $\norm{Ax-b} \leq \eps(n) \norm{b}/2$.
    \item $\norm{Ax-b} \geq \eps(n) \norm{b}$.
\end{enumerate}
w.h.p. in time $\tO(\nnz(A)n)$.
\end{lemma}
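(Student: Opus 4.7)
The plan is to evaluate $\norm{Ax-b}^2$ exactly and compare it with $\eps(n)^2\norm{b}^2/\tau$ for any threshold $\tau \in (1/4,1)$; since the two cases of the promise are separated by a factor of $4$ in the squared norm, any exact comparison distinguishes them. The apparent obstacle is that $x$ carries $\tO(n)$ bits per entry, so one must schedule the arithmetic carefully to stay within the $\tO(\nnz(A)\cdot n)$ budget.

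First, write $x = p/q$ with $p \in \Z^n$, $q \in \Z$, and $\bc(p),\bc(q) = \tO(n)$, the standard common-denominator representation of a rational solution (of the type produced by Cramer's rule or the exact solvers referenced earlier). Under this representation the two cases become integer comparisons $\norm{Ap - qb}^2 \leq q^2 \eps(n)^2 \norm{b}^2/4$ versus $\norm{Ap - qb}^2 \geq q^2 \eps(n)^2 \norm{b}^2$, in which both sides are integers of bit complexity $\tO(n)$.

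The dominant step is computing $Ap \in \Z^m$. Each non-zero entry $A_{ij}$ contributes one product $A_{ij} p_j$ of bit complexity $\tO(n)$, formed in $\tO(n)$ time (a $\plog(n)$-bit times $\tO(n)$-bit multiplication), and the $k_i$ resulting products in row $i$ are accumulated by sequential addition of $\tO(n)$-bit integers in $\tO(k_i \cdot n)$ time; aggregated across rows this is $\tO(\nnz(A)\cdot n)$. Subsequently, $Ap - qb$ is formed entrywise in $\tO(mn)$ time, each coordinate is squared in $\tO(n)$ time using Sch\"onhage--Strassen multiplication, and the $m$ squares are summed in $\tO(mn)$ time to produce the integer $\norm{Ap-qb}^2$. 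The right-hand side $q^2 \eps(n)^2 \norm{b}^2$ is built in $\tO(n)$ time from $q$ and the $\plog(n)$-bit quantities $\eps(n)$ and $\norm{b}^2$, and the final comparison of two $\tO(n)$-bit integers takes $\tO(n)$ time.

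The main thing to verify carefully is the bit-complexity bookkeeping: fast integer multiplication (rather than schoolbook) is essential for the squarings, since schoolbook squaring of $\tO(n)$-bit integers would cost $\tO(n^2)$ per entry and blow up the $m$-way sum to $\tO(mn^2)$, violating the budget on sparse instances with $\nnz(A) = \tO(n)$; one must also check that intermediate quantities $Ap$, $qb$, $Ap-qb$, and each squared coordinate retain bit complexity $\tO(n)$ despite being formed by sums and products. With those checks in place, every subroutine stays within $\tO(\nnz(A)\cdot n)$. The procedure is entirely deterministic, so the ``w.h.p.''\ qualifier in the statement is vacuous and is inherited only from the randomized conventions used elsewhere in the paper.
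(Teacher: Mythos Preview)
Your approach is correct and genuinely different from the paper's. The paper explicitly claims that ``directly evaluating $\norm{Ax-b}$ exactly \ldots\ would take time $\tO(\nnz(A)n^2)$'' and therefore resorts to a randomized Johnson--Lindenstrauss-style estimate: it samples $t=\plog(n)$ rounded Gaussian vectors $w_i$, computes each inner product $\iprod{Ax-b,w_i}=(w_iA)x-w_ib$ by first forming the $\plog(n)$-bit row vector $w_iA$ and only then multiplying against the heavy vector $x$, and uses concentration of $\sum_i\iprod{Ax-b,w_i}^2$ to separate the two cases. Your observation is that the paper's $\tO(\nnz(A)n^2)$ estimate is pessimistic: with fast integer multiplication the squarings cost $\tO(n)$ rather than $\tO(n^2)$ each, so the exact computation already fits in $\tO(\nnz(A)n)$. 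This buys you a deterministic algorithm and a considerably shorter proof. The paper's sketching route, on the other hand, actually runs in $\tO(\nnz(A)+n^2)$, which is tighter than the stated bound for dense $A$, though the lemma does not claim that.

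Two small points to tighten. First, the hypothesis $\bc(x)=\tO(n)$ in the lemma refers to per-entry bit complexity, so writing $x=p/q$ with a \emph{single} $\tO(n)$-bit denominator is an additional structural assumption; it is the natural output format of Cramer-type solvers and the paper's own $(wA)x$ computation tacitly relies on the same absence of denominator blow-up, but you should flag it. Second, $q^2\eps(n)^2\norm{b}^2$ is not literally an integer since $\eps(n)=1/n^{O(1)}$; clear the denominator (compare $4n^{2c}\norm{Ap-qb}^2$ against $q^2\norm{b}^2$) before calling the comparison an integer comparison.
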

\begin{proof}
    We start by noting that unlike the analogous decision-to-search reduction (Lemma~\ref{lem:dec_to_search-r}) in RealRAM we cannot proceed by directly evaluating $\norm{Ax-b}$ exactly as that would take time $\tO(\nnz(A)n^2)$ in WordRAM as we are only given that $\bc(x) = O(n)$. But we do not need to evaluate $\norm{Ax-b}$ exactly even approximately evaluating will be sufficient as long as the error is $< \eps(n) \norm{b}/2$.
    
    For any $w$ with entries of the form $a/N$ where $a \in \Z$ and $N = n^6$ we can evaluate $\iprod{Ax-b, w} = (wA)x-wb$ in time $O(n\nnz(A)$ by first computing $u \coloneqq wA$ and then evaluating $ux-vb$. We are able to get around $\tO(n^3)$ runtime as $x$ is being only multiplied to a $1 \times n$ vector rather than a $n \times n$ matrix.
    
    To approximately evaluate $\norm{Ax-b}$ we will make use of the standard fact (from Lemma~\ref{lem:gaussian}) that for $v \sim \N(0, 1/m)^m$ we have $\iprod{Ax-b, v} \sim \N(0, \norm{Ax-b}^2)$. We will sample $v_1, v_2, \ldots, v_t$ i.i.d. from $\N(0, 1/m)^m$, then by Lemma~1 of~\cite{laurent2000adaptive} we have that
    $$\Pr\left[\abss{\sum\limits_{i=1}^{t} \iprod{Ax-b, v_i}^2 - t\cdot\norm{Ax-b}^2} > \norm{Ax-b}^2(2\sqrt{tz}+2z)\right] \leq 2 \cdot exp(-z).$$
    Hence at $t = \log^6(n)$ and $z = \log^3(n)$ we have that w.h.p. ($\geq 1-1/n^{\log n}$) 
    $$\abss{\sum\limits_{i=1}^{t} \iprod{Ax-b, v_i}^2 - t\cdot\norm{Ax-b}^2} \leq \norm{Ax-b}^2t/(\log n).$$

    Let $w_i$ be the vector obtained by rounding each entry of $v_i$ to the nearest fraction of the form $a/N$ where $N = n^6$. Then $\norm{v_i-w_i} \leq 1/n^5$ and hence
    
    $$\abss{\sum\limits_{i=1}^{t} \iprod{Ax-b, w_i}^2 - t\cdot\norm{Ax-b}^2} \leq \norm{Ax-b}^2t/(\log n)+\norm{Ax-b}^2t/n \leq 2\norm{Ax-b}^2t/(\log n).$$
    
    Hence if $\norm{Ax-b} \leq \eps(n) \norm{b}/2$ we have that w.h.p.
    $$\sum\limits_{i=1}^{t} \iprod{Ax-b, w_i}^2  \leq t\cdot\eps^2(n) \norm{b}^2/4+2(\eps^2(n) \norm{b}^2/4)(t/(\log n)) \leq t\cdot\eps^2(n) \norm{b}^2/3.$$
    
    On the other hand, if $\norm{Ax-b} \geq \eps(n) \norm{b}$ we have that w.h.p.
    $$\sum\limits_{i=1}^{t} \iprod{Ax-b, w_i}^2  \geq t\cdot\eps^2(n) \norm{b}^2-2(\eps^2(n) \norm{b}^2)(t/(\log n)) \geq t\cdot\eps^2(n) \norm{b}^2/2.$$
    
    Hence we can distinguish between the two cases w.h.p in time $\tO(n\nnz(A))$ (as $t = \plog(n)$).

\end{proof}

Combining Lemma~\ref{lem:red-main_w} and Lemma~\ref{lem:dec_to_search-w} give us Corollary~\ref{cor:dense_w}:

\begin{proof}[Proof of Corollary~\ref{cor:dense_w}]
For all constants $c$, Composing Conjecture~\ref{conj:rk-dense-w} and Lemma~\ref{lem:red-main_w} gives us $\tOm(n^\omega)$ hardness of $(1 - 1/n^c)$-Approximate linear decision problem $(A \in \Z^{n \times n}, b)$ with bit complexity $O((c+1)\log(n))$ where in the YES case we have the additional property that $A$ has full rank. The hardness of $(1 - 1/n^c)$-Approximate linear search problem with the same properties follows from the decision to search reduction from Lemma~\ref{lem:dec_to_search-w}.
\end{proof}

We now prove Corollary~\ref{cor:search-amp_w}:

\begin{proof}[Proof of Corollary~\ref{cor:search-amp_w}]
    Note that the input size is $\tO(n^2)$ and hence $a \geq 2$. The corollary directly follows from noting that Lemma~\ref{lem:amp-w} applied for $\eps(n) = 1/n$ and $\delta(n) = 1/n$ reduces $(1/n)$-Approximate Linear Search problem to $(1-1/n)$-Approximate Linear Search problem in time $\tO(n^2) = \tO(n^a)$.
\end{proof}

\subsection{Sparse Matrices}
Starting from the WordRAM version of Conjecture~\ref{conj:rk-sparse} and using Lemma~\ref{lem:red-main_w} we can establish that there does not exist a $\tilde{o}(n^2)$ algorithm for $(1 - 1/\poly(n))$-Approximate linear decision problem on sparse matrices in the WordRAM model. By the decision to search reduction in Lemma~\ref{lem:dec_to_search-w} we get that there does not exist a $\tilde{o}(n^2)$ algorithm for $(1 - 1/\poly(n))$-Approximate linear decision search on sparse matrices in the WordRAM model. Note though that this hardness is trivial to obtain since there exist sparse linear systems such that every $(1 - 1/\poly(n))$-approximate solution to the system requires $\Omega(n^2)$ bits to represent.

On the algorithmic side no improvement over the dense case algorithmic runtime of $O(n^{\omega})$ was known until the recent result of Peng and Vempala~\cite{PengV21} who gave an asymptotically faster algorithm for the $1/\poly(n)$-approximate linear search problem.

\section{Different notions of approximation}\label{sec:approx-discuss}

Our notion of ``Linear Search $\epsilon$-Approximation problem'' is easier than how the notion of approximately solving is usually defined~\cite{SpielmanT14,KyngZ17}:

\begin{definition}[Linear System Approximation Problem]
Given a satisfiable linear system $(A\in \R^{m \times n} , b \in \R^{m})$ find an $x \in \R^n$ such that: 
$$\norm{Ax-\Pi_A(b)}_2 \leq \epsilon\norm{\Pi_A(b)}_2,$$
where $\Pi_A(b)$ denotes the projection of the vector $b$ onto the column space of $A$.
\end{definition}

The above notion considers general linear systems rather then just satisfiable ones. For satisfiable linear systems their notion is equivalent to ours as in that case $\Pi_a(b) = b$. As we give conditional hardness for ``Linear Search $\epsilon$-Approximation'' this directly implies hardness for ``Linear System Approximation Problem'' (referred to as LSA in Kyng and Zhang~\cite{KyngZ17}). On the other hand, our $\tOm(n^\omega)$ hardness is tight even for finding a $x$ such that $\norm{Ax-\Pi_A(b)}_2 = 0$ as we can do that in time $O(n^\omega)$~\cite{IbarraMH82} by computing the pseudoinverse.

\section{Equivalent Characterization of Our Conjectures}\label{sec:eq}

We will now prove that the Linear decision problem, the Rank-finding problem and the full rank problem are all equivalent. Thus our conjectures about hardness of Rank-Finding can be interpreted as hardness of any of these problems.

\begin{lemma}\label{lem:eq}
There exists randomized turing reductions (to upto $\plog(n)$ instances) in RealRAM which work w.h.p. between the following problems:
\begin{enumerate}
    \item Linear Decision Problem: Given $(A, b)$ with $A \in \mathbb{R}^{O(n) \times O(n)}$ with $\nnz(A) = z+\tO(n)$ does there exist a $x$ such that $Ax = b$.
    \item Rank-Finding Problem: Given $B \in \mathbb{R}^{O(n) \times O(n)}$ with $\nnz(B) = z+\tO(n)$ find $\rk(B)$.
    \item Full-Rank Problem: Given $C \in \mathbb{R}^{n' \times n'}$ with $n' = O(n), \nnz(C) = z + \tO(n)$, is $\rk(C) = n'$? or equivalently is the determinant of $C$ non-zero.
\end{enumerate}
\end{lemma}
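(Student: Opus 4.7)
The plan is to close a cycle of Turing reductions. Lemma~\ref{lem:rank} already supplies the chain Linear Decision $\leq_T$ Rank-Finding $\leq_T$ Full-Rank via its sub-chain Rank-Finding $\leq_T$ Rank-Decision $\leq_T$ Full Column Span $\leq_T$ Full Rank, and each step there runs in $\tO(\nnz)$ time, produces at most $\plog(n)$ sub-instances, and preserves the sparsity bound $z + \tO(n)$. It therefore suffices to furnish the two remaining directions: Full-Rank $\leq_T$ Rank-Finding (which is trivial: one query for $\rk(C)$ compared against $n'$), and Full-Rank $\leq_T$ Linear Decision, which is the only genuinely new step.

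For Full-Rank $\leq_T$ Linear Decision I would, given $C \in \mathbb{R}^{n' \times n'}$, sample a Gaussian $b \sim \mathcal{N}(0,1)^{n'}$ (legal in RealRAM) and query the Linear Decision oracle on $(C, b)$. If $\rk(C) = n'$ then $\colspace(C) = \mathbb{R}^{n'}$, so $(C, b)$ is always satisfiable and the oracle returns YES. If $\rk(C) < n'$ then $\colspace(C)$ is a proper linear subspace of $\mathbb{R}^{n'}$, hence a Lebesgue-null set; since the Gaussian distribution is absolutely continuous with respect to Lebesgue measure, $\Pr[b \in \colspace(C)] = 0$, and so with probability $1$ the oracle returns NO. The reduction makes a single oracle call, runs in $O(n')$ extra time, and passes the matrix $C$ through unmodified, so the sparsity bound $z + \tO(n)$ is preserved (the dense vector $b$ is not charged against the matrix's $\nnz$).

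Composing these with Lemma~\ref{lem:rank} then yields the desired equivalences $1 \leq_T 2 \leq_T 3 \leq_T 1$, as well as the direct $3 \leq_T 2$, so all three problems are mutually Turing-equivalent up to $\plog(n)$ oracle calls and additive $\tO(n)$ sparsity blowup. The only delicate point is the measure-theoretic argument in the NO case of the Gaussian reduction; since it succeeds with probability exactly $1$, no repetition or majority vote is needed to boost the success probability, and the whp guarantee is immediate. If one wanted to avoid measure-theoretic reasoning (for instance to port the same reduction to WordRAM), one could instead sample $b$ coordinate-wise uniformly from a large integer range and appeal to a Schwartz--Zippel style argument applied to a nonzero linear functional witnessing $b \notin \colspace(C)$, but over RealRAM the Gaussian version is the cleanest.
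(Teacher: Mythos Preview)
Your proposal is correct and essentially the same as the paper's proof. The paper closes the cycle by citing Lemma~\ref{lem:rank} for Rank-Finding $\leq_T$ Full-Rank, citing Lemma~\ref{lem:fc_to_lda} for Full-Rank $\leq_T$ Linear Decision (which is exactly your Gaussian-$b$ argument, stated there with the stronger $1/n^{O(1)}$ gap that is not needed here), and writing out the Linear Decision $\leq_T$ Rank-Finding step directly via comparing $\rk(A)$ with $\rk([A~b])$ --- the same argument you pull from step~1 of the proof of Lemma~\ref{lem:rank}. Your inline measure-zero version of the Gaussian step is slightly cleaner for this lemma since no gap is required, but otherwise the two proofs coincide.
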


\begin{proof}
We have already reduced the Rank-Finding problem to the Full Rank problem in Lemma~\ref{lem:rank} and Full-Rank Problem to Linear Decision Problem in Lemma~\ref{lem:fc_to_lda} while preserving sparsity in both the reductions.

We now reduce Linear Decision Problem to the Rank-Finding Problem. 
Suppose we want to check satisfiability of the linear system $(A, b)$. Consider the matrix $A' = [A ~~ b~]$, i.e. the matrix formed by appending the column vector $b$ to the matrix $A$. The system $(A, b)$ has a solution if and only if $rk(A) = rk(A')$, so it suffices to  find the rank of the matrices $A,A'$. Hence we have reduced the linear system decision problem to the Rank-Finding Problem. Also note that $\nnz(A') = z+m = z+O(n)$.
\end{proof}

In fact as Theorem~\ref{thm:red-main} reduces Rank-Finding problem to $(1 - \delta(n))$-Approximate linear decision problem for all $\delta(n) > 0$ which is a subcase of linear decision problem. Hence we get that $(1 - \delta(n))$-Approximate linear decision problem for all $\delta(n) > 0$ is also equivalent to the all three problems in Lemma~\ref{lem:eq}.

A similar equivalence exists in WordRAM:

\begin{lemma}\label{lem:eq_w}
There exists randomized turing reductions (to upto $\plog(n)$ instances) in RealRAM which work w.h.p. between the following problems:
\begin{enumerate}
    \item Linear Decision Problem: Given a linear system $(A, b)$ with $A \in \mathbb{Z}^{O(n) \times O(n)}, \nnz(A) = z+\tO(n), \bc(A), \bc(b) = O(\log n)$ does there exist a $x$ such that $Ax = b$.
    \item Rank-Finding Problem: Given $B \in \mathbb{Z}^{O(n) \times O(n)}$ with $\nnz(B) = z+\tO(n), \bc(b) = O(\log n)$ find $\rk(B)$.
    \item Full-Rank Problem: Given $C \in \mathbb{R}^{n' \times n'}$ with $n' = O(n), \nnz(C) = z+\tO(n), \bc(C) = O(\log n)$, is $\rk(C) = n'$? or equivalently is the determinant of $C$ non-zero.
\end{enumerate}
\end{lemma}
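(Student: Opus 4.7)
My plan is to mirror the three-way reduction structure used in Lemma~\ref{lem:eq} for RealRAM, invoking the WordRAM analogs of each component reduction already developed in Section~\ref{sec:l2_w} and carefully tracking bit complexity at each step. The target is to show Rank-Finding $\leq_T$ Full-Rank $\leq_T$ Linear Decision $\leq_T$ Rank-Finding, all over WordRAM with $\bc = O(\log n)$ and sparsity $z + \tO(n)$.

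First, I would establish Rank-Finding $\leq_T$ Full-Rank by a direct appeal to Lemma~\ref{lem:rank_w}, which already produces Full-Rank instances with $\bc = O(\log n)$ and sparsity $z + \tO(n)$, and works with high probability. This direction requires no new argument.

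For Full-Rank $\leq_T$ Linear Decision, I would first apply Lemma~\ref{lem:fc_to_lda_w} to reduce Full-Rank to $(1/n^{12})$-Approximate Linear Decision with preserved bit complexity $O(\bc(M) + \log n) = O(\log n)$ and sparsity $z + \tO(n)$. Then I observe that the approximate problem trivially reduces to the exact Linear Decision problem: given an oracle for exact satisfiability, any YES instance of the approximate problem is satisfiable by definition and is correctly labeled, while any NO instance satisfies $\norm{Ax - b} > (1 - 1/n^{12})\norm{b} > 0$ for every $x$, so in particular admits no exact solution and is also correctly labeled. Composing the two steps yields the desired reduction.

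Finally, Linear Decision $\leq_T$ Rank-Finding is achieved directly, exactly as in Lemma~\ref{lem:eq}: given $(A, b)$, form $A' = [A \mid b]$ and accept iff $\rk(A) = \rk(A')$, using two calls to the Rank-Finding oracle. Appending $b$ adds $O(n)$ non-zero entries, so $\nnz(A') = z + \tO(n)$, and since $\bc(A), \bc(b) = O(\log n)$ by hypothesis the bit complexity is unchanged.

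The main technical concern, as is usual when transporting such reductions from RealRAM to WordRAM, is certifying that bit complexity stays at $O(\log n)$ throughout, and this is already handled by the component lemmas: Lemma~\ref{lem:add_random_w} samples sparse integer vectors with entries drawn uniformly from $[1, \max(m,n)^5]$; Lemma~\ref{lem:fc_to_lda_w} explicitly yields $\bc(M') = O(\bc(M) + \log n)$; and the final concatenation introduces no new entries. No step in the chain clears denominators or performs multiplications that would blow up bit size, so composing the three reductions and taking a union bound over the $\plog(n)$ oracle calls gives the claimed equivalence with high probability.
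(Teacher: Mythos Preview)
Your proposal is correct and follows essentially the same three-step cycle as the paper's proof: Rank-Finding $\leq_T$ Full-Rank via Lemma~\ref{lem:rank_w}, Full-Rank $\leq_T$ Linear Decision via Lemma~\ref{lem:fc_to_lda_w}, and Linear Decision $\leq_T$ Rank-Finding via the $A' = [A \mid b]$ concatenation. The only cosmetic difference is that you make explicit the (trivial) observation that the approximate decision problem produced by Lemma~\ref{lem:fc_to_lda_w} is solved by an exact Linear Decision oracle, which the paper leaves implicit; one small slip is that in the NO case you should write $\norm{Ax-b} > (1/n^{12})\norm{b}$ rather than $(1-1/n^{12})\norm{b}$, though either bound suffices to conclude unsatisfiability.
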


\begin{proof}
We have already reduced the Rank-Finding problem to the Full Rank problem in Lemma~\ref{lem:rank_w} and Full-Rank Problem to Linear Decision Problem in Lemma~\ref{lem:fc_to_lda_w} while preserving sparsity in both the reductions.

We now reduce Linear Decision Problem to the Rank-Finding Problem. 
Suppose we want to check satisfiability of the linear system $(A, b)$. Consider the matrix $A' = [A ~~ b~]$, i.e. the matrix formed by appending the column vector $b$ to the matrix $A$. The system $(A, b)$ has a solution if and only if $rk(A) = rk(A')$, so it suffices to  find the rank of the matrices $A,A'$. Hence we have reduced the linear system decision problem to the Rank-Finding Problem. Also note that $\nnz(A') = z+m = z+O(n)$ and $\bc(A') = \max(\bc(A), \bc(b) = O(\log n)$. 
\end{proof}

\newpage

\bibliography{references}

\end{document}